\newcommand{\uline}{\vrule height.06ex depth.02ex width.6em}
\newcommand{\mvee}{\vee\kern-.69em\uline}
\begin{document}

\paper{Quantum Logic and Quantum Computation}
{Mladen Pavi\v ci\'c and Norman D.~Megill}

\section{Introduction}
\label{sec:intro}

In the literature on quantum computation, quantum logic
means an algebra of the qubits and quantum gates of a quantum
computer.~\cite{berman-ssqc-book-05-ql,bouw-ek-zeil-book-00,%
fio-prl04,franson-02,nielsen-chuang-ql482,pavicic-book-05,%
shapiro-prl,zurek-96} This quantum logic of qubits
(also called {\em quantum computational
logic}~\cite{chiara-qcomp04,gudder-03}) is a formalism of
finite tensor products of two-dimensional Hilbert spaces and
will be the subject matter of Volume 3 of this Handbook.

Quantum logic of qubits will not be considered here, because
this volume of the Handbook deals with quantum logics defined
as algebras related to a complete description of quantum
systems, from orthomodular posets to Hilbert lattices.
A complete description of a quantum system, say a molecule,
includes not only spins---as with qubits---but also
positions, momenta, and potentials of nucleons and
electrons, and this, in the standard approach, requires
infinite-dimensional Hilbert spaces. In the second half
of the 20th century, numerous attempts to reduce the
latter Hilbert space formalism to various types of
algebras have been put forward.~\cite{holl95} The main
idea behind these attempts was to relate Hilbert
space observables directly to experimental setups and
results.~\cite{jauch,ludwig-book-1,ludwig-book-2,piron-book}

The latter idea has not come true, but mathematically
the project has been a success. In particular, the Hilbert
lattice has been proved isomorphic to the set of
subspaces of an infinite-dimensional Hilbert space.
So, in an attempt to treat general quantum systems
with the help of a quantum computer, we might venture to
introduce such an algebraic description of the systems
directly into it. However, as with classical problems, we
have to translate a description of quantum systems into a
language a quantum computer would understand. To make this
point, before we dwell on quantum systems, we shall briefly
review how we can make such a translation for a classical
problem to be computed on a quantum computer.

One of the most successful quantum computing algorithms
so far is Shor's algorithm for the classical problem
of factoring numbers.~\cite{shor}
Factoring numbers with classical algorithms on classical
computers is conjectured to be a problem of exponential
complexity with respect to the number of bits.
To verify (by the brute force approach) whether $x,y>1$
exist such that $xy=N$, we have to check all possible $x$'s
starting with $x=2$ and ending (in the most
unfavourable case) with $x=\sqrt{N}$.
The number of checks obviously does not rise exponentially
with $N$. When we say that the time needed
to carry out the checking rises exponentially, we mean with
respect to the number of bits $n$ required to handle the 
divisions within a digital computer, where 
$N\approx 2^n$.~\cite{pavicic-book-05}

The way of handling operations in a computer
is what differentiates a classical from a quantum computer
and what enables the latter one to speed up exponentially 
given computations. To see
this let us look at the {\it all-optical} ``calculator''
shown in Fig.~\ref{fig:mz} proposed by Johann
Summhammer.~\cite{summhammer}

\begin{figure}[hbt]
\begin{center}
\includegraphics[width=0.6\textwidth]{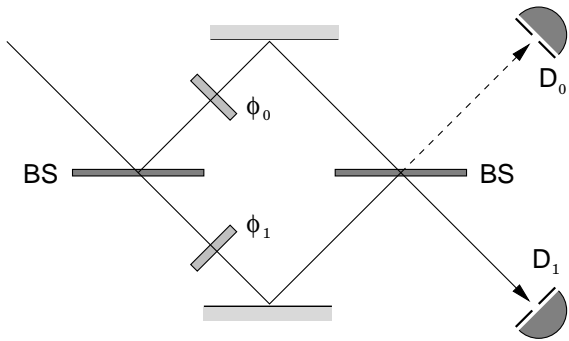}
\end{center}
\caption{{\em Physical calculation} by means of a Mach-Zehnder
interferometer. When the phase difference $\phi_0-\phi_1$ is
equal to an integer multiplier of $\pi$,
the photon ideally will never end up in detector
D$_0$. {\rm BS} are 50:50 beam splitters.}
\label{fig:mz}
\end{figure}

Photons enter the Mach-Zehnder interferometer one by one
and interfere de\-pending on the phase difference $\phi_0-\phi_1$
imposed by the phase shifters shown in the figure.
The probability of detector D$_1$ registering a photon is
\begin{eqnarray}
p=\cos^2{{\phi_1-\phi_0}\over 2}.
\label{eq:prob-mz-cos}
\end{eqnarray}
Hence, when the phase difference $\phi_0-\phi_1$ is
equal to an integer multiplier of $\pi$, we have $p=1$,
i.e., a constructive interference at the lower exit and
a destructive one at the upper one. In other words, ideally,
detector D$_1$ will always click and detector D$_0$ will
never click.

To factor a number $N$, let us increase the phase shift in
discrete steps $2\pi/k$ so as to have $\phi_0-\phi_1=2\pi N/k$.
Until we arrive at an integer $k$ which factors $N$, we will always
have some clicks in D$_0$. Our qubit---photon state---will
be in a superposition that is not completely destructive
with respect to the upper exit. By repeating each step many
times, say $N$ times, we will altogether need---in order to
reach and confirm a completely destructive interference with
respect to the upper exit---to carry out $N\sqrt{N}$ checks,
to make sure that the response of D$_0$ is
negligible when we reach a $k$ for which $N/k$ is an integer and
$p$ given by Eq.~(\ref{eq:prob-mz-cos}) equals 1.

We see that the way of introducing and dividing numbers into
this {\em physical computer} is essentially different from the
one we use with digital computers. In a classical computer,
the bigger the numbers are, the more bits and therefore more
transistors we have to employ for handling their factorization.
The number of transistors and gates required to handle each
division grows polynomially with the number of bits $n$.

In a Mach-Zehnder interferometer, we carry out each division by
picking a particular phase difference. So, it is always just one
phase difference, irrespective of the size of the number.\footnote{Of
course, up to a realistic limit of the interferometer---it cannot
discern phases that correspond to numbers bigger than $10^{10}$,
but it can be integrated together with other linear optics
elements into an {\em all-optical} quantum computer.}
If we take the photon within the interferometer to be a
{\em quantum bit}, in quantum computation parlance called
{\em qubit}, and the Mach-Zehnder interferometer
to be a {\em quantum logic gate}, or simply a
{\em quantum gate}, we obtain a single gate acting as
a single computing unit, providing us with the desired result
within a single step. We denote the state
of a photon exiting from the lower sides of the Mach-Zehnder
beam splitters by $|1\rangle$ and one exiting from their
upper sides by $|0\rangle$. The qubit (photon) can be
in any of infinitely many superpositions
$\alpha|0\rangle+\beta|1\rangle$ within the interferometer,
one of which provides us with a definite result with the
probability equal to 1.

Now, what enables an exponential speedup of the quantum factoring
tasks is an al\-go\-rithm---Shor's algorithm---that makes use
of a superposition of qubit states so as to reduce the problem of
searching for factors to searching for a period of the wave function
representing the superposition. When applied, it reduces
the required number of checks to one polynomial in
$\log(N)$.~\cite{pittenger-book} This algorithm as well as
all other known quantum algorithms are based on Fourier transforms,
and the main additional feature of quantum gates is that
they can perform Fourier transforms. The next feature we
require is scalability, so that adding new gates preserves
the achieved speedup while increasing the computational
power. Linear optics elements---the Mach-Zehnder interferometer
being one of them---can be integrated into a scalable all-optical
quantum computer.~\cite{browne05,ralph05}
Similar scaling up can be achieved with ion, quantum-dot,
QED, and Kane quantum computers by using recent
hardware and software blueprints, at least
in principle.~\cite{pavicic-book-05}
At present, quantum computation relies on the way we introduce and
encode the input data into states of qubits within a quantum
computer as well as on the algorithms we apply to the states.
There are still only a few such algorithms, but for
various classical problems we shall most probably arrive at
new applications gradually, as was the case with classical
computation (it took half a century to reach a digital
implementation of 3D animation and voice recognition, for
example).

There is, however, an application that seems to
be radically different from all the others, and this is
the quantum computation and simu\-lation of quantum systems.
Classically, we compute the properties of an atom or molecule
by solving a Schr\"o\-din\-ger equa\-tion with the help of
sophisticated algorithms for approximating and solving the
equation, all of which are of exponential complexity with
respect to the number of observables. For a quantum
computer, several algorithms for solving a Schr\"odinger
equation that provide an exponential speed increase
with respect to classical computers have been
proposed.\ \cite{abrams-lloyd-99,schr-simul,gramss,zalka-98}
They start with rather simple wave functions,
discretize them, and then introduce them into the
Schr\"odinger equation, which then reduces to an
eigenvalue problem that can be solved in a Fourier
transform approach analogous to the one we used to
factor numbers. In the case of more general Schr\"odinger
equations, though, we no longer have an obvious and
straightforward algorithm---no algorithms are known that
implement Fourier transforms for simulating and determining
the evolution of general arbitrary quantum systems.

However, if we found a quantum algebra for describing
quantum systems, such as atoms and molecules, which a quantum
computer could ``read'' directly, then it would instantly
simulate the systems, tremendously speeding up its ``calculation,''
i.e., obtaining information on its behaviour. No special
algorithm would be needed. We could
think of simulating existing and still non-existing molecules
under chosen conditions. How can we achieve such a simulation?

When we talk about quantum systems and its theoretical
Hilbert space, we know that there is
a Hilbert lattice that is isomorphic to the set of
subspaces of a particular infinite-dimensional Hilbert
space and that we can establish a correspondence
between elements of the lattice and solutions of
a Schr\"odinger equation that corresponds to such a
Hilbert space. But there is an essential problem here.
Any Hilbert lattice is a structure based on
first-order predicate calculus,
and we simply cannot have a constructive pro\-ced\-ure
to introduce state\-ments like {\em there is} or {\em for all}
into a computer. Unlike with the Mach-Zehnder computer above,
we do not have a recipe for introducing states of an
arbitrary quantum system into a quantum computer.

What we might do, instead, is find classes of polynomial
lattice equations that can serve in place of quantified
statements. And in this chapter we
are going to review how far we have advanced down this
road, following \cite{mpoa99} and \cite{pavicic-book-05}.
If we can eventually establish a correspondence
between such equations and solutions of the
Schr\"odinger equation, i.e., general wave functions,
then we should be able to reduce any quantum problem
to a polynomially complex eigenvalue problem.
Whether the project can be carried out successfully
awaits future developments, but this is the case with
all projects in quantum computing.

In the next section, we deal with quantum logic defined
as a Hilbert lattice. Quantum logic so defined is only
one of the possible models of quantum logics con\-sider\-ed
in our chapter in Volume~1.~\cite{pm-ql-l-hql1} In Section
\ref{sec:greechie} we present the Greechie diagrams,
in Section  \ref{sec:goe} generalized orthoarguesian equations
and in Sections  \ref{sec:ge}, \ref{sec:mge}, and  \ref{sec:mayet}
Godowski, Mayet-Godowski, and Mayet's E-equations, respectively.
We end the chapter with a conclusion and open problems.

\newpage

\section{Hilbert Lattice}
\label{sec:hl}

A Hilbert lattice is a special kind of an orthomodular
lattice, OML, which we introduced and defined in Definition 2.6
in our chapter in Volume 1.~\cite{pm-ql-l-hql1}
The axioms added to an OML to make it represent Hilbert
space are (as one example of several slightly
different axiomatizations) the following
ones.~\cite{beltr-cass-book,kalmb86}

\begin{definition}\label{def:hl}\footnote{For additional
definitions of the terms used in this section see
Refs.~\cite{beltr-cass-book,holl95,kalmb86}}
An orthomodular lattice which satisfies the following
con\-di\-tions is a {\em Hilbert lattice}, $\mathcal{HL}$.
\begin{enumerate}
\item {\em Completeness:\/}
The meet and join of any subset of
an $\mathcal{HL}$ exist.
\item {\em Atomicity:\/}
Every non-zero element in an $\mathcal HL$ is greater
than or equal to an atom. (An atom $a$ is a non-zero lattice element
with $0< b\le a$ only if $b=a$.)
\item {\em Superposition principle:\/}
(The atom $c$
is a superposition of the atoms $a$ and $b$ if
$c\ne a$, $c\ne b$, and $c\le a\cup b$.)
\begin{description}
\item[{\rm (a)}] Given two different atoms $a$ and $b$, there is at least
one other atom $c$, $c\ne a$ and $c\ne b$, that is a superposition
of $a$ and $b$.
\item[{\rm (b)}] If the atom $c$ is a superposition of  distinct atoms
$a$ and $b$, then atom $a$ is a superposition of atoms $b$ and $c$.
\end{description}
\item {\em Minimal length:\/} The lattice contains at least
three elements $a,b,c$ satisfying: $0<a<b<c<1$.
\end{enumerate}
\end{definition}

These conditions imply an infinite number of atoms in $\mathcal HL$ as
shown by Ivert and Sj{\"o}din.\ \cite{ivertsj}

One can prove the following theorem
\cite{mackey,maclaren,varad}.

\begin{theorem}\label{th:repr}For every Hilbert lattice
$\mathcal{HL}$ there exists a field $\mathcal K$ and a Hilbert space
$\mathcal H$ over $\mathcal K$ such that the set of closed
subspaces of the Hilbert space, ${\mathcal C}({\mathcal H})$ is
ortho-isomorphic
\index{ortho-isomorphism}%
to $\mathcal HL$.

Conversely, let $\mathcal H$ be an infinite-dimensional Hilbert space
over a field $\mathcal K$ and let
\begin{eqnarray}
{\mathcal C}({\mathcal H})\ {\buildrel\rm def\over =}\ \{ {\mathcal X}\
\subseteq {\mathcal H}\ | \>{\mathcal X}^{\perp\perp}={\mathcal X}\}
\end{eqnarray}
be the set of all closed subspaces of $\mathcal H$.
Then ${\mathcal C}({\mathcal H})$ is a Hilbert lattice
relative to:
\begin{eqnarray}
a\cap b\ =\ {\mathcal X}_a\cap {\mathcal X}_b
\qquad\quad {\rm and}\qquad\quad a\cup b\
 =\ ({\mathcal X}_a+{\mathcal X}_b)^{\perp\perp}.\qquad
\end{eqnarray}
\end{theorem}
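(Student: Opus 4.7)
The plan is to dispatch the two directions separately; the converse is mostly bookkeeping, while the forward direction is the substance, essentially Piron's representation theorem.

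For the converse, I would first verify that $\mathcal{C}(\mathcal{H})$, ordered by inclusion, is a complete lattice under the stated operations: the intersection of any family of closed subspaces is closed, so the meet is just set-theoretic intersection, and the join is by definition the smallest closed subspace containing the algebraic sum. The map $a\mapsto a^{\perp}$ is then an order-reversing involution satisfying $a\cap a^{\perp}=\{0\}$ and $a\cup a^{\perp}=\mathcal{H}$, and orthomodularity follows from the Hilbert-space projection theorem: whenever $a\le b$, any vector in $b$ decomposes as the sum of its projection onto $a$ and a remainder in $b\cap a^{\perp}$. I would then check each clause of Definition~\ref{def:hl}: atoms are exactly the one-dimensional (ray) subspaces; completeness and atomicity are immediate; for the superposition principle, any two distinct rays span a two-dimensional subspace, which carries a continuum of other rays (using that the ground field is at least $\mathbb{R}$); and the minimal length condition is trivial once $\dim\mathcal{H}\ge 3$, which is forced by infinite-dimensionality.

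For the forward direction, I would follow the classical Birkhoff--von Neumann--Piron strategy in four stages. \emph{Stage 1: Geometry.} Build a projective geometry whose points are the atoms of $\mathcal{HL}$ and whose lines are the sets $\{c : c\le a\cup b\}\setminus\{0\}$ for distinct atoms $a,b$. The two clauses of the superposition principle are exactly the exchange and richness axioms needed to check the projective incidence axioms, while the minimal-length axiom together with the Ivert--Sj\"odin corollary guarantees geometric dimension at least $3$. \emph{Stage 2: Coordinatization.} Apply the fundamental theorem of projective geometry to represent this geometry as the lattice of subspaces of a left vector space $V$ over a division ring $\mathcal{K}$, so that atoms go to one-dimensional subspaces and $\cup$ goes to subspace sum. \emph{Stage 3: Form.} Use the orthocomplement $\perp$ of $\mathcal{HL}$ to manufacture an anisotropic Hermitian sesquilinear form $\langle\,\cdot\,,\,\cdot\,\rangle$ on $V$ whose induced orthogonality agrees with $\perp$ on rays; the orthomodular law supplies the duality needed to extract an involutive anti-automorphism of $\mathcal{K}$ and to pair rays consistently. \emph{Stage 4: Closure and field.} Completeness of $\mathcal{HL}$ forces every lattice element to correspond to a $\perp\perp$-closed subspace of the resulting orthospace, and Sol\`er's theorem, applied via the infinite orthonormal-like families furnished by infinitude of atoms, pins $\mathcal{K}$ down to one of $\mathbb{R},\mathbb{C},\mathbb{H}$ with positive-definite form, making $V$ a Hilbert space and producing the required ortho-isomorphism with $\mathcal{C}(\mathcal{H})$.

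The hard part will be Stage~3: promoting the lattice-theoretic orthocomplement into an honest sesquilinear pairing on $V$. One must verify that the map induced by $\perp$ on rays is the polar relation of some bilinear pairing, which requires delicate arguments about collinearity of orthogonal lines, the orthomodular law, and the identification of an anti-automorphism of $\mathcal{K}$. Closing the loop from a generalized orthomodular space to a Hilbert space over $\mathbb{R}, \mathbb{C},$ or $\mathbb{H}$ is itself a deep result (Sol\`er 1995) that I would cite rather than reprove; without it the conclusion would only yield a possibly exotic orthomodular space instead of a genuine Hilbert space.
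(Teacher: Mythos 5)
The paper does not actually prove Theorem~\ref{th:repr}; it cites Mackey, MacLaren and Varadarajan, and the route behind those citations is exactly the one you outline: a routine verification for the converse, and for the forward direction the projective geometry of atoms (superposition~(b) playing the role of the exchange/covering axiom, superposition~(a) giving irreducibility, the minimal-length axiom giving enough dimension for Desarguesian coordinatization), coordinatization over a division ring, recovery of an involutive anti-automorphism and an anisotropic Hermitian form from the orthocomplementation, and then the identification of lattice elements with $\perp\perp$-closed subspaces. In that architecture your proposal matches the standard proof. One small correction there: the splitting ${\mathcal X}+{\mathcal X}^{\perp}={\mathcal H}$ for closed subspaces is forced by the orthomodular law of $\mathcal{HL}$ (an Amemiya--Araki-type argument), not by completeness alone, which only gives completeness of the lattice of closed subspaces.

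There is, however, a genuine error in your Stage~4. You invoke Sol\`er's theorem ``via the infinite orthonormal-like families furnished by infinitude of atoms,'' but infinitely many atoms---even infinitely many mutually orthogonal atoms---do not yield an orthonormal sequence: one needs infinitely many pairwise orthogonal vectors $e_i$ with $(e_i,e_i)=1_{\mathcal K}$, and such normalization need not be possible. Keller's non-classical orthomodular spaces are infinite-dimensional, satisfy all the conditions of Definition~\ref{def:hl} on their lattice of closed subspaces, possess infinite orthogonal families, and yet are defined over a non-archimedean field, not over the reals, complexes or quaternions. This is precisely why the paper states the field identification as the separate Theorem~\ref{th:sol}, with ``infinite orthonormality'' as an additional hypothesis, and why Section~\ref{sec:mayet} warns that without it one may land on a Keller field. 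So Theorem~\ref{th:repr} as stated asserts only the existence of \emph{some} field $\mathcal K$ (with ``Hilbert space'' understood in the generalized, orthomodular-space sense); your Stage~4 claims more than the theorem does, and the justification offered for that extra claim would fail. Deleting the Sol\`er clause, Stages~1--3 plus the closure argument are exactly what is needed; the passage to $\mathbb{R}$, $\mathbb{C}$, or the quaternions belongs to Theorem~\ref{th:sol} and cannot be obtained from the axioms of Definition~\ref{def:hl} alone.
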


In order to determine the field over which the Hilbert space
in Theorem \ref{th:repr} is defined, we make use of the
following theorem proved by Maria Pia Sol{\`e}r
\cite{soler,holl95}.

\begin{theorem}\label{th:sol}
The Hilbert space $\mathcal H$ from Theorem \ref{th:repr} is
an infinite-dimen\-sional Hilbert space defined over
a real, complex, or quaternion (skew) field
if the following condition is met:
\begin{itemize}
\item {\em Infinite orthonormality:} $\mathcal HL$ contains a
countably infinite sequence of orthonormal elements.
\end{itemize}
\end{theorem}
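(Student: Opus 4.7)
The plan is to use the orthonormal sequence $\{e_n\}$ to rigidify the underlying $*$-field $\mathcal{K}$ that Theorem~\ref{th:repr} associates with $\mathcal{HL}$. Write $\langle\cdot,\cdot\rangle$ for the sesquilinear form on $\mathcal H$ and $\alpha\mapsto\alpha^*$ for the involution on $\mathcal K$. The starting observation is that for every $\alpha\in\mathcal K$ the vector $e_1+\alpha e_2$ lies in $\mathcal H$, and orthomodularity controls how its span sits relative to the other subspaces generated by the $e_i$. The aim is to translate lattice facts into algebraic identities in $\mathcal K$.

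First I would extract a large supply of lattice-theoretic isometries. For each $\alpha\in\mathcal K$ with $\alpha\alpha^*=1$, orthomodularity inside the two-dimensional subspace $\mathrm{span}(e_1,e_2)$ forces the existence of a unit vector perpendicular to $e_1+\alpha e_2$; iterating this construction over pairs $(e_i,e_j)$ and composing yields a rich group of lattice automorphisms that, by the ortho-iso\-mor\-phism guaranteed by Theorem~\ref{th:repr}, lift to semilinear isometries of $\mathcal H$. Comparing these isometries on pairs $e_1+\alpha e_2$ and $e_1+\beta e_2$ with $\alpha\alpha^*=\beta\beta^*$ produces Sol\`er-style cancellation laws, from which one shows that $S=\{\alpha\alpha^*:\alpha\in\mathcal K\}$ is closed under addition and is the positive cone of a commutative subfield $F\subseteq\mathcal K^{\mathrm{sym}}:=\{\alpha\in\mathcal K:\alpha^*=\alpha\}$.

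The main obstacle, and the genuine content of the theorem, is showing that $F$ is a \emph{real closed} field: every positive element of $F$ must admit a square root in $F$ and every odd-degree polynomial over $F$ must have a root in $F$. This is where the infinite orthonormal sequence is essential. For positive $a\in F$ one constructs an auxiliary vector inside $\mathrm{span}(e_1,e_2)$ whose orthomodular normalization constant exists only if $\sqrt{a}\in F$; odd-degree polynomials are handled by spreading the problem across infinitely many of the $e_n$ and exploiting infinite orthogonal decompositions. Without the infinite sequence the conclusion fails---Keller's counterexamples over non-Archimedean $*$-fields make this precise---so this step is exactly where the hypothesis of the theorem bites, and it is the technically hardest part of the argument.

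Once $F$ is real closed, a classical Frobenius--Baer--Albert style classification of $*$-fields whose symmetric part is real closed identifies $\mathcal K$ as $F$, as $F(\mathrm{i})$ equipped with the involution sending $\mathrm{i}$ to $-\mathrm{i}$, or as the quaternion algebra over $F$. The isometry group produced in the first step then forces $F$ to be Archimedean, hence order-embeddable in $\mathbb R$; combined with a Cauchy-completeness argument using the orthonormal sequence to approximate arbitrary ``lengths,'' this collapses $F$ to $\mathbb R$ itself, yielding $\mathcal K\in\{\mathbb R,\mathbb C,\mathbb H\}$ as required.
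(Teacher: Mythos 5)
There is no proof in the paper to compare against: Theorem~\ref{th:sol} is quoted as Sol\`er's theorem, with the proof delegated entirely to the original paper \cite{soler} and to Holland's exposition \cite{holl95}. So the only question is whether your outline would stand on its own as a proof of Sol\`er's theorem, and it does not. The decisive content --- getting from the lattice hypothesis (infinite orthonormal sequence plus orthomodularity) to hard algebraic facts about $\mathcal K$ --- is asserted rather than established. Concretely: in $\mathrm{span}(e_1,e_2)$ a vector orthogonal to $e_1+\alpha e_2$ always exists by plain linear algebra; the crux is whether it can be \emph{normalized}, i.e.\ whether $1+\alpha\alpha^*$ is of the form $\beta\beta^*$, and that is exactly the kind of statement Sol\`er has to fight for (her key lemmas show the form is universal on suitable subspaces). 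Your ``Sol\`er-style cancellation laws,'' the claim that $S=\{\alpha\alpha^*\}$ is an additively closed positive cone, the real-closedness of $F$, the Archimedean property, and the final completeness collapse to $\mathbb R$ are each stated as goals, with the hardest one explicitly flagged as ``the technically hardest part'' and left undone. An outline that names the landmarks but defers all of them is not a proof of a theorem of this depth.

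Beyond the missing steps, the pivot you choose is not the one the actual argument turns on. Sol\`er's proof (as presented by Holland) does not proceed by showing the symmetric subfield is real closed; real-closedness is neither established along the way nor would it suffice (the real algebraic numbers are real closed but are not $\mathbb R$). The essential chain is: use the infinite orthonormal sequence together with orthomodularity of specific infinite-dimensional closed subspaces to prove the induced ordering is \emph{Archimedean}, then use a completeness argument to identify the base field with $\mathbb R$, and only then invoke the Frobenius-type classification to land on $\mathbb R$, $\mathbb C$, or $\mathbb H$ --- your last paragraph gestures at this, but the Archimedean and completeness steps are precisely where the infinite orthonormality hypothesis does its work, and they are the steps your proposal leaves blank. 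If your intent was to reproduce what the paper does, the honest version of this ``proof'' is a citation: the theorem is due to Sol\`er \cite{soler}, and no proof is, or reasonably could be, given in this chapter.
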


Thus we do arrive at a full Hilbert space, but the axioms
for the Hilbert lattices that we used for this purpose are
rather involved. This is because in the past, the axioms were
simply read off from the Hilbert space structure and were formulated
as quantified statements of the first order that cannot
be implemented into a quantum computer.

\section{Greechie Diagrams}\label{sec:greechie}

The Hilbert lattice equations that we will be describing in subsequent
sections will require some method for proving that they are independent
from the equations for OMLs.  This will show that these equations indeed
extend the equational theory for Hilbert lattices beyond that provided
by just the OML equations.  We will usually show the independence by
exhibiting finite OMLs in which the new Hilbert lattice equations fail.
Typically, these counterexample OMLs are very large lattices with dozens
of nodes, and it is inconvenient to represent them with standard lattice
(Hasse) diagrams.  Instead, we will use a much more compact method for
representing OMLs called {\em Greechie diagrams}.  Because of their
importance as a tool, we will describe them in some detail in this
section.

The following definitions and theorem we take over from Kalmbach
\cite{kalmb83} and Svozil and Tkadlec \cite{svozil-tkadlec}.
Definitions in the framework of \it quantum logics\/ \rm
($\sigma$-orthomodular posets) the reader can find in the book of
Pt\'ak and Pulmannov{\'a}. \cite{ptak-pulm}

\begin{definition}\label{D:diagram}
A\/ {\em diagram} is a pair $(V,E)$, where $V\ne\emptyset$ is a set of\/
{\em atoms} (drawn as points) and
$E\subseteq {\rm exp}\,V\>\backslash\,\{\emptyset\}$ is a set of\/
{\em blocks} (drawn as line segments connecting corresponding points).
A\/ {\em loop} of order $n\ge 2$ ($n$ being a natural number) in a
diagram $(V,E)$ is a sequence $(e_1,\dots e_b)\in E^n$ of mutually
different blocks such that there are mutually distinct
atoms $\nu_1,\dots,\nu_n$ with $\nu_i\in e_i\cap e_{i+1}\
(i=1,\dots,n,\ e_{n+1}=e_1)$.
\end{definition}

\begin{definition}\label{D:greechie-diagram}
A\/ {\em Greechie diagram} is a diagram satisfying the following
con\-di\-tions:
\begin{enumerate}
\item[(1)] Every atom belongs to at least one block.
\item[(2)] If there are at least two atoms then every block is at least
2-element.
\item[(3)] Every block which intersects with another block is at least
3-element.
\item[(4)] Every pair of different blocks intersects in at most one atom.
\item[(5)] There is no loop of order 3.
\end{enumerate}
\end{definition}

\begin{theorem}\label{th:loop-lemma}
For every Greechie diagram with only finite blocks there is exactly
one (up to an isomorphism) orthomodular poset such that there are
one-to-one correspondences between atoms and atoms and between
blocks and blocks that preserve incidence relations.
The poset is a lattice if and only if the Greechie diagram has
no loops of order~4.
\end{theorem}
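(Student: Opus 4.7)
The plan is to construct the orthomodular poset (OMP) as a Greechie-style pasting of finite Boolean algebras---one per block---and then to characterise when the pasted structure is actually a lattice. For each block $e\in E$ with $|e|$ atoms, form the Boolean algebra $B_e\cong 2^{|e|}$ whose atoms are the atoms of $e$, and take $L$ to be the pasting of the $B_e$'s: the disjoint union modulo identifying all zeros, all units, all pairs of atoms $a\in e\cap e'$ sitting in two different blocks, and by extension all orthocomplements of already-identified elements. Conditions (1)--(4) of Definition \ref{D:greechie-diagram} make the pasting well defined: (4) prevents runaway identifications between a pair of blocks, (3) keeps the orthocomplement of a shared atom consistent, and (1)--(2) rule out degeneracies. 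One then verifies the OMP axioms blockwise, with condition (5)---no $3$-loops---ruling out the otherwise-possible configuration of three blocks pairwise sharing distinct atoms, which would force three atoms into a hidden Boolean subalgebra not present in the diagram. Uniqueness up to isomorphism is then forced: in any OMP with the prescribed atoms and blocks each block must span a maximal Boolean subalgebra determined by its atoms, and the incidence data induces a unique isomorphism.

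For the lattice equivalence I would treat the two directions separately. In the \emph{only if} direction, suppose the diagram has a $4$-loop with blocks $e_1,e_2,e_3,e_4$ and distinct intersection atoms $\nu_i\in e_i\cap e_{i+1}$ (indices mod~$4$). Then $\nu_4^\perp$ lies above both $\nu_1$ (via $e_1$) and $\nu_3$ (via $e_4$), and $\nu_2^\perp$ lies above both $\nu_1$ (via $e_2$) and $\nu_3$ (via $e_3$); however, $\nu_4^\perp$ and $\nu_2^\perp$ are incomparable, because any element of $L$ below $\nu_4^\perp$ lies in $B_{e_1}$ or $B_{e_4}$ and any element below $\nu_2^\perp$ lies in $B_{e_2}$ or $B_{e_3}$, leaving $\{0,\nu_1,\nu_3\}$ as the only common lower bounds---none of which is above both $\nu_1$ and $\nu_3$. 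Hence $\nu_1\cup\nu_3$ admits two distinct minimal upper bounds and no least one, so $L$ fails to be a lattice.

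For the \emph{if} direction, I would proceed by induction on the number of blocks and show that, in the absence of $4$-loops, every pair $x,y\in L$ either lies in a common block (so meet and join are computed Booleanly there) or has only the trivial common bounds $0$ and $1$. The crucial observation is that any two distinct minimal common upper bounds of a pair $(x,y)$ would, together with the blocks carrying $x$ and $y$, close up into a $4$-loop of the diagram, contradicting the hypothesis. Formalising this via the tree-like decomposition of the diagram that becomes available once short loops are excluded---along the lines of the arguments of Greechie and Dichtl---is the technically delicate step and is the main obstacle of the whole proof; the remainder reduces to routine verifications inside the finite Boolean factors $B_e$.
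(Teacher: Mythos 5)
You should first note that the paper itself gives no proof of Theorem~\ref{th:loop-lemma}: it is explicitly ``taken over'' from Kalmbach and from Svozil and Tkadlec, so your attempt can only be measured against the standard loop-lemma argument. Your pasting construction and your ``only if'' direction are essentially that standard argument, and the 4-loop computation is basically sound, with one overreach: the claim that every element below $\nu_4'$ lies in $B_{e_1}$ or $B_{e_4}$ fails if the atom $\nu_4$ happens to lie on blocks other than $e_1,e_4$; the extra candidate bounds this creates have to be eliminated using conditions (4) and (5) of Definition~\ref{D:greechie-diagram}, which you do not do. More importantly, the role you assign to condition (5) in the orthomodular-poset part is not the real one: the pasting is perfectly well defined without it, and ``verifying the OMP axioms blockwise'' glosses exactly the point where 3-loops bite. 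In a triangle with blocks $\{a,x,b\}$, $\{b,y,c\}$, $\{c,z,a\}$ the orthomodular law $b\cup(b'\cap a')=a'$ calls for the meet $b'\cap a'$, and this meet does not exist because $x$ and $c$ are incomparable maximal common lower bounds; the required meets and joins involve elements from different blocks, so they cannot be checked inside a single Boolean factor.

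The genuine gap is the ``if'' direction. The dichotomy you base it on---every pair either lies in a common block or has only $0$ and $1$ as common bounds---is false. Take two blocks $e,f$ with $e\cap f=\{\nu\}$ and atoms $x\in e\setminus\{\nu\}$, $y\in f\setminus\{\nu\}$: they lie in no common block (such a block would close a 3-loop, excluded by condition (5)), yet $\nu'$ is a common upper bound different from $1$, and indeed $x\cup y=\nu'$ when the diagram consists of these two blocks. The correct statement is that the nontrivial common upper bounds of two atoms $x,y$ not sharing a block are exactly the coatoms $a'$ for atoms $a$ that share one block with $x$ and another with $y$, and the absence of 3- and 4-loops is what forces at most one such $a$; after that one still has to reduce joins of arbitrary elements to joins of atoms inside the finite blocks and obtain meets by De Morgan. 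You do state the right guiding idea (two distinct minimal upper bounds would close up into a 4-loop), but you explicitly defer its formalization as ``the technically delicate step,'' and that step is the entire content of this half of the theorem. As written, the ``if'' direction is a plan resting on a false intermediate claim rather than a proof.
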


In general, Greechie diagrams correspond to Boolean algebras ``pasted''
together.  First we will show examples of individual blocks in
order to illustrate how they correspond to Boolean algebras.  Then we
will introduce the concepts needed to understand how Boolean
algebras can be interconnected to represent more general OMLs.

The Hasse and Greechie diagrams for the Boolean algebras
corresponding to 2-, 3-, and
4-atom blocks are shown in Fig.~\ref{fig:greechie-1}.  The Greechie
diagram for a given lattice may be drawn in several equivalent ways:
Fig.~\ref{fig:greechie-2}\ shows the same Greechie diagram drawn in two
different ways, along with the corresponding Hasse diagram.  {}From the
definitions we see that the ordering of the atoms on a block does not
matter, and we may also draw blocks using arcs as well as straight lines
as long as the blocks remain clearly distinguishable.

\begin{figure}[htbp]\centering
  % Set unitlength to default in case it's not
  \setlength{\unitlength}{1pt}
  \begin{picture}(330,190)(0,0)
    \put(0,0){
      % 2^2 Greechie diagram
      \begin{picture}(40,10)(0,0)
        \put(0,0){\line(1,0){40}}
        \put(0,0){\circle*{5}}
        \put(40,0){\circle*{5}}
        \put(0,10){\makebox(0,0)[b]{$x$}}
        \put(40,10){\makebox(0,0)[b]{$x'$}}
      \end{picture}
    }

    \put(0,40){
      % 2^2 Hasse diagram
      \begin{picture}(40,60)(0,0)
        \put(20,0){\line(-2,3){20}}
        \put(20,0){\line(2,3){20}}
        \put(20,60){\line(-2,-3){20}}
        \put(20,60){\line(2,-3){20}}

        \put(20,-5){\makebox(0,0)[t]{$0$}}
        \put(-5,30){\makebox(0,0)[r]{$x$}}
        \put(45,30){\makebox(0,0)[l]{$x'$}}
        \put(20,65){\makebox(0,0)[b]{$1$}}

        \put(20,0){\circle*{3}}
        \put(0,30){\circle*{3}}
        \put(40,30){\circle*{3}}
        \put(20,60){\circle*{3}}
      \end{picture}
    } % end of 2^2 subpicture
    \put(90,0){
      % 2^3 Greechie diagram
      \begin{picture}(60,10)(0,0)
        \put(0,0){\line(1,0){60}}
        \put(0,0){\circle*{5}}
        \put(30,0){\circle*{5}}
        \put(60,0){\circle*{5}}
        \put(0,10){\makebox(0,0)[b]{$x$}}
        \put(30,10){\makebox(0,0)[b]{$y$}}
        \put(60,10){\makebox(0,0)[b]{$z$}}
      \end{picture}
    }
    \put(90,40){
      % 2^3 Hasse diagram
      \begin{picture}(60,90)(0,0)

        \put(30,0){\line(-1,1){30}}
        \put(30,0){\line(0,1){30}}
        \put(30,0){\line(1,1){30}}
        \put(30,90){\line(-1,-1){30}}
        \put(30,90){\line(0,-1){30}}
        \put(30,90){\line(1,-1){30}}
        \put(0,30){\line(1,1){30}}
        \put(30,30){\line(1,1){30}}
        \put(0,30){\line(2,1){60}}
        \put(30,30){\line(-1,1){30}}
        \put(60,30){\line(-1,1){30}}
        \put(60,30){\line(-2,1){60}}

        \put(30,-5){\makebox(0,0)[t]{$0$}}
        \put(-5,30){\makebox(0,0)[r]{$x$}}
        \put(38,30){\makebox(0,0)[l]{$y$}}
        \put(65,30){\makebox(0,0)[l]{$z$}}
        \put(-5,60){\makebox(0,0)[r]{$x'$}}
        \put(38,60){\makebox(0,0)[l]{$y'$}}
        \put(65,60){\makebox(0,0)[l]{$z'$}}
        \put(30,95){\makebox(0,0)[b]{$1$}}

        \put(0,30){\circle*{3}}
        \put(30,30){\circle*{3}}
        \put(60,30){\circle*{3}}
        \put(0,60){\circle*{3}}
        \put(30,60){\circle*{3}}
        \put(60,60){\circle*{3}}
        \put(30,0){\circle*{3}}
        \put(30,90){\circle*{3}}
      \end{picture}
    } % end of 2^3 subpicture

    \put(200,0){
      % 2^4 Greechie diagram
      \begin{picture}(100,10)(0,0)
        \put(20,0){\line(1,0){80}}
        \put(20,0){\circle*{5}}
        \put(46.67,0){\circle*{5}}
        \put(73.33,0){\circle*{5}}
        \put(100,0){\circle*{5}}
        \put(20,10){\makebox(0,0)[b]{$w$}}
        \put(46.67,10){\makebox(0,0)[b]{$x$}}
        \put(73.33,10){\makebox(0,0)[b]{$y$}}
        \put(100,10){\makebox(0,0)[b]{$z$}}
      \end{picture}
    }

    \put(200,40){
      % 2^4 Hasse diagram
      \begin{picture}(100,140)(0,0)
        \put(60,0){\line(-2,3){20}}
        \put(60,0){\line(2,3){20}}
        \put(60,0){\line(-4,3){40}}
        \put(60,0){\line(4,3){40}}
        \put(60,140){\line(-2,-3){20}}
        \put(60,140){\line(2,-3){20}}
        \put(60,140){\line(-4,-3){40}}
        \put(60,140){\line(4,-3){40}}

        \put(20,30){\line(-1,2){20}}
        \put(20,30){\line(0,1){40}}
        \put(20,30){\line(1,2){20}}
        \put(40,30){\line(-1,1){40}}
        \put(40,30){\line(1,1){40}}
        \put(40,30){\line(3,2){60}}
        \put(80,30){\line(-3,2){60}}
        \put(80,30){\line(0,1){40}}
        \put(80,30){\line(1,1){40}}
        \put(100,30){\line(-3,2){60}}
        \put(100,30){\line(0,1){40}}
        \put(100,30){\line(1,2){20}}

        \put(100,110){\line(1,-2){20}}
        \put(100,110){\line(0,-1){40}}
        \put(100,110){\line(-1,-2){20}}
        \put(80,110){\line(1,-1){40}}
        \put(80,110){\line(-1,-1){40}}
        \put(80,110){\line(-3,-2){60}}
        \put(40,110){\line(3,-2){60}}
        \put(40,110){\line(0,-1){40}}
        \put(40,110){\line(-1,-1){40}}
        \put(20,110){\line(3,-2){60}}
        \put(20,110){\line(0,-1){40}}
        \put(20,110){\line(-1,-2){20}}

        \put(60,140){\circle*{3}}
        \put(20,110){\circle*{3}}
        \put(40,110){\circle*{3}}
        \put(80,110){\circle*{3}}
        \put(100,110){\circle*{3}}
        \put(0,70){\circle*{3}}
        \put(20,70){\circle*{3}}
        \put(40,70){\circle*{3}}
        \put(80,70){\circle*{3}}
        \put(100,70){\circle*{3}}
        \put(120,70){\circle*{3}}
        \put(20,30){\circle*{3}}
        \put(40,30){\circle*{3}}
        \put(80,30){\circle*{3}}
        \put(100,30){\circle*{3}}
        \put(60,0){\circle*{3}}

        \put(60,-5){\makebox(0,0)[t]{$0$}}
        \put(105,28){\makebox(0,0)[l]{$z$}}
        \put(75,28){\makebox(0,0)[r]{$y$}}
        \put(48,28){\makebox(0,0)[l]{$x$}}
        \put(15,28){\makebox(0,0)[r]{$w$}}
        \put(105,110){\makebox(0,0)[l]{$w'$}}
        \put(75,110){\makebox(0,0)[r]{$x'$}}
        \put(48,113){\makebox(0,0)[l]{$y'$}}
        \put(15,110){\makebox(0,0)[r]{$z'$}}
        \put(60,145){\makebox(0,0)[b]{$1$}}

        %\put(-5,70){\makebox(0,0)[r]{$t$}}
        %\put(18,70){\makebox(0,0)[r]{$u$}}
        %\put(48,70){\makebox(0,0)[l]{$v$}}
        %\put(75,69){\makebox(0,0)[r]{$v'$}}
        %\put(104,70){\makebox(0,0)[l]{$u'$}}
        %\put(125,70){\makebox(0,0)[l]{$t'$}}

      \end{picture}
    } % end of 2^4 subpicture

  \end{picture}
  \caption{Greechie diagrams for Boolean lattices $2^2$, $2^3$, and $2^4$,
   labelled with the atoms of their corresponding Hasse diagrams shown
   above them.  ($2^4$ was adapted from \cite[Fig.~18, p.~84]{beran}.)
\label{fig:greechie-1}}
\end{figure}
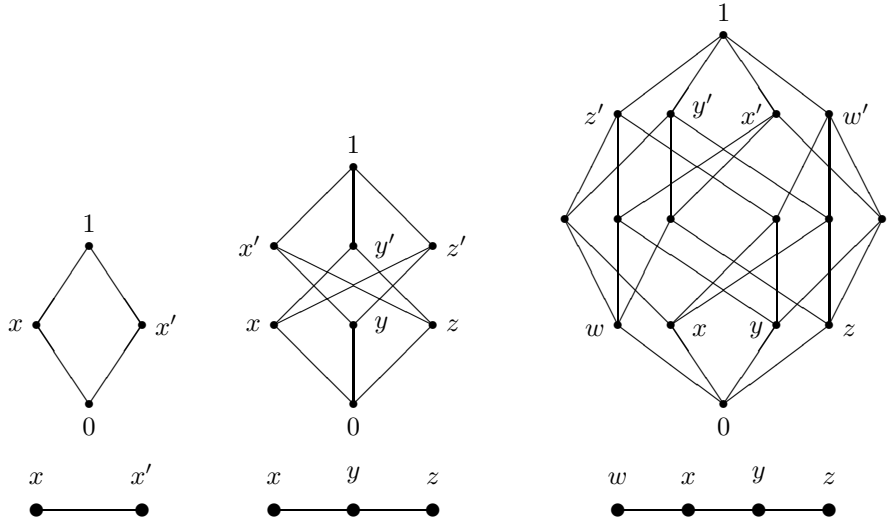

Recall that a {\em poset\/} (partially ordered set) is a set with an
associated ordering relation that is reflexive ($a\le a$), antisymmetric
($a\le b,b\le a$ imply $a=b$), and transitive ($a\le b,b\le c$ imply
$a\le c$).  An {\em orthoposet\/} is a poset with lower and upper bounds
$0$ and $1$ and an operation~$'$ satisfying (i) if $a\le b$ then $b'\le
a'$; (ii) $a''=a$; and (iii) the infimum $a\cap a'$ and the supremum
$a\cup a'$ exist and are $0$ and $1$ respectively.  A {\em lattice\/} is
a poset in which any two elements have an infimum and a supremum.  An
orthoposet is {\em orthomodular\/} if $a\le b$ implies (i) the supremum
$a\cup b'$ exists and (ii) $a\cup(a'\cap b)=b$.  A lattice is {\em
orthomodular\/} if it is also an orthomodular poset.  For example,
Boolean algebras such as those of Fig.~\ref{fig:greechie-1} are
orthomodular lattices.  A {\em $\sigma$-orthomodular poset} is an
orthomodular poset in which every countable subset of elements has
a supremum.  An {\em atom\/} of an orthoposet is an element
$a\ne 0$ such that $b<a$ implies $b=0$.

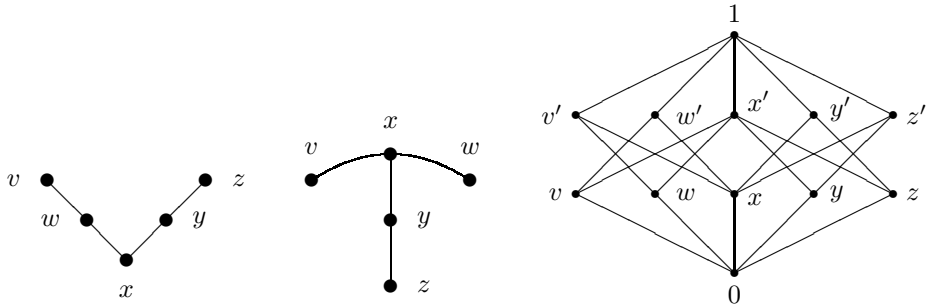
\begin{figure}[htbp]\centering
  % Set unitlength to default in case it's not
  \setlength{\unitlength}{1pt}
  \begin{picture}(330,100)(0,0)

    \put(0,10){
% 123,345.
      \begin{picture}(60,80)(0,0)
        \put(0,30){\line(1,-1){30}}
        \put(30,0){\line(1,1){30}}
        \put(0,30){\circle*{5}}
        \put(15,15){\circle*{5}}
        \put(30,0){\circle*{5}}
        \put(45,15){\circle*{5}}
        \put(60,30){\circle*{5}}
        \put(-10,30){\makebox(0,0)[r]{$v$}}
        \put(5,15){\makebox(0,0)[r]{$w$}}
        \put(30,-10){\makebox(0,0)[t]{$x$}}
        \put(55,15){\makebox(0,0)[l]{$y$}}
        \put(70,30){\makebox(0,0)[l]{$z$}}
      \end{picture}
    } % end of subpicture

    \put(100,0){
      \begin{picture}(60,80)(0,0)
        \qbezier(0,40)(30,60)(60,40)
        \put(30,0){\line(0,1){50}}
        \put(30,0){\circle*{5}}
        \put(30,25){\circle*{5}}
        \put(30,50){\circle*{5}}
        \put(0,40){\circle*{5}}
        \put(60,40){\circle*{5}}
        \put(0,50){\makebox(0,0)[b]{$v$}}
        \put(30,60){\makebox(0,0)[b]{$x$}}
        \put(60,50){\makebox(0,0)[b]{$w$}}
        \put(40,25){\makebox(0,0)[l]{$y$}}
        \put(40,0){\makebox(0,0)[l]{$z$}}
      \end{picture}
    } % end of subpicture

    \put(200,5){
      % 2^3+2^3 Hasse diagram
      \begin{picture}(60,90)(0,0)
        \put(60,90){\line(-2,-1){60}}
        \put(60,90){\line(-1,-1){30}}
        \put(60,90){\line(0,-1){30}}
        \put(60,90){\line(1,-1){30}}
        \put(60,90){\line(2,-1){60}}
        \put(60,0){\line(-2,1){60}}
        \put(60,0){\line(-1,1){30}}
        \put(60,0){\line(0,1){30}}
        \put(60,0){\line(1,1){30}}
        \put(60,0){\line(2,1){60}}
        \put(0,60){\line(1,-1){30}}
        \put(0,60){\line(2,-1){60}}
        \put(30,60){\line(-1,-1){30}}
        \put(30,60){\line(1,-1){30}}
        \put(60,60){\line(-2,-1){60}}
        \put(60,60){\line(-1,-1){30}}
        \put(60,60){\line(1,-1){30}}
        \put(60,60){\line(2,-1){60}}
        \put(90,60){\line(-1,-1){30}}
        \put(90,60){\line(1,-1){30}}
        \put(120,60){\line(-2,-1){60}}
        \put(120,60){\line(-1,-1){30}}

        \put(60,-5){\makebox(0,0)[t]{$0$}}
        \put(-5,30){\makebox(0,0)[r]{$v$}}
        \put(38,30){\makebox(0,0)[l]{$w$}}
        \put(65,28){\makebox(0,0)[l]{$x$}}
        \put(96,30){\makebox(0,0)[l]{$y$}}
        \put(125,30){\makebox(0,0)[l]{$z$}}
        \put(-5,60){\makebox(0,0)[r]{$v'$}}
        \put(38,60){\makebox(0,0)[l]{$w'$}}
        \put(65,65){\makebox(0,0)[l]{$x'$}}
        \put(96,62){\makebox(0,0)[l]{$y'$}}
        \put(125,60){\makebox(0,0)[l]{$z'$}}
        \put(60,95){\makebox(0,0)[b]{$1$}}

        \put(60,0){\circle*{3}}
        \put(0,30){\circle*{3}}
        \put(30,30){\circle*{3}}
        \put(60,30){\circle*{3}}
        \put(90,30){\circle*{3}}
        \put(120,30){\circle*{3}}
        \put(0,60){\circle*{3}}
        \put(30,60){\circle*{3}}
        \put(60,60){\circle*{3}}
        \put(90,60){\circle*{3}}
        \put(120,60){\circle*{3}}
        \put(60,90){\circle*{3}}
      \end{picture}
    } % end of 2^3+2^3 subpicture

  \end{picture}
  \caption{Two different ways of drawing the same Greechie diagram, and its
    corresponding Hasse diagram.
\label{fig:greechie-2}}
\end{figure}

In the literature, there are several different definitions of a Greechie
diagram.  For example, Beran (\cite[p.~144]{beran}) forbids 2-atom
blocks.  Kalmbach (\cite[p.~42]{kalmb83}) as well as Pt\'ak and
Pulmannov{\'a} \cite[p.~32]{ptak-pulm} include all diagrams with 2-atom
blocks connected to other blocks as long as the resulting pasting
corresponds to an orthoposet.  However, the case of 2-atom blocks
connected to other blocks is somewhat complicated; for example, the
definition of a loop in Definition~\ref{D:diagram} must be modified
(e.g.~\cite[p.~42]{kalmb83}) and no longer corresponds to the simple
geometry of a drawing of the diagram.  The definition of a Greechie
diagram also becomes more complicated; for example a pentagon (or any
$n$-gon with an odd number of sides) made out of 2-atom blocks is not a
Greechie diagram (i.e.\ does not correspond to any orthoposet).

The definition of Svozil and Tkadlec \cite{svozil-tkadlec} that we
adopt, Definition~\ref{D:greechie-diagram}, excludes 2-atom blocks
connected to other blocks.  It turns out that all orthomodular posets
representable by Kalmbach's definition can be represented with the
diagrams allowed by Svozil and Tkadlec's definition.  But the latter
definition eliminates the special treatment of 2-atom blocks connected
to other blocks and in particular simplifies any computer program
designed to process Greechie diagrams.

Svozil and Tkadlec's definition further restricts Greechie diagrams to
those diagrams representing orthoposets that are orthomodular by
forbidding loops of order less than~4, unlike the definitions of Beran
and Kalmbach.  The advantage appears to be mainly for convenience, as we
obtain only those Greechie diagrams that correspond to what are
sometimes called ``quantum logics'' ($\sigma$-orthomodular posets).
(We note that the term ``quantum logic'' is also used to denote a
pro\-posi\-tional calculus based on orthomodular or weakly orthomodular
lattices. \cite{mpcommp99})

The definition allows for Greechie diagrams whose blocks are not
connected.  In Fig.~\ref{fig:greechie-3} we show the Greechie diagram
for the {\em Chinese lantern\/} MO2 using unconnected 2-atom blocks.
This example also illustrates that even when the blocks are unconnected,
the properties of the resulting orthoposet are not just a simple
combination of the properties of their components (as one might na\"\i
vely suppose), because we are adding disjoint sets of incomparable nodes
to the ortho\-poset.  As is well-known (\cite[p.~16]{kalmb83}), MO2 is not
distributive, unlike the Boolean blocks it is built from.

\begin{figure}[htbp]\centering
  % Set unitlength to default in case it's not
  \setlength{\unitlength}{1pt}
  \begin{picture}(240,80)(0,0)

    \put(0,30){
      % MO2 Greechie diagram
% 12,34.
      \begin{picture}(40,10)(0,0)
        \put(0,0){\line(1,0){20}}
        \put(0,0){\circle*{5}}
        \put(20,0){\circle*{5}}
        \put(0,10){\makebox(0,0)[b]{$x'$}}
        \put(20,10){\makebox(0,0)[b]{$x$}}
        \multiput(20,0)(4,0){10}{\line(1,0){2}}
        \put(60,0){\line(1,0){20}}
        \put(60,0){\circle*{5}}
        \put(80,0){\circle*{5}}
        \put(60,10){\makebox(0,0)[b]{$y$}}
        \put(80,10){\makebox(0,0)[b]{$y'$}}

        \put(40,-20){\makebox(0,0)[c]{(a)}}
      \end{picture}
    }

    \put(170,0){

      \begin{picture}(90,80)(0,0)
%        - 1 -
%      /  / \  \
%    x'  x   y  y'
%      \  \ /  /
%        - 0 -
        \put(40,0){\line(-2,3){20}}
        \put(40,0){\line(2,3){20}}
        \put(40,0){\line(-4,3){40}}
        \put(40,0){\line(4,3){40}}
        \put(40,60){\line(-2,-3){20}}
        \put(40,60){\line(2,-3){20}}
        \put(40,60){\line(-4,-3){40}}
        \put(40,60){\line(4,-3){40}}

        \put(40,-5){\makebox(0,0)[t]{$0$}}
        \put(25,30){\makebox(0,0)[l]{$x$}}
        \put(85,30){\makebox(0,0)[l]{$y'$}}
        \put(55,30){\makebox(0,0)[r]{$y$}}
        \put(-5,30){\makebox(0,0)[r]{$x'$}}
        \put(40,65){\makebox(0,0)[b]{$1$}}

        \put(40,0){\circle*{3}}
        \put(0,30){\circle*{3}}
        \put(20,30){\circle*{3}}
        \put(60,30){\circle*{3}}
        \put(80,30){\circle*{3}}
        \put(40,60){\circle*{3}}
      \end{picture}
    } % end of MO2 subpicture

  \end{picture}
  \caption{Greechie diagram for the lattice MO2 and its Hasse diagram.  The
dashed line indicates that the unconnected blocks belong to the same
Greechie diagram.
\label{fig:greechie-3}}
\end{figure}
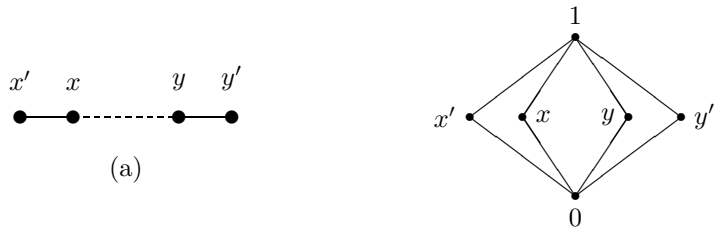

\section{Geometry: Generalized Orthoarguesian Equations}
\label{sec:goe}

Before 1975, the orthomodular lattice (OML) equations were the only ones
that were known to hold in a Hilbert lattice.  These have been
extensively studied in a vast body of research papers and books,
particularly in the context of the logic of quantum mechanics, and so
``orthomodular lattice'' and ``quantum logic'' have become almost
synonymous.

In 1975, Alan Day discovered an equation that holds in any
Hilbert lattice but does not in all OMLs.~\cite{gr-non-s}
He derived the equation, called the {\em orthoarguesian law}, by
imposing weakening orthogonality hypotheses on the so-called
Arguesian law, an equation closely related to the famous law of
projective geometry discovered by Desargues in the
1600's.\footnote{as part of an effort to help
artists, stonecutters, and engineers}

In 2000, Megill and Pavi{\v c}i{\'c} discovered a new infinite class
of equations that hold in any Hilbert lattice,\footnote{and
therefore in any infinite-dimensional Hilbert space}
called {\em generalized orthoarguesian equations} or $n$OA laws,
$n=3,4,\dots <\infty$,
a special case of which is the orthoarguesian law for $n=4$.

We recall the following definitions for reference here and later:
$a\to b\ {\buildrel\rm def\over =}\ a'\cup(a\cap b)$ and
and $a\equiv b\ {\buildrel\rm def\over =}\ (a\cap b)\cup(a'\cap b')$.

\begin{definition}
\label{def:noa}
We define an operation
${\buildrel (n)\over\equiv}$ on $n$ variables
$a_1,\ldots,a_n$ ($n\ge 3$) as follows:\footnote{To obtain
${\buildrel (n)\over\equiv}$ we substitute in each
${\buildrel (n-1)\over\equiv}$ subexpression only the two explicit
variables, leaving the other variables the same.  For example,
$(a_2{\buildrel (4)\over\equiv}a_5)$ on the right side of
(\ref{noaoper}) for $n=5$ means $(a_2{\buildrel
(3)\over\equiv}a_5)\cup ((a_2{\buildrel
(3)\over\equiv}a_4)\cap (a_5{\buildrel (3)\over\equiv}a_4))$
which means $(((a_2\to  a_3)\cap(a_5\to  a_3)) \cup((a_2'\to
a_3)\cap(a_5'\to a_3)))\cup
(
(((a_2\to  a_3)\cap(a_4\to  a_3))
\cup((a_2'\to  a_3)\cap(a_4'\to  a_3)))
\cap
(((a_5\to  a_3)\cap(a_4\to  a_3))
\cup((a_5'\to  a_3)\cap(a_4'\to  a_3)))
)
$.}
\begin{eqnarray}
a_1{\buildrel (3)\over\equiv}a_2\
&{\buildrel\rm def\over =}&\
((a_1\to  a_3)\cap(a_2\to  a_3))
\cup((a_1'\to  a_3)\cap(a_2'\to  a_3)) \\
a_1{\buildrel (n)\over\equiv}a_2\
&{\buildrel\rm def\over =}&\ (a_1{\buildrel (n-1)\over\equiv}a_2)\cup
((a_1{\buildrel (n-1)\over\equiv}a_n)\cap
(a_2{\buildrel (n-1)\over\equiv}a_n))\,,\quad n\ge 4\,.\label{noaoper}
\end{eqnarray}
\end{definition}

\begin{theorem}\label{th:noa}
The $n${\rm OA} {\em laws}
\begin{eqnarray}
(a_1\to a_3) \cap (a_1{\buildrel (n)\over\equiv}a_2)
\le a_2\to  a_3\,.\label{eq:noa}
\end{eqnarray}
hold in any Hilbert lattice.
\end{theorem}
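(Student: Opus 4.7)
The plan is to work in the Hilbert space representation of Theorem~\ref{th:repr}, interpreting each lattice element as a closed subspace of $\mathcal H$, and to proceed by induction on $n\ge 3$. The essential tool is the characterisation of the Sasaki hook: a vector $v$ lies in $a\to b=a'\cup(a\cap b)$ if and only if $P_a v\in b$, where $P_a$ denotes orthogonal projection onto the closed subspace $a$, together with the identity $I=P_a+P_{a'}$. The $n$OA law then becomes the assertion that whenever $P_{a_1}v\in a_3$ and $v\in a_1{\buildrel(n)\over\equiv}a_2$, one also has $P_{a_2}v\in a_3$.

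For the base case $n=3$, take any $v$ in the left-hand side of (\ref{eq:noa}) and decompose the inner join as $v=x+y$ with $x\in(a_1\to a_3)\cap(a_2\to a_3)$ and $y\in(a_1'\to a_3)\cap(a_2'\to a_3)$. Since $a_3$ is a subspace, $P_{a_1}y=P_{a_1}v-P_{a_1}x\in a_3$; combined with $(I-P_{a_1})y\in a_3$ (from $y\in a_1'\to a_3$) this forces $y\in a_3$, whence $P_{a_2}y=y-(I-P_{a_2})y\in a_3$ (using $y\in a_2'\to a_3$); and $P_{a_2}x\in a_3$ is immediate. So $P_{a_2}v=P_{a_2}x+P_{a_2}y\in a_3$, i.e., $v\in a_2\to a_3$.

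For the inductive step $n\ge 4$, expand $a_1{\buildrel(n)\over\equiv}a_2$ via (\ref{noaoper}) and decompose any $v$ in the LHS as $v=u+w$ with $u\in a_1{\buildrel(n-1)\over\equiv}a_2$ and $w\in(a_1{\buildrel(n-1)\over\equiv}a_n)\cap(a_2{\buildrel(n-1)\over\equiv}a_n)$. The strategy mirrors the base case: recover $P_{a_1}u,\,P_{a_1}w\in a_3$ from the global constraint $P_{a_1}v\in a_3$ by exploiting the orthocomplementary hook conditions $a_i'\to a_3$ buried at the base of each ${\buildrel(n-1)\over\equiv}$-recursion, then apply the inductive hypothesis first to $(a_1,a_n,a_3)$ to pass the $w$-component's projection through $a_n\to a_3$, and again (using the symmetry of ${\buildrel(k)\over\equiv}$ in its first two arguments, proven by an easy side induction) to $(a_n,a_2,a_3)$ to reach $a_2\to a_3$; the $u$-component is handled by one direct application of the inductive hypothesis to $(a_1,a_2,a_3)$. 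Finally one reassembles $P_{a_2}v=P_{a_2}u+P_{a_2}w\in a_3$.

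The main obstacle is the non-distributivity of the Hilbert lattice: the join-decomposition $v=u+w$ does not automatically place $u,w$ individually in $a_1\to a_3$, so the inductive hypothesis cannot be fed raw components. One must always thread the projection information $P_{a_1}v\in a_3$ through the decomposition using the irreducibly Hilbert-space-geometric identity $I=P_{a_1}+P_{a_1'}$ together with the orthocomplemented hook conditions embedded inside ${\buildrel(n-1)\over\equiv}$. This is precisely the step where a purely equational OML derivation breaks down, reflecting that the $n$OA laws are strictly stronger than the OML axioms and can fail in finite OMLs not representable as subspace lattices of a Hilbert space.
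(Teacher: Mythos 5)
Your key identity ($v\in a\to b$ iff $P_a v\in b$) is correct, but the proof founders on a point you treat as free: in ${\mathcal C}({\mathcal H})$ the join $c\cup d$ is $(c+d)^{\perp\perp}$, the \emph{closure} of the algebraic sum, so a vector of $c\cup d$ need not split as $x+y$ with $x\in c$, $y\in d$ unless $c+d$ is closed (e.g.\ when $c\perp d$). Your base-case split of $a_1{\buildrel (3)\over\equiv}a_2$ uses $c=(a_1\to a_3)\cap(a_2\to a_3)$ and $d=(a_1'\to a_3)\cap(a_2'\to a_3)$, which are neither orthogonal nor compatible (both contain $a_1\cap a_2\cap a_3$, for instance), so the decomposition $v=x+y$ is unjustified. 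For $n=3$ this could be repaired by approximating $v$ by sums $x_k+y_k$ and using the closedness of $a_3$ together with continuity of the distance to $a_3$, but you do not do this, and the same obstruction recurs at every join in the inductive step. This is exactly why the paper does not argue on the compact form: it first rewrites Eq.~(\ref{eq:noa}) in the orthogonality-hypothesis form (\ref{eq:n-oa2}) (an OML-equivalence taken from \cite{mpoa99}), in which every join that must be decomposed is a join $a_i\cup b_i$ with $a_i\perp b_i$, hence a closed sum, while on the right-hand side only $a\mbox{\boldmath $+$}b\subseteq a\cup b$ is needed.

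The more serious gap is the inductive step itself. Even granting $v=u+w$ with $u\in a_1{\buildrel (n-1)\over\equiv}a_2$ and $w\in(a_1{\buildrel (n-1)\over\equiv}a_n)\cap(a_2{\buildrel (n-1)\over\equiv}a_n)$, your plan requires $P_{a_1}u\in a_3$ and $P_{a_1}w\in a_3$ before the induction hypothesis can be applied to either component, and the only available datum is $P_{a_1}v\in a_3$. In the base case the subtraction trick worked because one summand lay in $a_1\to a_3$ outright; here neither $u$ nor $w$ has any top-level relation to $a_1\to a_3$ or $a_1'\to a_3$ --- the hook conditions you invoke sit at the bottom of the ${\buildrel (n-1)\over\equiv}$ recursion, inside further non-orthogonal joins, so ``threading the projection through the decomposition'' would require recursively decomposing $u$ and $w$ all the way down, meeting the same closure obstruction at each level and a bookkeeping problem you never specify; the sentence about conditions ``buried at the base of each recursion'' restates the difficulty rather than resolving it. The paper's proof avoids any such projection recovery: working with (\ref{eq:n-oa2}), it tracks explicit vector components $x=x_i+y_i$ with $x_i-x_j=y_j-y_i$ and, in passing from $n$ to $n+1$, replaces each subterm $(a_i\cup a_j)\cap(b_i\cup b_j)$ by a longer expression in which the difference vector $x_i-x_j$ is verified to remain. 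As written, your argument establishes at best a repairable version of the $n=3$ case; the cases $n\ge 4$ remain unproved.
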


\begin{proof}
To show that the $n$OA laws hold in ${\mathcal C}({\mathcal H})$, i.e., in a
Hilbert lattice, we closely follow the proof of the orthoarguesian
equation (4OA in our notation)  in Ref.~\cite{gr-non-s}. We recall
that in lattice ${\mathcal C}({\mathcal H})$, the meet corresponds to set
intersection and $\le$ to $\subseteq$.  We replace the join with
subspace sum $\mbox{\boldmath $+$}$ throughout:  the orthogonality
hypotheses permit us to do this on the left-hand side of the conclusion
\cite[Lemma 3 on p.~67]{kalmb83}, and on the right-hand side we use
$a\mbox{\boldmath $+$}b\subseteq a\cup b$.

In Ref.~\cite{mpoa99} we have shown that Eq.~(\ref{eq:noa}) can be
written as:
\begin{eqnarray}
\lefteqn{a_0 \perp b_0 \quad \&\quad a_1 \perp b_1
   \quad\&\quad\ldots\quad\&\quad a_{n-2}\perp b_{n-2}
   \quad \Rightarrow} \nonumber \\
& & ( a_0 \cup b_0 ) \cap ( a_1 \cup b_1 )\cap\cdots\cap (a_{n-2}\cup b_{n-2})
   \nonumber\\
& & \le b_0 \cup ( a_0 \cap ( a_1 \cup (
\cdots ( a_i \cup a_j ) \cap ( b_i \cup b_j )\cdots
 ) ) ), \label{eq:n-oa2}
\end{eqnarray}
where $a\perp b\ {\buildrel\rm def\over =}\ a\le b'$, $n\ge 3$,
 and $0\le i,j\le n-2$.
(The construction of the right-hand portion in ellipses can be inferred
starting from the 3OA basis, described in the next paragraph, and
building it up from $n$ to $n+1$ with the replacements described in
the last two sentences of this proof.)

The proof is by induction on $n$, starting at $n=3$.
Suppose $x$ is a vector belonging to the left-hand side of
(\ref{eq:n-oa2}).  Then there exist vectors $x_0\in a_0,\ y_0\in b_0,\
\ldots ,\ x_{n-2}\in a_{n-2},\ y_{n-2}\in b_{n-2}$ such that
$x=x_0+y_0=\cdots=x_{n-2}+y_{n-2}$.  Hence $x_k-x_l=y_l-y_k$ for $0\le
k,l\le n-2$.  In Eq.~(\ref{eq:n-oa2}) we assume, for our induction
hypothesis, that the components of vector $x=x_0+y_0$ can be distributed
over the leftmost terms on the right-hand side of the conclusion as
follows:
\[
  \cdots\subseteq
\underbrace{
  \underbrace{b_0}_{\textstyle y_0}
  \mbox{\boldmath $+$}(
  \underbrace{a_0}_{\textstyle x_0}
  \cap
  \underbrace{(
    \underbrace{a_1}_{\textstyle x_1}
    \mbox{\boldmath $+$}(
    \underbrace{
      \underbrace{(a_0\mbox{\boldmath $+$}a_1)}_{\textstyle x_0-x_1}
      \cap
      \underbrace{(b_0\mbox{\boldmath $+$}b_1)}_{\textstyle -y_0+y_1=x_0-x_1}
    }_{\textstyle x_0-x_1}
    \cap
    \underbrace{\cdots}_{\textstyle x_0-x_1}
    \cap
    \underbrace{\cdots}_{\textstyle x_0-x_1}
  }_{\textstyle x_1+(x_0-x_1)=x_0}
}_{\textstyle y_0+x_0=x}
\]
In particular, if we discard the right-hand ellipses, we obtain a ${\mathcal
C}({\mathcal H})$ proof of the 3OA law; this is the basis for our induction.

Let us first extend Eq.~(\ref{eq:n-oa2}) by adding variables $a_{n-1}$
and $b_{n-1}$
to the hypotheses and left-hand side of the conclusion.  The extended
Eq.~(\ref{eq:n-oa2}) so obtained obviously continues to hold in ${\mathcal
C}({\mathcal H})$.  Suppose $x$ is a vector belonging to the left-hand side
of this extended Eq.~(\ref{eq:n-oa2}).  Then there exist vectors $x_0\in
a_0,\ y_0\in b_0,\ \ldots ,\ x_{n-1}\in a_{n-1},\ y_{n-1}\in b_{n-1}$
such that
$x=x_0+y_0=\cdots=x_{n-1}+y_{n-1}$.  Hence $x_k-x_l=y_l-y_k$ for $0\le k,l\le
n-1$.  On the right-hand side of the extended Eq.~(\ref{eq:n-oa2}), for any
arbitrary subexpression of the form $( a_i \cup a_j ) \cap ( b_i \cup
b_j )$, where $i,j<n-1$, the vector components will be distributed
(possibly with signs reversed) as $x_i-x_j \in
a_i\mbox{\boldmath $+$} a_j$ and $x_i-x_j=-y_i+y_j \in b_i\mbox{\boldmath $+$}
b_j$.  If we replace $( a_i \cup a_j ) \cap ( b_i \cup b_j )$ with $(
a_i \cup a_j ) \cap ( b_i \cup b_j )\cap((( a_i \cup a_{n-1} ) \cap ( b_i
\cup b_{n-1} ))\cup(( a_j \cup a_{n-1} ) \cap ( b_j \cup b_{n-1} )))$,
components
$x_i$ and $x_j$ can be distributed as
\[
  \underbrace{( a_i \mbox{\boldmath $+$} a_j )}_{\textstyle x_i-x_j=}
  \cap
  \underbrace{( b_i \mbox{\boldmath $+$} b_j)}_{\textstyle -y_i+y_j}
  \cap
  \underbrace{
    ((\!
    \underbrace{( a_i \mbox{\boldmath $+$} a_{n-1} )}_{\textstyle x_i-x_{n-1}=}
    \cap
    \underbrace{( b_i \mbox{\boldmath $+$} b_{n-1} )}_{\textstyle -y_i+y_{n-1}})
    \mbox{\boldmath $+$}(\!\!\!
    \underbrace{( a_j \mbox{\boldmath $+$} a_{n-1} )\!\!}_{\textstyle -x_j+x_{n-1}=}
    \cap
    \underbrace{( b_j \mbox{\boldmath $+$} b_{n-1} )}_{\textstyle y_j-y_{n-1}}
    ))}_{\textstyle (x_i-x_{n-1})+(-x_j+x_{n-1})=x_i-x_j}
\]
so that $x_i-x_j$ remains an element of the replacement subexpression.
We continue to replace all subexpressions of the form $( a_i \cup a_j )
\cap ( b_i \cup b_j )$, where $i,j<n-1$, as above until they are
exhausted, obtaining the $(n+1)$OA law:
\begin{eqnarray}
\lefteqn{a_0 \perp b_0 \quad \&\quad a_1 \perp b_1
   \quad\&\quad\ldots\quad\&\quad a_{n-1}\perp b_{n-1}
   \quad \Rightarrow} \nonumber \\
& & ( a_0 \cup b_0 ) \cap ( a_1 \cup b_1 )\cap\cdots\cap (a_{n-1}\cup b_{n-1})
   \nonumber\\
& & \le b_0 \cup ( a_0 \cap ( a_1 \cup (
\cdots ( a_i \cup a_j ) \cap ( b_i \cup b_j
))\nonumber\\
& & \qquad\cap((( a_i \cup a_{n-1} ) \cap ( b_i \cup b_{n-1} ))\cup(( a_j \cup
a_{n-1} ) \cap ( b_j \cup b_{n-1} )))\cdots
 ) ) )\,.\qquad\label{eq:n-oa3}
\end{eqnarray}
\end{proof}

\begin{corollary}\label{th:gogrday-oa}
In any {\rm OML}, Day's orthoarguesian law {\rm \cite{gr-non-s}}
is equivalent to the {\rm 4OA} law and the equations found by
Godowski and Greechie in 1984 {\rm \cite{go-gr}} are equivalent to
each other and to {\rm 3OA}.
\end{corollary}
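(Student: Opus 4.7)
The plan is to reduce each listed equation to the orthogonality-hypothesis form (\ref{eq:n-oa2}) already derived in the proof of Theorem~\ref{th:noa}, and then match the resulting forms by routine OML algebra. Since translation between the $\to$-based formulation (\ref{eq:noa}) and the orthogonal-pairs formulation (\ref{eq:n-oa2}) uses only OML identities and the definition $a\to b = a'\cup(a\cap b)$, any equivalence verified at the level of (\ref{eq:n-oa2}) transfers to an equivalence of polynomial OML equations, and the corollary becomes a statement about two specific instances of this correspondence, namely $n=4$ and $n=3$.

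For the first assertion, I would unfold $a_1{\buildrel(4)\over\equiv}a_2$ in the 4OA inequality using Definition~\ref{def:noa} and then rewrite the resulting polynomial by replacing each term $a_k\to a_3$ by a generic element and each $a_k'\to a_3$ by its orthogonal partner. This converts 4OA into the $n=4$ instance of (\ref{eq:n-oa2}), which up to renaming of variables is Day's original orthoarguesian law as given in \cite{gr-non-s}. For the converse direction I would substitute $a_k\cup b_k$ for the joins on the left of Day's law and expand $a_k\cap b_k'$, a step justified by the Foulis--Holland theorem because the orthogonality hypotheses $a_k\perp b_k$ make the relevant triples distributive. Hence the two equations have the same OML models.

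For the second assertion, the same method identifies 3OA with the $n=3$ instance of (\ref{eq:n-oa2}). Each of the equations found by Godowski and Greechie in \cite{go-gr}, after expanding $\to$ and the standard abbreviations, is an instance of this inequality under a specific choice of orthogonal substitutions; conversely, appropriate assignments of $a_0,b_0,a_1,b_1$ in 3OA recover each of those equations. Mutual equivalence of the 1984 equations then follows by transitivity through 3OA. Both halves of the corollary thus collapse to a single observation: the $\to$-formulation (\ref{eq:noa}) and the orthogonal-pairs formulation (\ref{eq:n-oa2}) are interdefinable within every OML.

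The principal obstacle will be bookkeeping rather than conceptual difficulty. The unfolded form of $a_1{\buildrel(4)\over\equiv}a_2$ has many subterms, and matching it against Day's published law requires a careful sequence of OML rewritings (associativity of $\cup$, De Morgan, and the Foulis--Holland distributivity that becomes available under~$\perp$). Likewise, the various syntactic disguises in which the Godowski--Greechie equations appear in \cite{go-gr} must first be normalized to the $\to$-form before the identification with 3OA becomes visible; once this dictionary is set up, the remaining implications reduce to routine equational manipulations that can, if desired, be verified mechanically.
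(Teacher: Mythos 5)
The paper itself gives no argument here---its ``proof'' is a bare citation to Ref.~\cite{mpoa99}---so the only question is whether your outline would actually constitute a proof. Its overall direction (funnel Day's law, the 4OA law, the Godowski--Greechie equations and the 3OA law through the common orthogonality-hypothesis form (\ref{eq:n-oa2})) is indeed the strategy of the cited reference. But as written it has a genuine gap precisely at the step you declare routine: the claim that the polynomial $\to$-formulation (\ref{eq:noa}) and the conditional formulation (\ref{eq:n-oa2}) are ``interdefinable within every OML'' is the main technical content of the corollary, not a bookkeeping matter. Your device for one direction---``replace each term $a_k\to a_3$ by a generic element and each $a_k'\to a_3$ by its orthogonal partner''---is unsound on two counts. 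First, replacing compound subterms by fresh variables is a generalization move: the specific instance of a law obtained by substituting compound terms follows from the generic law, but it does not imply it, so this step does not show that 4OA yields Day's law for \emph{arbitrary} orthogonal pairs $a_i\perp b_i$. Second, $a_k\to a_3$ and $a_k'\to a_3$ are not orthogonal in an OML ($a_k\to a_3\ge a_k'$ and $a_k'\to a_3\ge a_k$, so their meet is generally nonzero); the orthogonal pair one actually uses is of the form $(a_k\to a_3)'$, $(a_k'\to a_3)'$, which lie below $a_k$ and $a_k'$ respectively. Getting from (\ref{eq:n-oa2}) back to (\ref{eq:noa}), and matching the $6$-variable conditional of Day (and the Godowski--Greechie equations) against the $4$- and $3$-variable equational forms, requires exactly such carefully chosen substitutions followed by nontrivial OML derivations; Foulis--Holland alone, applied to $a_k\cup b_k$, does not produce them.

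So your proposal is a correct description of where the proof lives, but it assumes rather than supplies the equivalence it is supposed to establish. To close the gap you would need, for each direction, the explicit substitution instances (e.g.\ which terms built from $a_1,a_2,a_3$ play the roles of the orthogonal pairs $a_i,b_i$, and conversely which terms built from the $a_i,b_i$ are fed into the variables of (\ref{eq:noa})) together with the verification that the hypotheses of (\ref{eq:n-oa2}) are met and that the resulting inequalities reduce to one another in OML; this is what Ref.~\cite{mpoa99} does at length, and it is also where the mutual equivalence of the two 1984 Godowski--Greechie equations is actually earned rather than obtained ``by transitivity'' through an identification that has not yet been proved.
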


\begin{proof} As given in Ref.~\cite{mpoa99}.
\end{proof}

\begin{theorem}\label{th:oa-oml}Any ortholattice
{\rm (OL)~\cite[Def.~1]{pm-ql-l-hql1}} in which an 
$n${\rm OA} law holds is orthomodular.
No {\rm $n$OA} law holds in all {\em OML}s.
\end{theorem}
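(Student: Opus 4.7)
The plan is to prove the theorem in two parts.

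\emph{Part 1: Every OL satisfying $n$OA is orthomodular.} The first step is to reduce to the case $n=3$. By induction one checks that $a_1{\buildrel (n)\over\equiv}a_2$ is symmetric in $a_1,a_2$, so substituting $a_{n+1}=a_1$ in the recursive definition gives
$a_1{\buildrel (n+1)\over\equiv}a_2 = (a_1{\buildrel (n)\over\equiv}a_2)\cup((a_1{\buildrel (n)\over\equiv}a_1)\cap(a_1{\buildrel (n)\over\equiv}a_2))$,
which equals $a_1{\buildrel (n)\over\equiv}a_2$ by absorption. Hence $(n+1)$OA restricts to $n$OA, and by downward induction it suffices to derive OM from 3OA.

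For 3OA $\Rightarrow$ OM the key substitution is $a_3=a_1'$, which simplifies the hooks to $a_1\to a_1' = a_1'$, $a_1'\to a_1' = 1$, $a_2\to a_1' = a_2'\cup(a_2\cap a_1')$, and $a_2'\to a_1' = a_2\cup(a_2'\cap a_1')$. I would then evaluate the resulting inequality on pairs $(a_1,a_2)$ with $a_1\le a_2$ and $a_2\cap a_1'=0$ (the precise condition whose non-triviality defines the failure of OM). Since $a_1\le a_2$ gives $a_2'\le a_1'$, the second disjunct of $a_1{\buildrel (3)\over\equiv}a_2$ equals $1\cap(a_2\cup a_2')=1$, so $a_1{\buildrel (3)\over\equiv}a_2 = 1$. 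The 3OA inequality therefore collapses to $a_1'\le a_2'\cup 0 = a_2'$, i.e.\ $a_2\le a_1$; combined with $a_1\le a_2$ this forces $a_1=a_2$. Hence 3OA implies the property $P$: $a\le b$ and $a'\cap b=0$ imply $a=b$. A short ortholattice argument then gives OM: for any $a\le b$ put $c=a\cup(a'\cap b)$; from $a\le c$ one has $c'\cap b\le a'\cap b$, and from $a'\cap b\le c$ one has $(a'\cap b)\cap c'=0$, whence $c'\cap b=0$, so $P$ applied to $c\le b$ yields $c=b$.

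\emph{Part 2: No $n$OA holds in all OMLs.} Because $n$OA implies 3OA for every $n\ge 3$, a single OML that violates 3OA automatically violates every $n$OA. I would exhibit such a witness by a finite Greechie diagram of the kind described in Section~\ref{sec:greechie}: by Corollary~\ref{th:gogrday-oa}, any OML failing the Godowski--Greechie equation does the job, and several such OMLs are constructed in \cite{mpoa99}. Failure of 3OA is then verified by direct computation on the finite lattice.

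The main subtlety is spotting, in Part 1, that $(a_3,a_1,a_2)=(a_1',a_1,a_2)$ is the right specialization and that imposing the extra hypothesis $a_2\cap a_1'=0$ collapses the right-hand side of 3OA to $a_2'$; the rest is routine absorption and the short deduction of OM from $P$.
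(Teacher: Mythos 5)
Your proposal is correct, but Part~1 takes a genuinely different route from the paper's. For the first statement the paper simply observes that all $n$OA laws fail in the hexagon O6 and relies on the standard characterization of orthomodularity of an ortholattice by the absence of an O6 subalgebra, whereas you derive the orthomodular law equationally: first $(n{+}1)$OA $\Rightarrow$ $n$OA via the substitution $a_{n+1}=a_1$, using the symmetry of ${\buildrel (n)\over\equiv}$ in its two explicit variables plus absorption (this is precisely the first half of Theorem~\ref{th:stronger}, which the paper's own proof also invokes); then the instance $a_3=a_1'$ of 3OA, evaluated under the hypotheses $a_1\le a_2$ and $a_2\cap a_1'=0$, makes the second disjunct of $a_1{\buildrel (3)\over\equiv}a_2$ equal to $1$ and the right-hand side equal to $a_2'$, so the law collapses to $a_1'\le a_2'$ and hence $a_1=a_2$; your closing computation with $c=a\cup(a'\cap b)$ correctly converts the resulting property ($a\le b$ and $a'\cap b=0$ imply $a=b$) into the orthomodular identity, and every step uses only ortholattice reasoning, so the derivation is sound. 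What this buys is a self-contained, purely syntactic proof that does not presuppose the O6-embedding theorem, at the cost of a computation where the paper needs one sentence. For the second statement your argument has the same structure as the paper's, which exhibits the 13-atom OML 13-7-OMLp-oa3f of Fig.~\ref{fig:oa4f-5f} in which 3OA fails and then uses $n$OA $\Rightarrow$ 3OA; you instead point, via Corollary~\ref{th:gogrday-oa}, to OMLs violating the Godowski--Greechie equations taken from the literature, which is an acceptable witness of the same kind, though less explicit than the concrete Greechie diagram the paper supplies.
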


\begin{proof} All $n$OA laws fail in ortholattice {\rm O6}
({\em benzene ring, hexagon}) \cite[Sec.~2]{pm-ql-l-hql1}.

We prove the second statement of the theorem by finding an
orthomodular lattice in which the 3OA law fails. In Figure
\ref{fig:oa4f-5f} we show the smallest such Greechie diagram,
containing 13 atoms.  Since the $(n+1)$OA law implies
the $n$OA law (see Theorem~\ref{th:stronger} below),
the result follows.
\end{proof}

We conjecture that the second statement of the following theorem
holds for any $n$. To prove it for $n\ge6$ is an open problem.

\bigskip

\begin{theorem}\label{th:stronger}
In an {\rm OL}, the $n${\rm OA} law implies the $(n-1)${\rm OA}
law for any $n>3$. In an {\rm OL}, the $n${\rm OA} law does
not imply the $(n+1)${\rm OA} law for $3\le n\le 5$.
\end{theorem}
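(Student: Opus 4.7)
\medskip

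\textbf{Proof proposal.}

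For the first assertion, the plan is to exploit the recursive definition (\ref{noaoper}) directly. Observe that for every $n\ge 4$,
\begin{eqnarray*}
a_1{\buildrel (n)\over\equiv}a_2
\ =\ (a_1{\buildrel (n-1)\over\equiv}a_2)\cup
\bigl((a_1{\buildrel (n-1)\over\equiv}a_n)\cap
(a_2{\buildrel (n-1)\over\equiv}a_n)\bigr)
\ \ge\ a_1{\buildrel (n-1)\over\equiv}a_2,
\end{eqnarray*}
so the operation ${\buildrel (k)\over\equiv}$ is monotone nondecreasing in the parameter $k$ (regardless of the choice of the additional variables $a_4,\dots,a_n$). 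Now assume $n$OA holds in a given OL, fix any $a_1,\dots,a_{n-1}$ for which we wish to verify $(n-1)$OA, and extend by choosing, say, $a_n:=a_2$ (any value works). Then meet-monotonicity gives
\begin{eqnarray*}
(a_1\to a_3)\cap(a_1{\buildrel (n-1)\over\equiv}a_2)\ \le\
(a_1\to a_3)\cap(a_1{\buildrel (n)\over\equiv}a_2)\ \le\ a_2\to a_3,
\end{eqnarray*}
where the first inequality is the monotonicity observation and the second is the assumed $n$OA instance. This is precisely the $(n-1)$OA law, so the forward implication is immediate. No use of orthomodularity is required; the argument works in any OL because only the lattice order and the defined operations $\to$ and ${\buildrel (k)\over\equiv}$ are invoked.

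For the second assertion we need, for each $n\in\{3,4,5\}$, a concrete OL (in fact an OML) in which the $n$OA law holds but the $(n+1)$OA law fails. The natural route is to exhibit Greechie diagrams: since Theorem~\ref{th:loop-lemma} guarantees that a Greechie diagram with no loops of order~$4$ produces a genuine OML, we can search among such diagrams for examples that distinguish successive laws in the hierarchy. The plan is to display, for each $n\in\{3,4,5\}$, a specific Greechie diagram together with an explicit assignment of atoms (or joins of atoms) to the variables $a_1,\dots,a_{n+1}$ that witnesses the failure of $(n+1)$OA by producing a pair of elements $u\le v$ in the conclusion of (\ref{eq:noa}) with $u\not\le v$ in the candidate lattice. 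One then verifies that $n$OA nevertheless holds throughout the lattice. The natural candidates are the minimal Greechie diagrams already used in~\cite{mpoa99} to separate the first few layers of the hierarchy, which generalize the 13-atom example of Figure~\ref{fig:oa4f-5f} that is cited in the proof of Theorem~\ref{th:oa-oml}.

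The main obstacle is computational rather than conceptual: the $(n+1)$OA law has many variables and a nested right-hand side whose length grows quickly with $n$, so verifying that $n$OA holds in a given finite lattice requires checking an inequality over all assignments of lattice elements to up to $n$ variables, which is already a sizeable task for $n=5$. Thus the strategy is (i)~search over small Greechie diagrams (typically produced by the enumeration tools used in \cite{mpoa99}) to find candidates in which an assignment falsifies $(n+1)$OA, then (ii)~exhaustively test $n$OA on the candidate lattice to confirm its validity. The reason the argument does not extend beyond $n=5$ is that the size of the diagrams and the number of assignments to check both appear to grow faster than what current search methods can handle, which is exactly the open problem flagged by the theorem.
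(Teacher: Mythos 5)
Your proof of the first assertion is correct and is essentially the paper's own argument (the paper dismisses it with ``easily follows from the definition''): by (\ref{noaoper}), $a_1{\buildrel (n)\over\equiv}a_2$ is a join having $a_1{\buildrel (n-1)\over\equiv}a_2$ as one joinand, so for any choice of $a_n$ the left-hand side of the $(n-1)$OA instance lies below the left-hand side of the corresponding $n$OA instance, while the right-hand side $a_2\to a_3$ is the same; hence the implication holds in any OL, exactly as you say.

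The second assertion is where there is a genuine gap. A non-implication can only be established by exhibiting, for each $n\in\{3,4,5\}$, a concrete OL in which the $n$OA law holds and the $(n+1)$OA law fails, together with an actual verification of both facts; your proposal only describes a search strategy and names no witness lattice, so none of the three cases is proved. Pointing to ``the minimal Greechie diagrams already used in \cite{mpoa99}'' does not close this: the paper's proof consists precisely in citing and having verified specific finite OMLs --- 17-10-oa3p4f (3OA passes, 4OA fails) and 22-13-oa4p5f (4OA passes, 5OA fails) in Fig.~\ref{fig:oa4f-5f}, and 28-18-oa5p6f (5OA passes, 6OA fails) in Fig.~\ref{fig:oa6f} --- found by exhaustive generation of finite OMLs (McKay) and exhaustive testing of the $n$OA laws (Megill), a computation the authors report took years of CPU time; in particular the $n=5$ case is not among the lattices of \cite{mpoa99}. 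Two smaller points: the 13-atom lattice of Fig.~\ref{fig:oa4f-5f} is the OML in which 3OA already fails (used for Theorem~\ref{th:oa-oml}), not a separator of 3OA from 4OA, so it is not the right template to ``generalize''; and your phrase ``producing a pair of elements $u\le v$ in the conclusion of (\ref{eq:noa}) with $u\not\le v$'' should read ``producing an assignment under which the conclusion $u\le v$ of (\ref{eq:noa}) fails.''
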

\begin{proof}
The first statement easily follows from the definition of the
$n$OA laws.

As for the second statement, we have three cases. For $n=3$,
the 3OA law holds in the 17-10-oa3p4f given in Fig.~\ref{fig:oa4f-5f}
and the 4OA law fails. For $n=4$, the 4OA law holds in 22-13-oa4p5f
given in the same figure, but the 5OA law
fails.\footnote{The notation ``35-23-oa5p6f''
means ``35 atoms, 23 edges, in which the 5OA law passes and the 6OA
law fails.''}

\begin{figure}[hbt]
\includegraphics[width=0.9999\textwidth]{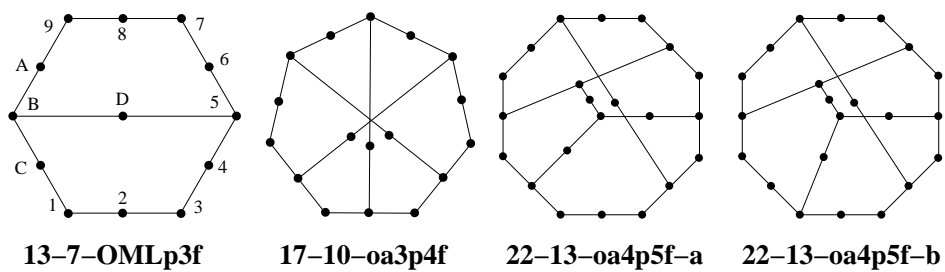}
\caption{The smallest Greechie diagram in which the OML law holds
and the 3OA law fails, the 3OA law holds and the 4OA fails, and
the two smallest Greechie diagrams in which 4OA holds and 5OA
fails. Cf.~Figs.~8$\>$b and 9 of Ref.~\cite{mpoa99}. McKay, Megill, and
Pavi{\v c}i{\'c} also introduced a textual way of writing down
Greechie diagrams that is self-explanatory for 13-7-OMLp-oa3f in
the figure: {\tt 123,345,567,789,9AB,BC1,BD5.}}
\label{fig:oa4f-5f}
\end{figure}

\begin{figure}[hbt]
\includegraphics[width=0.9999\textwidth]{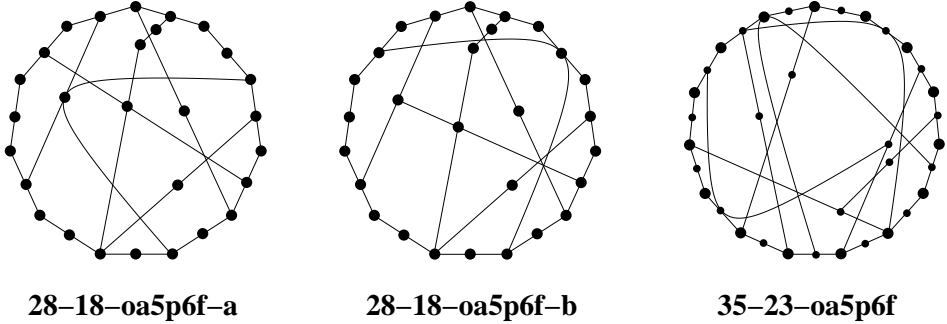}
\caption{Three lattices in which the 5OA law holds and the 6OA
fails. The 28 atom ones are apparently examples of the smallest
such lattices. The 35 atom one is from a set of lattices we
conjecture to contain the smallest lattices in which the 6OA law
holds and the 7OA fails. (Bigger dots denote the vertices of
the polygon.)}
\label{fig:oa6f}
\end{figure}

For $n=5$, the 5OA law holds in 28-18-oa5p6f (Fig.~\ref{fig:oa6f})
but the 6OA law fails.

These counterexamples were found using a program written by
Brendan McKay that exhaustively generates finite OML lattices,
that in turn fed a program written by Norman Megill that tests
the $n$OA laws against those lattices.~\cite{bdm-ndm-mp-1}
The $n$OA laws are very long equations whose lengths grow
exponentially with $n$ (with $4 \cdot 3^{n-2}+3$ variable
occurrences when expanded to elementary operations).
As $n$ increases, the difficulty of finding these
counter\-examples increases exponentially. Finding
counterexamples required over 10 years of CPU time on the
Cluster Isabella (224 CPUs) and Civil Enginering Cluster
(60 CPUs) of the University of Zagreb. Some additional
lattices in which 5OA holds and 6OA are:

35-23-oa5p6f:

{\parindent=0pt
\scriptsize{\tt

FTV$\!$,7TY$\!$,12Z$\!$,59Z$\!$,LMX$\!$,AJL$\!$,MUY$\!$,CLP$\!$,1AS$\!$,AGV$\!$,3EV$\!$,16K$\!$,2OX$\!$,DIV$\!$,58R$\!$,6HI$\!$,8AQ$\!$,7PR$\!$,CHN$\!$,7OW$\!$,9BV$\!$,4DU$\!$,ENO$\!$.

DHN$\!$,CDY$\!$,MQS$\!$,18B$\!$,LSV$\!$,ACL$\!$,5CO$\!$,3CK$\!$,6EK$\!$,79C$\!$,1LX$\!$,9MU$\!$,4CI$\!$,4PR$\!$,FJY$\!$,8FU$\!$,28Z$\!$,EPW$\!$,16G$\!$,LTW$\!$,RSZ$\!$,GOQ$\!$,2HW$\!$.

CHX$\!$,24Y$\!$,18S$\!$,8FT$\!$,6GN$\!$,17E$\!$,68Q$\!$,9BF$\!$,8DI$\!$,14V$\!$,8JW$\!$,1HP$\!$,AOX$\!$,IKR$\!$,26M$\!$,7GR$\!$,7UZ$\!$,4AL$\!$,3TY$\!$,3KP$\!$,79O$\!$,5GL$\!$,CJZ$\!$.

123$\!$,145$\!$,267$\!$,489$\!$,6AB$\!$,8AC$\!$,1DE$\!$,ADF$\!$,2GH$\!$,4IJ$\!$,FGI$\!$,BKL$\!$,KMN$\!$,MOP$\!$,OQR$\!$,GLQ$\!$,RST$\!$,FUV$\!$,UWX$\!$,7SW$\!$,8YZ$\!$,KVY$\!$,5NW$\!$.

123$\!$,145$\!$,267$\!$,489$\!$,6AB$\!$,8CD$\!$,AEF$\!$,3CE$\!$,1GH$\!$,GIJ$\!$,IKL$\!$,68K$\!$,LMN$\!$,MOP$\!$,EJO$\!$,HQR$\!$,AST$\!$,9QS$\!$,RUV$\!$,4WX$\!$,7UW$\!$,VYZ$\!$,4MY$\!$.}

}

36-24-oa5p6f:

{\parindent=0pt
\scriptsize{\tt 4IY$\!$,14C$\!$,48Z$\!$,BNW$\!$,CUa$\!$,9PU$\!$,DHX$\!$,BEX$\!$,68S$\!$,2OX$\!$,37D$\!$,35G$\!$,9NY$\!$,7AY$\!$,CHJ$\!$,PVX$\!$,8MP$\!$,45T$\!$,GKL$\!$,5FO$\!$,KPQ$\!$,OSa$\!$,LRY$\!$,6EG$\!$.

3MW$\!$,3OQ$\!$,DTV$\!$,FNR$\!$,OPa$\!$,9NS$\!$,7KS$\!$,5EZ$\!$,MRZ$\!$,HKT$\!$,1EU$\!$,45J$\!$,CIJ$\!$,BDM$\!$,3IY$\!$,3SX$\!$,6DP$\!$,8DL$\!$,24D$\!$,49G$\!$,8FY$\!$,4AQ$\!$,17A$\!$,HRa$\!$.

1FW$\!$,2AL$\!$,4DK$\!$,CNO$\!$,8KN$\!$,8FX$\!$,5Ca$\!$,1CE$\!$,ADV$\!$,3NZ$\!$,3GY$\!$,7HQ$\!$,9HI$\!$,4Ma$\!$,25S$\!$,2IT$\!$,9UZ$\!$,RTW$\!$,ABZ$\!$,DGR$\!$,6IK$\!$,1AJ$\!$,9Pa$\!$,1QY$\!$.

123$\!$,145$\!$,267$\!$,489$\!$,68A$\!$,1BC$\!$,BDE$\!$,AFG$\!$,FHI$\!$,6JK$\!$,DHJ$\!$,4LM$\!$,LNO$\!$,2IN$\!$,OPQ$\!$,PRS$\!$,1GR$\!$,QTU$\!$,TVW$\!$,JXY$\!$,NVX$\!$,5Za$\!$,SVZ$\!$,9ET$\!$.

123$\!$,145$\!$,267$\!$,489$\!$,68A$\!$,1BC$\!$,BDE$\!$,AFG$\!$,FHI$\!$,6JK$\!$,DHJ$\!$,4LM$\!$,LNO$\!$,2IN$\!$,OPQ$\!$,PRS$\!$,1GR$\!$,2TU$\!$,TVW$\!$,NXY$\!$,CVX$\!$,9Za$\!$,JYZ$\!$,STa$\!$.}

}

39-26-oa5p6f:

{\parindent=0pt
\scriptsize{\tt
bcd$\!$,YZa$\!$,VWX$\!$,TUX$\!$,RSd$\!$,QUd$\!$,OPa$\!$,NWa$\!$,LMa$\!$,KVc$\!$,IJS$\!$,FGH$\!$,EHP$\!$,JKZ$\!$,CDE$\!$,ABN$\!$,9ad$\!$,GMT$\!$,8BR$\!$,8DM$\!$,7IO$\!$,56X$\!$,46R$\!$,35O$\!$,2FW$\!$,17U$\!$.

bcd$\!$,YZa$\!$,VWX$\!$,UXa$\!$,RST$\!$,OPQ$\!$,LMN$\!$,IJK$\!$,FGH$\!$,DEN$\!$,BCT$\!$,AXd$\!$,89Z$\!$,7EZ$\!$,56H$\!$,67c$\!$,KQW$\!$,9MW$\!$,46P$\!$,OSa$\!$,3La$\!$,2Ga$\!$,1Ja$\!$,5CX$\!$,FId$\!$,8Rd$\!$.

bcd$\!$,YZa$\!$,WXa$\!$,UVa$\!$,RST$\!$,OPQ$\!$,NQT$\!$,MSd$\!$,KLd$\!$,IJa$\!$,HLa$\!$,EFG$\!$,DMV$\!$,BCZ$\!$,APY$\!$,89L$\!$,7Ac$\!$,679$\!$,8CT$\!$,7GX$\!$,5DP$\!$,4JQ$\!$,3FI$\!$,26S$\!$,1IR$\!$,BEd$\!$.

bcd$\!$,YZa$\!$,WXa$\!$,UVa$\!$,RST$\!$,PQT$\!$,MNO$\!$,KLd$\!$,JOd$\!$,HId$\!$,FGa$\!$,ENT$\!$,DIT$\!$,CMX$\!$,ABV$\!$,89Q$\!$,9Zc$\!$,7LY$\!$,6Ta$\!$,5BO$\!$,4CI$\!$,34G$\!$,58G$\!$,2AK$\!$,3KS$\!$,12N$\!$.

bcd$\!$,Zad$\!$,WXY$\!$,TUV$\!$,QRS$\!$,NOP$\!$,LMY$\!$,KPV$\!$,JPS$\!$,HIX$\!$,EFG$\!$,CDc$\!$,BIO$\!$,APY$\!$,9Yd$\!$,BDM$\!$,8GN$\!$,7Na$\!$,56U$\!$,CFS$\!$,47R$\!$,6HR$\!$,38M$\!$,2LU$\!$,14V$\!$,5Ed$\!$.}

}

To pursue the search for higher $n$'s would be too costly with
the present algorithms and classical computers.
\end{proof}

An interesting law that holds in an $n$OA lattice is the {\em $n${\rm
OA} identity law} given by the following Theorem.

\begin{theorem}\label{th:oa-equiv} In any {\rm $n$OA} we have:
\begin{eqnarray}
a_1{\buildrel (n)\over\equiv}a_2=1\qquad \Leftrightarrow\qquad
a_1\to a_n=a_2\to a_n
\label{eq:id1}
\end{eqnarray}
This also means that $a_1{\buildrel (n)\over\equiv}a_2$ being
equal to one is a relation of equivalence.
\end{theorem}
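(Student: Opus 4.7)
The plan is to prove the biconditional in Theorem~\ref{th:oa-equiv} by splitting it into the forward and backward implications; the equivalence-relation claim will then be immediate, since the relation $a_1\to a_3 = a_2\to a_3$ (as a binary relation on pairs $a_1,a_2$) is manifestly reflexive, symmetric, and transitive, and the biconditional transports this to $a_1{\buildrel (n)\over\equiv}a_2=1$. (I read the right-hand side with $a_3$ as the target, consistent with $a_3$ being the base variable in Definition~\ref{def:noa} and the target in Theorem~\ref{th:noa}; if ``$a_n$'' is intended literally in the statement, the argument is identical up to relabeling.)

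For the forward direction, I would first verify by induction on $n\ge 3$ that ${\buildrel (n)\over\equiv}$ is symmetric in its two displayed arguments. The base case $n=3$ is manifest from the definition, and in the inductive step the first summand $a_1{\buildrel (n-1)\over\equiv}a_2$ of the recursion is symmetric by the inductive hypothesis, while the swap $a_1\leftrightarrow a_2$ merely permutes the two factors of the meet in the second summand $(a_1{\buildrel (n-1)\over\equiv}a_n)\cap(a_2{\buildrel (n-1)\over\equiv}a_n)$. Granting this, assume $a_1{\buildrel (n)\over\equiv}a_2=1$. Theorem~\ref{th:noa} gives $a_1\to a_3 = (a_1\to a_3)\cap 1 \le a_2\to a_3$, and applying the same $n$OA law with $a_1$ and $a_2$ interchanged (legitimate because the hypothesis is symmetric) yields the reverse inequality, hence equality.

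For the backward direction I would first reduce to $n=3$ by monotonicity of the recursion: $a_1{\buildrel (n)\over\equiv}a_2\ge a_1{\buildrel (n-1)\over\equiv}a_2\ge\cdots\ge a_1{\buildrel (3)\over\equiv}a_2$, so it suffices to prove the base case. Assuming $a_1\to a_3 = a_2\to a_3$, call this common value $x$. The key observation, valid already in any ortholattice, is that from the definition $a\to b = a'\cup(a\cap b)$ one has both $a_i\to a_3\ge a_i'$ (so $(a_i\to a_3)'\le a_i$) and, dually, $a_i'\to a_3\ge a_i$; chaining these gives $a_i'\to a_3\ge(a_i\to a_3)' = x'$ for $i=1,2$. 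Hence $(a_1'\to a_3)\cap(a_2'\to a_3)\ge x'$, so
\[
a_1{\buildrel (3)\over\equiv}a_2 = (x\cap x)\cup\bigl((a_1'\to a_3)\cap(a_2'\to a_3)\bigr) \ge x\cup x' = 1.
\]

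The only obstacle I anticipate is psychological: the backward implication looks like a converse to the $n$OA law, and so one might spend time hunting for a clever use of Theorem~\ref{th:noa} there. In fact no such use is needed---the backward direction holds already in any ortholattice, and the $n$OA axiom enters only in the forward direction; orthomodularity, which any $n$OA lattice enjoys by Theorem~\ref{th:oa-oml}, is not even required for the elementary inequalities that make the backward calculation work.
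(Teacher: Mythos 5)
Your argument is correct for the reading you adopt, and since the chapter's own ``proof'' of Theorem~\ref{th:oa-equiv} is only a citation of Ref.~\cite{mpoa99}, there is nothing in the paper to compare it with step by step; your self-contained derivation is the added value here. The structure is sound: the backward implication is, as you observe, a pure ortholattice computation---from $(a_i\to a_3)'\le a_i\le a_i'\to a_3$ you get $(a_1'\to a_3)\cap(a_2'\to a_3)\ge x'$, hence $a_1{\buildrel (3)\over\equiv}a_2\ge x\cup x'=1$, and the monotonicity of the recursion (\ref{noaoper}) lifts this from $n=3$ to all $n$---while the $n$OA hypothesis enters only in the forward implication, applied twice via the term-level symmetry $a_1{\buildrel (n)\over\equiv}a_2=a_2{\buildrel (n)\over\equiv}a_1$, which your induction establishes correctly (the swap fixes the second summand of (\ref{noaoper}) up to commutativity and is handled in the first summand by the inductive hypothesis). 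Two small points of hygiene: what you should invoke in the forward step is the defining law (\ref{eq:noa}) of the variety (that is what ``in any $n$OA'' means), not Theorem~\ref{th:noa} about Hilbert lattices, though the formula is the same; and the equivalence-relation claim indeed transports through the biconditional exactly as you say, for fixed $a_3,\dots,a_n$.

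The one genuine weak spot is your parenthetical claim that the printed right-hand side $a_1\to a_n=a_2\to a_n$ is ``identical up to relabeling.'' It is not: $a_3$ is a distinguished variable in Definition~\ref{def:noa} (it is the base of every ${\buildrel (3)\over\equiv}$ subterm), so permuting $a_3$ and $a_n$ changes the term $a_1{\buildrel (n)\over\equiv}a_2$ itself rather than merely renaming it, and a literal $a_n$-version of the identity would require an additional ingredient (for instance, invariance of ${\buildrel (n)\over\equiv}$ under permutations of $a_3,\dots,a_n$, which you have not proved and which is not obvious). Your $a_3$ reading is the one consistent with the law (\ref{eq:noa}) and with the chapter's later use of the identity law at $n=3$ (where $a_n=a_3$, e.g.\ in OI3 and Eqs.~(\ref{eqa})--(\ref{eqj})), so you have proved the intended statement; just do not claim the other reading comes for free.
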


\begin{proof} As given in Ref.~\cite{mpoa99}.
\end{proof}

An immediate consequence of Eq.~(\ref{eq:id1}) is the transitive law
\begin{eqnarray}
a_2=b_1\qquad \&\qquad b_2=c_1\qquad \&\qquad c_2=a_1
  \qquad \&\qquad \nonumber\\
a_1{\buildrel (n)\over\equiv}a_2=1\qquad \&\qquad
b_1{\buildrel (n)\over\equiv}b_2=1\qquad \Rightarrow\qquad
c_1{\buildrel (n)\over\equiv}c_2=1
\label{eq:tr1}
\end{eqnarray}
that, while weaker than the $n$OA laws (verified to be strictly
weaker for $n=3,4$), nonetheless cannot be derived from the OML
axioms. \cite{mpoa99}.

The $n$OA identity laws bear a resemblance to the OML law in the form
$a\equiv b=1\Leftrightarrow a=b$.  Thus is it natural to think that they
might be equivalent to the $n$OA laws.  This is known as the {\em
orthoarguesian identity conjecture}, which asks whether the $n$OA
laws can be derived, in an OML, from Eq.~(\ref{eq:id1}).
Tests run against several million finite lattices (for $n=3$) have not
found a counterexample, but the conjecture has so far defied attempts to
find a proof.

An affirmative answer to this conjecture would provide us with a
powerful tool to prove new equivalents to the $n$OA laws.  It turns out
that it is often much easier to derive the $n$OA identity law from a
conjectured $n$OA law equivalent than it is to derive the $n$OA law
itself.  For example, under the assumption that the 3OA identity law
implies the 3OA law, all of the following conditions would be
established as equivalents to the 3OA law (where $aCb$ means $a=(a\cup
b)\cap(a\cup b')$ i.e.  $a$ commutes with $b$):
\begin{eqnarray}
(a_1\to  a_3)\cap (a_1{\buildrel (3)\over\equiv}a_2)  & C &  a_2\to  a_3
                  \label{eqa}\\
(a_1\to  a_3)\cap (a_1{\buildrel (3)\over\equiv}a_2)  & C &
   (a_2\to  a_3)\cap (a_1{\buildrel (3)\over\equiv}a_2)    \label{eqb}\\
(a_1'\to  a_3)'\cap (a_1{\buildrel (3)\over\equiv}a_2)  & \le &  a_2\to  a_3
                    \label{eqc}\\
(a_1'\to  a_3)'\cap (a_1{\buildrel (3)\over\equiv}a_2)  & C &
    a_2\to  a_3                \label{eqd}\\
(a_1'\to  a_3)'\cap (a_1{\buildrel (3)\over\equiv}a_2)  & C &
    (a_2\to  a_3)\cap (a_1{\buildrel (3)\over\equiv}a_2)  \label{eqe}\\
a_3\cap (a_1\to  a_3)\cap (a_1{\buildrel (3)\over\equiv}a_2)  & \le &
    (a_2\to  a_3)              \label{eqf}\\
a_3\cap (a_1\to  a_3)\cap (a_1{\buildrel (3)\over\equiv}a_2)  & C &
   (a_2\to  a_3)          \label{eqg}\\
a_3\cap (a_1\to  a_3)\cap (a_1{\buildrel (3)\over\equiv}a_2)  & C &
     (a_2\to  a_3)\cap (a_1{\buildrel (3)\over\equiv}a_2)     \label{eqh}\\
((a_1\to  a_3)\cap (a_1{\buildrel (3)\over\equiv}a_2))\to  a_3 & = &
    ((a_2\to  a_3)\cap (a_1{\buildrel (3)\over\equiv}a_2))\to  a_3
    \label{eqi}\\
((a_1\to  a_3)\cap (a_1{\buildrel (3)\over\equiv}a_2))\to  a_3  & C &
    ((a_2\to  a_3)\cap (a_1{\buildrel (3)\over\equiv}a_2))\to  a_3
  \label{eqj}
\end{eqnarray}
At the present time, only Eqs.~(\ref{eqa}) and (\ref{eqi}) from the
above set of conditions are known to be equivalent to the 3OA law.
Denoting the 3OA law [Eq.~(\ref{eq:noa}) for $n=3$] and the 3OA
identity law [Eq.~(\ref{eq:id1})] by OA3 and OI3 respectively, the
currently known relationships among the above conditions are as follows.
(Note that $\Rightarrow$ means ``the right-hand equation
can be proved from the axiom system of
OML $+$ the left-hand equation added as an axiom.'')
\begin{eqnarray}
&&\mbox{OA3 $\Leftrightarrow$ Eq.~(\ref{eqa}) $\Rightarrow$ Eq.~(\ref{eqb})
   $\Rightarrow$ OI3} \nonumber \\
&&\mbox{OA3 $\Rightarrow$ Eq.~(\ref{eqc}) $\Rightarrow$ Eq.~(\ref{eqd})
    $\Rightarrow$ Eq.~(\ref{eqe})  $\Rightarrow$ OI3}\nonumber \\
&&\mbox{OA3 $\Rightarrow$ Eq.~(\ref{eqf}) $\Leftrightarrow$
    Eq.~(\ref{eqg}) $\Leftrightarrow$ Eq.~(\ref{eqh})
     $\Rightarrow$ OI3}\nonumber \\
&&\mbox{OA3 $\Leftrightarrow$ Eq.~(\ref{eqi})
  $\Rightarrow$ Eq.~(\ref{eqj})  $\Rightarrow$ OI3}\nonumber
\end{eqnarray}

\section{States: Godowski Equations}
\label{sec:ge}

As we explained in Section \ref{sec:hl}, there is a way to obtain
complex infinite-dimensional Hilbert space from the Hilbert
lattice equipped with several additional conditions and without
invoking the notion of state at all. States then follow by Gleason's
theorem (see Theorem \ref{th:gleason} in Section \ref{sec:d}). However,
we can also define states directly on an ortholattice, and then it
turns out that such a definition generates many properties of the
lattice that hold in any Hilbert lattice. In particular, the states
generate the Godowski and Mayet-Godowski equations (on which we will
elaborate in the next section).

\begin{definition}\label{def:state} A {\em state} (also called
{\em probability measures} or simply {\em probabilities}
\cite{kalmb98,kalmb83,kalmb86,kalmb98,maczin})
on a lattice $\mathcal L$
 is a function $m:{\mathcal L}\longrightarrow [0,1]$
such that $m(1)=1$ and $a\perp b\ \Rightarrow\ m(a\cup b)=m(a)+m(b)$,
where $a\perp b$ means $a\le b'$.
\end{definition}

\begin{lemma}\label{lem:state}
The following properties hold for any state $m$:
\begin{eqnarray}
%    m(a)+m(a') = 1                                               (3)
%    a =< b  =>  m(a) =< m(b)                                     (4)
%    0 =< m(a) =< 1                                               (5)
%    m(a1)+...+m(an) = n  <=>  m(a1) = ... = m(an) = 1            (6)
%    m(a1^...^an) = 1  =>  m(a1) = ... = m(an) = 1                (7)
&m(a)+m(a')=1\label{eq:state3}\\
&a\le b\ \Rightarrow\ m(a)\le m(b)\label{eq:state4}\\
&0\le m(a)\le 1\label{eq:state5}\\
&m(a_1)=\cdots=m(a_n)=1\
\Leftrightarrow\ m(a_1)+\cdots+m(a_n)=n\label{eq:state6}\\
&m(a_1\cap\cdots\cap a_n)=1\ \Rightarrow\
m(a_1)=\cdots=m(a_n)=1\label{eq:state7}
\end{eqnarray}
\end{lemma}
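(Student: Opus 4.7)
The plan is to derive the five properties in turn, with monotonicity (\ref{eq:state4}) serving as the workhorse for the later parts. Everything ultimately reduces to finite additivity on orthogonal pairs, the normalization $m(1)=1$, the orthomodular decomposition of an element above a given one, and elementary arithmetic inside $[0,1]$.

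First, for (\ref{eq:state3}), I would observe that $a\le a''=a$, so $a\perp a'$; additivity then gives $m(a\cup a')=m(a)+m(a')$, and since $a\cup a'=1$ and $m(1)=1$ the identity drops out. For (\ref{eq:state4}), suppose $a\le b$; by the orthomodular law one has $b=a\cup(a'\cap b)$, and $a\perp(a'\cap b)$ because $a'\cap b\le a'$. Additivity then yields $m(b)=m(a)+m(a'\cap b)$, and since $m(a'\cap b)\ge 0$ we conclude $m(a)\le m(b)$.

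Property (\ref{eq:state5}) is immediate from the codomain constraint $m:\mathcal L\to[0,1]$ built into the definition. For (\ref{eq:state6}), the forward implication is a one-line addition, and the converse follows because a sum of $n$ numbers from $[0,1]$ equalling $n$ forces every summand to be $1$. For (\ref{eq:state7}), $a_1\cap\cdots\cap a_n\le a_i$ together with (\ref{eq:state4}) gives $m(a_i)\ge m(a_1\cap\cdots\cap a_n)=1$, and the bound $m(a_i)\le 1$ from (\ref{eq:state5}) then forces $m(a_i)=1$ for every $i$.

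The main obstacle, modest as it is, lies in (\ref{eq:state4}): the proof of monotonicity requires the orthomodular identity $b=a\cup(a'\cap b)$, which holds in an OML but not in a general ortholattice, so some care is needed about the ambient structure on which $m$ is defined. Once that step is secured, the remaining parts are arithmetic and require only that the subclaims be ordered so that (\ref{eq:state4}) and (\ref{eq:state5}) are already in hand when (\ref{eq:state7}) is invoked.
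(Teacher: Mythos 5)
The paper itself states this lemma without proof, so there is nothing to compare line by line; judged on its own terms, your argument is essentially correct and is the standard one: (\ref{eq:state3}) from $a\perp a'$ and $m(1)=1$, (\ref{eq:state5}) from the codomain, (\ref{eq:state6}) and (\ref{eq:state7}) by elementary arithmetic and monotonicity. The ordering of the steps is also right.

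The one point worth tightening is exactly the obstacle you flagged in (\ref{eq:state4}) but left unresolved. Your derivation of monotonicity via $b=a\cup(a'\cap b)$ does require orthomodularity, and that is too restrictive for the way the lemma is used in this chapter: Definition \ref{def:state} only presupposes an ortholattice, and Theorems \ref{th:god-eq} and \ref{th:mge} invoke consequences of the lemma (in particular (\ref{eq:state6}) and (\ref{eq:state7}), the latter resting on (\ref{eq:state4})) for arbitrary ortholattices admitting a strong set of states, not only OMLs. The fix is short and avoids the orthomodular law altogether: if $a\le b$ then $a\le b''$, i.e.\ $a\perp b'$, so additivity gives $m(a)+m(b')=m(a\cup b')\le 1$, while (\ref{eq:state3}) gives $m(b)+m(b')=1$; subtracting, $m(a)\le m(b)$. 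With this replacement your proof goes through verbatim in any ortholattice, which is the generality actually needed.
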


\begin{definition}\label{def:strong} A nonempty set $S$ of
states on $\mathcal L$ is called a strong set of {\em classical\/}
states if
\begin{eqnarray}
(\exists m \in S)(\forall a,b\in{\mathcal L})((m(a)=1\ \Rightarrow
\ m(b)=1)\ \Rightarrow\ a\le b)\,\quad
\label{eq:st-cl}
\end{eqnarray}
and a strong set of {\em quantum\/} states if
\begin{eqnarray}
(\forall a,b\in{\rm L})(\exists m \in S)((m(a)=1\ \Rightarrow
\ m(b)=1)\ \Rightarrow\ a\le b)\,.\quad
\label{eq:st-qm}
\end{eqnarray}
\end{definition}

We want to emphasize the difference between quantum and classical
states. A classical state is the same for all lattice elements,
while a quantum state might be different for each of the
elements. The following theorem \cite{mpoa99} shows us that
a classical state can be be very strong.

\begin{theorem}\label{th:strong-distr} Any ortholattice that admits a
strong set of classical states is distributive.
\end{theorem}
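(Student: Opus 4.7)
The plan is to show that the strong-classical hypothesis of Definition \ref{def:strong} is so restrictive that any ortholattice satisfying it must be the two-element chain $\{0,1\}$, which is trivially distributive.

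Fix the distinguished state $m \in S$ whose existence is guaranteed by the definition. The first move is to take the contrapositive of the defining implication: for every pair $a,b \in \mathcal{L}$, the condition $a \not\le b$ forces both $m(a) = 1$ and $m(b) \ne 1$. Instantiating this with $b = 0$ and noting that $a \not\le 0$ is equivalent to $a \ne 0$, I conclude that $m(a) = 1$ for \emph{every} nonzero $a \in \mathcal{L}$ (the second clause $m(0) \ne 1$ is automatic, since $m(0) = 0$ follows from $m(1) = m(1 \cup 0) = m(1) + m(0)$).

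Next I would apply this observation simultaneously to an element $a$ and to its orthocomplement $a'$. Suppose for contradiction that $\mathcal{L}$ contains an element $a$ with $0 < a < 1$; then both $a$ and $a'$ are nonzero, so the previous step yields $m(a) = m(a') = 1$. On the other hand, additivity of $m$ on the orthogonal pair $(a,a')$ together with the normalization $m(1) = 1$ gives $m(a) + m(a') = m(a \cup a') = m(1) = 1$, producing the absurdity $2 = 1$. Hence $\mathcal{L}$ has no element strictly between $0$ and $1$, so $\mathcal{L} = \{0,1\}$.

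The only step requiring any real care is the direction of the strong-classical implication; once its contrapositive is in hand, the contradiction is immediate. There is thus no genuine obstacle to overcome---the content of the theorem is the sharp observation that requiring a \emph{single} state to separate every pair of incomparable lattice elements is incompatible with the presence of any nontrivial element of an ortholattice. This is presumably the conceptual motivation for contrasting classical with quantum strong sets in Definition \ref{def:strong}, since only the quantum variant (in which different pairs may be witnessed by different states) can live on a nontrivial ortholattice and therefore generate the Godowski-type equations discussed in the sequel.
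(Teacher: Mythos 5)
Your proof is correct, and it actually establishes more than the theorem asserts. Note first that the chapter itself supplies no argument for Theorem~\ref{th:strong-distr} (it is quoted from Ref.~\cite{mpoa99}), so the comparison is with the argument that hypothesis naturally supports. The decisive step is the contrapositive you isolate from Definition~\ref{def:strong}: the \emph{single} witnessing state $m$ must satisfy $m(a)=1$ and $m(b)\ne 1$ whenever $a\not\le b$. The most economical way to finish from there is to apply it to a pair of incomparable elements: if $a\not\le b$ and $b\not\le a$, one gets $m(a)=1$, $m(b)\ne 1$ and simultaneously $m(b)=1$, a contradiction; hence any two elements are comparable, the lattice is a chain, and every chain is distributive---this uses nothing beyond the defining implication, not even additivity of $m$, and works for any bounded lattice. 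Your route, instantiating $b=0$ to get $m(a)=1$ for all $a\ne 0$ and then invoking $m(a)+m(a')=m(1)=1$, does use the orthocomplementation and the state axioms, but in exchange yields the sharper structural conclusion that the ortholattice collapses to the two-element Boolean algebra $\{0,1\}$ (equivalently, no nondegenerate ortholattice admits a strong set of classical states), from which distributivity is immediate; the only cosmetic caveat is the degenerate $0=1$ lattice, which is vacuously distributive (and admits no states at all), so nothing is lost. Both arguments are sound and hinge on the same contrapositive; yours buys the stronger ``only $\{0,1\}$'' statement, the comparability argument buys maximal economy and generality, and either one substantiates the paper's point that the classical quantifier order in Definition~\ref{def:strong} is far more restrictive than the quantum one.
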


In 1981, Radoslaw Godowski \cite{godow} found an infinite series of
equations partly corresponding to the strong set of quantum states given
by Eq.~(\ref{eq:st-qm}), forming a series of algebras contained in the
class of all orthomodular lattices and containing the class of all
Hilbert lattices.  Importantly, there are OMLs that do not admit a
strong set of states, so Godowski's equations provide us with new
equational laws that extend the OML laws that hold in Hilbert lattices.

Before deriving the equations themselves, we will first prove, directly
in Hilbert space, that Hilbert lattices admit strong sets of states.
This will provide some insight into how these equations arise.

\begin{theorem}\label{hilb-strong-s} Any Hilbert lattice admits a
strong set of states.
\end{theorem}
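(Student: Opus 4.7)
The plan is to exhibit, for each unit vector $\psi \in \mathcal{H}$, the pure state $m_\psi$ on $\mathcal{C}(\mathcal{H})$ determined by the Born rule, and then show that the family $S = \{m_\psi : \|\psi\| = 1\}$ is strong in the sense of Definition \ref{def:strong}. By Theorem \ref{th:repr} we may identify the Hilbert lattice $\mathcal{HL}$ with $\mathcal{C}(\mathcal{H})$ for a suitable Hilbert space $\mathcal{H}$, so it suffices to produce a strong set of states on $\mathcal{C}(\mathcal{H})$.

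First I would define $m_\psi(a) \;{\buildrel\rm def\over =}\; \langle \psi, P_a\psi\rangle = \|P_a\psi\|^2$, where $P_a$ denotes orthogonal projection onto the closed subspace $a$. Then I would verify that $m_\psi$ is a state in the sense of Definition \ref{def:state}: clearly $m_\psi(\mathcal{H}) = \|\psi\|^2 = 1$, and $0 \le m_\psi(a) \le 1$; moreover, if $a \perp b$ (i.e.\ $a \subseteq b^\perp$), then the subspaces are mutually orthogonal and hence $a \cup b = a \mbox{\boldmath $+$} b$ is already closed with $P_{a\cup b} = P_a + P_b$, giving $m_\psi(a \cup b) = m_\psi(a) + m_\psi(b)$.

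The next ingredient is the characterization $m_\psi(a) = 1 \iff \psi \in a$. One direction is immediate ($\psi \in a$ gives $P_a\psi = \psi$); the other follows from the orthogonal decomposition $\psi = P_a\psi + (I-P_a)\psi$, whence $\|\psi\|^2 = \|P_a\psi\|^2 + \|(I-P_a)\psi\|^2$, so $m_\psi(a) = 1$ forces $(I-P_a)\psi = 0$, i.e.\ $\psi \in a$.

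Finally I would establish strength in the quantum sense of Eq.~(\ref{eq:st-qm}). It suffices to show that whenever $a \not\subseteq b$ in $\mathcal{C}(\mathcal{H})$ there exists $m \in S$ with $m(a) = 1$ but $m(b) < 1$: by the previous step this amounts to producing a unit vector $\psi \in a$ with $\psi \notin b$, which is immediate from $a \not\subseteq b$ (pick any $\varphi \in a \setminus b$ and normalize). The contrapositive then yields the implication required by Eq.~(\ref{eq:st-qm}). The only subtlety in the whole argument is the $\sigma$-additivity on mutually orthogonal joins, where one must be careful that the lattice join in $\mathcal{C}(\mathcal{H})$ coincides with the subspace sum for orthogonal elements so that the projections add; this is precisely where the structure provided by Theorem \ref{th:repr} (in particular the identification of $\cup$ with $(\cdot + \cdot)^{\perp\perp}$) is used, and it is the one step that deserves explicit mention rather than routine verification.
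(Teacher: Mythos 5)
Your proposal is correct and follows essentially the same route as the paper's own proof: both use pure states $m_\psi(c)=\|P_c\psi\|^2$ given by unit vectors and, for $a\not\le b$, choose a unit vector in $a$ but not in $b$ to get $m(a)=1$ and $m(b)<1$, concluding by the same contrapositive formalization of Eq.~(\ref{eq:st-qm}). You simply spell out the routine verifications (additivity on orthogonal joins and the characterization $m_\psi(a)=1\Leftrightarrow\psi\in a$) that the paper leaves implicit.
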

\begin{proof} We need only to use pure states defined by unit vectors:
If $a$ and $b$ are closed subspaces of a Hilbert space $\mathcal H$
such that $a$ is not contained in $b$, there is a unit vector $u$ of
$\mathcal H$ belonging to $a-b$. If for each $c$ in the lattice of all
closed subspaces of $\mathcal H$, ${\mathcal C}({\mathcal H})$,
we define $m(c)$ as the square of the norm of
the projection of $u$ onto $c$, then $m$ is a state on $\mathcal H$ such
that $m(a)=1$ and $m(b)<1$. This proves that ${\mathcal C}({\mathcal H})$
admits a strong set of states, and this proof works in each of the
3 cases where the underlying field is the field of real numbers,
of complex numbers, or of quaternions.

We can formalize the proof as follows:
\begin{eqnarray}
&&(\forall a,b\in L)((\sim\ a\le b)\ \Rightarrow\
(\exists m\in S)(m(a)=1\ \&\ \sim\ m(b)=1))\nonumber\\
&\Rightarrow & (\forall a,b\in L)(\exists m\in S)((m(a)=1\
\Rightarrow\ m(b)=1)\ \Rightarrow\ a\le b)\nonumber
\end{eqnarray}
\end{proof}

We will now define the family of equations found by Godowski,
introducing a special notation for them.  Then we will prove
that they hold in any lattice admitting a strong set of states
and thus, in particular, any Hilbert lattice.

\begin{definition}\label{def:god-equiv}Let us call the
following expression the {\em Godowski identity}:
\begin{eqnarray}
a_1{\buildrel\gamma\over\equiv}a_n{\buildrel{\rm def}
\over =}(a_1\to a_2)\cap(a_2\to a_3)\cap\cdots
\cap(a_{n-1}\to a_n)\cap(a_n\to a_1),\nonumber\\
\qquad \qquad \qquad \qquad \qquad
n=3,4,\dots\label{eq:god-equiv}
\end{eqnarray}
\end{definition}
We define $a_n{\buildrel\gamma\over\equiv}a_1$ in the same way with
variables $a_i$ and $a_{n-i+1}$ swapped;
in general $a_i{\buildrel\gamma\over\equiv}a_j$ will be an expression
with $|j-i|+1\ge 3$ variables $a_i,\ldots,a_j$ first appearing in that
order.  For completeness and later use (Theorem \ref{th:god-trans}) we
define $a_i{\buildrel\gamma\over\equiv}a_i{\buildrel{\rm def} \over
=} (a_i\to  a_i)=1$ and
$a_i{\buildrel\gamma\over\equiv}a_{i+1}{\buildrel{\rm def} \over
=}(a_i\to  a_{i+1})\cap(a_{i+1}\to  a_i)=a_i\equiv a_{i+1}$, the last
equality holding in any OML.

\begin{theorem}\label{th:god-eq} Godowski's equations {\em\cite{godow}}
\begin{eqnarray}
a_1{\buildrel\gamma\over\equiv}a_3
&=&a_3{\buildrel\gamma\over\equiv}a_1
\label{eq:godow3o}\\
a_1{\buildrel\gamma\over\equiv}a_4
&=&a_4{\buildrel\gamma\over\equiv}a_1
\label{eq:godow4o}\\
a_1{\buildrel\gamma\over\equiv}a_5
&=&a_5{\buildrel\gamma\over\equiv}a_1
\label{eq:godow5o}\\
&\dots &\nonumber
\end{eqnarray}
hold in all ortholattices, {\em OL}'s, with strong sets of states.
An {\em OL} to which these equations are added is a variety
smaller than {\em OML}.

We shall call these equations {\rm $n$-Go} {\rm (}{\rm 3-Go},
{\rm 4-Go}, etc.\/{\rm )}.  We also denote by
{\rm $n$GO} {\rm (}{\rm 3GO}, {\rm 4GO}, etc.\/{\rm )} the
{\rm OL} variety determined by {\rm $n$-Go} and
call it the {\rm $n$GO law}.
\end{theorem}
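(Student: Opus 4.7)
The plan is to convert the target lattice identity into a statement that every state must satisfy and then invoke the strong-set axiom. The engine is the computation
\[
  m(a\to b)=m(a'\cup(a\cap b))=m(a')+m(a\cap b)=1-m(a)+m(a\cap b),
\]
which holds for every state because $a\cap b\le a$ makes $a'$ and $a\cap b$ orthogonal. Together with monotonicity (Lemma~\ref{lem:state}), this yields a probabilistic modus ponens: $m(a\to b)=1$ forces $m(a\cap b)=m(a)$, and in particular $m(a)\le m(b)$.

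First I would prove the inequality $a_1{\buildrel\gamma\over\equiv}a_n\le a_n{\buildrel\gamma\over\equiv}a_1$. Let $m$ be any state with $m(a_1{\buildrel\gamma\over\equiv}a_n)=1$. Equation~(\ref{eq:state7}) forces $m$ to equal $1$ on every conjunct $(a_i\to a_{i+1})$ and on $(a_n\to a_1)$. Walking around the cycle with the modus ponens step gives
\[
  m(a_1)\le m(a_2)\le\cdots\le m(a_n)\le m(a_1),
\]
so all $m(a_i)$ coincide at some common value $p$, and moreover $m(a_i\cap a_{i+1})=p$ for each cyclic adjacency. Plugging these values back into the formula above yields $m(a_{i+1}\to a_i)=1-p+p=1$ and symmetrically $m(a_1\to a_n)=1$. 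Using the fact that $\{c:m(c)=1\}$ is closed under finite meets, I conclude $m(a_n{\buildrel\gamma\over\equiv}a_1)=1$. Because the implication $m(x)=1\Rightarrow m(y)=1$ holds for every state, it holds in particular for the state guaranteed by (\ref{eq:st-qm}), and that strong-set axiom then produces $x\le y$. Relabelling $a_i\leftrightarrow a_{n-i+1}$ gives the opposite inequality, and hence the equality $n$-Go.

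For the second assertion I would verify both containment and strictness. Containment OL$+n$-Go$\subseteq$ OML follows from a direct specialisation: in 3-Go, setting $a_3=1$ collapses the identity to $a_1\cap(a_1\to a_2)=a_2\cap(a_2\to a_1)$, and restricting further to $a_2\le a_1$ gives $a_1\cap(a_1'\cup a_2)=a_2$, which is the orthomodular law. Since 3-Go can be recovered from any $n$-Go by repeating variables, every $n$-Go variety sits inside OML. Strict containment is witnessed by pointing to a finite OML in which some $n$-Go fails; Godowski \cite{godow} supplies such counter\-examples, and the Greechie-diagram machinery of Section~\ref{sec:greechie} furnishes OMLs with no strong set of states in which 3-Go must fail (otherwise the first half of this proof would reconstruct the missing strong set).

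The main obstacle is the meet-closure of $\{c:m(c)=1\}$: general states on ortho\-lattices are not automatically subadditive, so $m(a)=m(b)=1$ does not itself give $m(a\cap b)=1$. The remedy uses the decomposition $a\cup b=a\cup((a\cup b)\cap a')$ as an orthogonal join, which when $m(a)=1$ pins $m((a\cup b)\cap a')$ to zero and, by dualising via $(a\cap b)'=a'\cup b'$, drives $m(a\cap b)$ up to one. This step tacitly uses orthomodularity of the ambient OL, a hypothesis that is in fact forced by the existence of a strong set of states, so the bookkeeping is delicate but reduces to manipulations of the properties of states tabulated in Lemma~\ref{lem:state}.
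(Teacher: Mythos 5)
Your cycle-walking argument is fine as far as it goes: from $m(a_1{\buildrel\gamma\over\equiv}a_n)=1$ and Eq.~(\ref{eq:state7}) you correctly get $m(a_i)=m(a_i\cap a_{i+1})=p$ for all $i$ and hence $m(c)=1$ for every single conjunct $c$ of $a_n{\buildrel\gamma\over\equiv}a_1$; and your derivation of the orthomodular law from 3-Go by substituting $a_3=1$ and then $a_2\le a_1$ is a legitimate alternative to the paper's ``3-Go fails in O6'' argument. The genuine gap is the step that welds the conjuncts together: the claim that $\{c:m(c)=1\}$ is closed under finite meets is false, and your proposed remedy does not repair it. A state is additive only on orthogonal pairs and is not subadditive: in MO2 (Fig.~\ref{fig:greechie-3}) the assignment $m(x)=m(y)=1$, $m(x')=m(y')=0$ is a state (the only orthogonalities are $x\perp x'$ and $y\perp y'$), yet $m(x\cap y)=m(0)=0$, and the dualization you invoke collapses because $m(x'\cup y')=m(1)=1$ even though both disjuncts have value $0$. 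Note that MO2 is an OML and even admits a strong set of states, so neither orthomodularity nor the strong-set hypothesis rescues the lemma you need. The correct move---and the one the paper makes---is to never assert $m(a_n{\buildrel\gamma\over\equiv}a_1)=1$ at all: for each conjunct $c$ of $a_n{\buildrel\gamma\over\equiv}a_1$ you have ``for every state $m$, $m(a_1{\buildrel\gamma\over\equiv}a_n)=1\Rightarrow m(c)=1$,'' so the strong-set condition (\ref{eq:st-qm}) applied to the pair $(a_1{\buildrel\gamma\over\equiv}a_n,\,c)$ yields $a_1{\buildrel\gamma\over\equiv}a_n\le c$; intersecting these finitely many lattice inequalities (a lattice fact, not a state fact) gives $a_1{\buildrel\gamma\over\equiv}a_n\le a_n{\buildrel\gamma\over\equiv}a_1$, and symmetry gives equality.

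A secondary flaw is your parenthetical justification of strictness: ``OMLs with no strong set of states in which 3-Go must fail (otherwise the first half of this proof would reconstruct the missing strong set).'' The theorem only says that a strong set of states forces the equations; it does not say the equations force a strong set of states, and that converse is in fact false---OML MG1 of Fig.~\ref{fig:st1new} satisfies every $n$-Go yet admits no strong set of states (Theorems \ref{th:mg1-ns} and \ref{th:mgo-lt-ngo}). To get strictness you must exhibit a concrete finite OML in which 3-Go fails, e.g.\ the wagon-wheel G3 of Fig.~\ref{fig:oag34}(a) used by the paper (or lean explicitly on Godowski's original counterexamples as a citation, which is what the remainder of your sentence does).
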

\begin{proof} By
Definition \ref{def:state} we have
$m(a_1\to a_2)=m(a'_1)+m(a_1\cap a_2)$ etc.,
because $a'_1\le(a'_1\cup a'_2)$, i.e., $a'_1\perp(a_1\cap a_2)$
in any ortholattice. Assuming $m(a_1{\buildrel\gamma\over\equiv}a_n)=1$,
we have $m(a_1\to a_2)=\cdots =m(a_{n-2}\to a_n)=m(a_n\to a_1)=1$.
Hence, $n=m(a_1\to a_2)+\cdots+m(a_{n-2}\to a_n)+m(a_n\to a_1)=
m(a_n\to a_{n-2})+\cdots+m(a_2\to a_1)+m(a_1\to a_n)$.
This last equality follows from breaking up, rearranging,
and recombining the $m(a_i\to  a_j)$
terms as described by the first sentence. Therefore,
$m(a_n\to a_{n-2})=\cdots=m(a_2\to a_1)=m(a_1\to a_n)=1$. Thus,
by Definition \ref{def:strong} for strong quantum states, we obtain:
$(a_1{\buildrel\gamma\over\equiv}a_n)\le (a_n\to a_{n-2})$,\ \ldots,
$(a_1{\buildrel\gamma\over\equiv}a_n)\le (a_2\to a_1)$, and
$(a_1{\buildrel\gamma\over\equiv}a_n)\le (a_1\to a_n)$, wherefrom
we get $(a_1{\buildrel\gamma\over\equiv}a_n)\le
(a_n{\buildrel\gamma\over\equiv}a_1)$. By symmetry, we get
$(a_n{\buildrel\gamma\over\equiv}a_1)\le
(a_1{\buildrel\gamma\over\equiv}a_n)$. Thus
$(a_1{\buildrel\gamma\over\equiv}a_n)=
(a_n{\buildrel\gamma\over\equiv}a_1)$.

$n$GO implies the orthomodular law
 because 3-Go fails in O6, and $n$-Go implies
$(n-1)$-Go in any OL (Lemma \ref{lem:god-iimpliesn-1}). It is a variety
smaller than OML because 3-Go fails in the Greechie diagram of
Fig.~\ref{fig:oag34}a.
\end{proof}

\begin{lemma}\label{lem:god-iimpliesn-1}
Any {\em $n$GO} is an {\em $(n-1)$GO}, $n=4,5,6,\ldots$
\end{lemma}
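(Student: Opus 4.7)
The plan is to derive $(n-1)$-Go from $n$-Go by a straightforward substitution argument: instantiate the $n$-Go equation at $a_n := a_1$, and observe that both sides collapse exactly to the two sides of $(n-1)$-Go.

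First I would write out both sides of $n$-Go explicitly from Definition~\ref{def:god-equiv}. The left-hand side is
\begin{equation*}
a_1{\buildrel\gamma\over\equiv}a_n = (a_1\to a_2)\cap(a_2\to a_3)\cap\cdots\cap(a_{n-1}\to a_n)\cap(a_n\to a_1),
\end{equation*}
while the right-hand side $a_n{\buildrel\gamma\over\equiv}a_1$, per the convention in the paragraph following Definition~\ref{def:god-equiv}, is
\begin{equation*}
a_n{\buildrel\gamma\over\equiv}a_1 = (a_n\to a_{n-1})\cap(a_{n-1}\to a_{n-2})\cap\cdots\cap(a_2\to a_1)\cap(a_1\to a_n).
\end{equation*}

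Next I would substitute $a_n := a_1$. Since $a_1\to a_1 = a_1'\cup(a_1\cap a_1) = a_1'\cup a_1 = 1$ in any OL, the factor $a_n\to a_1 = a_1\to a_1$ in the left-hand side collapses to $1$, and the factor $a_{n-1}\to a_n$ becomes $a_{n-1}\to a_1$. Hence the left-hand side reduces to $(a_1\to a_2)\cap\cdots\cap(a_{n-2}\to a_{n-1})\cap(a_{n-1}\to a_1)$, which is precisely $a_1{\buildrel\gamma\over\equiv}a_{n-1}$. Analogously, on the right-hand side the factor $a_1\to a_n$ becomes $1$ while $a_n\to a_{n-1}$ becomes $a_1\to a_{n-1}$, producing exactly $a_{n-1}{\buildrel\gamma\over\equiv}a_1$.

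Finally, since $n$-Go holds as an identity in the variety $n$GO for all choices of its free variables, it holds in particular under the substitution $a_n := a_1$. Combining this with the two reductions above yields $a_1{\buildrel\gamma\over\equiv}a_{n-1} = a_{n-1}{\buildrel\gamma\over\equiv}a_1$, i.e., the $(n-1)$-Go identity, for arbitrary $a_1,\dots,a_{n-1}$. This gives the inclusion of varieties $n\text{GO}\subseteq(n-1)\text{GO}$.

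There is no substantive obstacle here; the only thing to be careful about is the bookkeeping between the two naming conventions $a_1{\buildrel\gamma\over\equiv}a_n$ and $a_n{\buildrel\gamma\over\equiv}a_1$, making sure that after substitution each cyclic product is matched correctly with its $(n-1)$-variable counterpart (which I verified above). The argument does not need the orthomodular law or any $n$-specific machinery beyond the identity $a\to a=1$.
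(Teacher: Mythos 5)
Your proof is correct and is essentially the paper's own argument: the paper's proof is the one-line substitution of $a_1$ for $a_2$ in $n$-Go, while you identify the cyclically adjacent pair $a_n:=a_1$ instead, which collapses one factor $a\to a=1$ on each side in exactly the same way. The only difference is the (immaterial) choice of which adjacent pair of variables to merge, so the two proofs coincide in substance.
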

\begin{proof}
Substitute $a_1$ for $a_2$ in equation $n$-Go.
\end{proof}

\begin{figure}[htbp]\centering
  % Set unitlength to default in case it's not
  \setlength{\unitlength}{0.6pt}
  \begin{picture}(240,120)(0,0)

    \put(-15,13) { % argument of \put is offset of G3 in figure
      \begin{picture}(124,110)(0,0) % start G3 subpicture
% 123,147,258,369,7CE,8AC,8BD,9DG,EFG.
        % hexagon sides (G3)
        \put(32.2,0){\line(1,0){55.6}}
        \put(32.2,100){\line(1,0){55.6}}
        \put(2.3,50){\line(3,5){29.9}}
        \put(117.7,50){\line(-3,5){29.9}}
        \put(2.3,50){\line(3,-5){29.9}}
        \put(117.7,50){\line(-3,-5){29.9}}
        % internal lines
        \put(60,100){\line(0,-1){50}}
        \put(17.25,25){\line(5,3){42.75}}
        \put(102.75,25){\line(-5,3){42.75}}

        % hexagon vertices (G3)
        \put(2.3,50){\circle*{5}}
        \put(117.7,50){\circle*{5}}
        \put(87.8,0){\circle*{5}}
        \put(87.8,100){\circle*{5}}
        \put(32.2,0){\circle*{5}}
        \put(32.2,100){\circle*{5}}
        % hexagon mid-sides
        \put(60,0){\circle*{5}}
        \put(60,100){\circle*{5}}
        \put(17.25,25){\circle*{5}}
        \put(17.25,75){\circle*{5}}
        \put(102.75,25){\circle*{5}}
        \put(102.75,75){\circle*{5}}
        % internal nodes
        \put(60,50){\circle*{5}}
        \put(38.9,37.5){\circle*{5}}
        \put(81.4,37.5){\circle*{5}}
        \put(60,75){\circle*{5}}
      \end{picture}
    } % end of G3 subpicture

    \put(150,0) { % argument of \put is offset of G4 in figure
      \begin{picture}(124,110)(0,0) % start G4 subpicture
% 123,345,567,789,9AB,BCD,DEF,FG1,GHL,4IL,8JL,CKL.
        % octagon sides (G4)
        \put(35.15,0){\line(1,0){49.7}}
        \put(35.15,120){\line(1,0){49.7}}
        \put(0,35.15){\line(0,1){49.7}}
        \put(120,35.15){\line(0,1){49.7}}
        \put(0,35.15){\line(1,-1){35.15}}
        \put(0,84.85){\line(1,1){35.15}}
        \put(120,35.15){\line(-1,-1){35.15}}
        \put(120,84.85){\line(-1,1){35.15}}
        % internal lines
        \put(60,0){\line(1,6){12}}
        \put(0,60){\line(6,1){72}}
        \put(60,120){\line(1,-4){12}}
        \put(120,60){\line(-4,1){48}}

        % octagon vertices (G4)
        \put(35.15,0){\circle*{5}}
        \put(35.15,120){\circle*{5}}
        \put(84.85,0){\circle*{5}}
        \put(84.85,120){\circle*{5}}
        \put(0,35.15){\circle*{5}}
        \put(0,84.85){\circle*{5}}
        \put(120,35.15){\circle*{5}}
        \put(120,84.85){\circle*{5}}
        % octagon mid-sides
        \put(60,0){\circle*{5}}
        \put(60,120){\circle*{5}}
        \put(0,60){\circle*{5}}
        \put(120,60){\circle*{5}}
        \put(17.575,17.575){\circle*{5}}
        \put(17.575,102.425){\circle*{5}}
        \put(102.425,17.575){\circle*{5}}
        \put(102.425,102.425){\circle*{5}}
        % internal nodes
        \put(72,72){\circle*{5}}
        \put(36,66){\circle*{5}}
        \put(66,36){\circle*{5}}
        \put(96,66){\circle*{5}}
        \put(66,96){\circle*{5}}

      \end{picture}
    } % end of G4 subpicture

  \end{picture}
  \caption{(a)
   Greechie diagram for OML G3;
   (b) Greechie diagram for OML G4.
\label{fig:oag34}}
\end{figure}
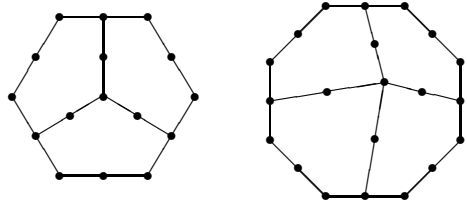

The converse of Lemma~(\ref{lem:god-iimpliesn-1}) does not hold.
Indeed, the {\it wagon wheel} OMLs G$n$, $n=3,4,5,\ldots$, are related to
the $n$-Go equations in the sense that G$n$ violates $n$-Go but (for
$n\ge 4$) not $(n-1)$-Go.  In Fig.~\ref{fig:oag34} we show examples G3
and G4; for larger $n$ we construct G$n$ by adding more ``spokes'' in
the obvious way (according to the general scheme described in
\cite{godow}).

Megill and Pavi\v ci\'c \cite{mpoa99} explored many properties and
consequences of the $n$-Go equations.  The theorems below, whose proofs
we omit and can be found in the cited reference, summarize some of the
results their work.

\begin{theorem}\label{th:god-th1} An {\em OL} in which any of
the following equations holds is an {\em $n$GO} and vice versa.
\begin{eqnarray}
a_1{\buildrel\gamma\over\equiv}a_n
\!\!&=&\!\!(a_1\equiv a_2)\cap(a_2\equiv a_3)\cap\cdots\cap(a_{n-1}\equiv a_n)
\label{eq:godow1c}\qquad\\
a_1{\buildrel\gamma\over\equiv}a_n
\!\!&\le&\!\!a_1\to a_n,
\label{eq:godow1d}\\
(a_1{\buildrel\gamma\over\equiv}a_n)\cap(a_1\cup a_2\cup\cdots\cup a_n)
\!\!&=&\!\!a_1\cap a_2\cap\cdots\cap a_n
\label{eq:godow1e}
\end{eqnarray}
\end{theorem}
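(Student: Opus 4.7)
The plan is to handle the three equations in turn, establishing both directions of equivalence with $n$GO. In each case the ``$\Rightarrow n$GO'' half derives $n$-Go purely equationally in an OL, with orthomodularity then inherited from the failure of the equation in the hexagon $O6$ (just as in the proof of Theorem~\ref{th:god-eq}, where one finds an assignment such as $a_1=0,a_2=a,a_3=b$ in $O6$ making $a_1{\buildrel\gamma\over\equiv}a_3\neq a_3{\buildrel\gamma\over\equiv}a_1$). For the ``$n$GO $\Rightarrow$'' halves I work inside an OML with $n$-Go and freely use the Foulis--Holland distributive theorem: when one of $x,y,z$ commutes with the other two, $x\cap(y\cup z)=(x\cap y)\cup(x\cap z)$.

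I would begin with (\ref{eq:godow1d}) as the cleanest case. The forward direction is immediate: $n$-Go gives $a_1{\buildrel\gamma\over\equiv}a_n = a_n{\buildrel\gamma\over\equiv}a_1$, and $a_1\to a_n$ is one of the meet factors of the right-hand side. For the converse, I substitute each of the $n$ cyclic relabellings $(a_{k+1},\ldots,a_n,a_1,\ldots,a_k)$ into (\ref{eq:godow1d}). Since $a_1{\buildrel\gamma\over\equiv}a_n$ is cyclically invariant, each substitution yields $a_1{\buildrel\gamma\over\equiv}a_n\le a_{k+1}\to a_k$; meeting over all $k$ produces $a_1{\buildrel\gamma\over\equiv}a_n\le a_n{\buildrel\gamma\over\equiv}a_1$, and the same argument applied to the reversed sequence yields the opposite inequality.

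For (\ref{eq:godow1c}), the direction $(\ref{eq:godow1c})\Rightarrow n$-Go is essentially a syntactic inspection: the right-hand side $E=(a_1\equiv a_2)\cap\cdots\cap(a_{n-1}\equiv a_n)$ is invariant under the reversal $a_i\leftrightarrow a_{n-i+1}$ because $\equiv$ is symmetric, forcing $a_1{\buildrel\gamma\over\equiv}a_n = a_n{\buildrel\gamma\over\equiv}a_1$. For $n$-Go $\Rightarrow(\ref{eq:godow1c})$ inside an OML, Foulis--Holland collapses $(a_i\to a_{i+1})\cap(a_{i+1}\to a_i)$ to $(a_i'\cap a_{i+1}')\cup(a_i\cap a_{i+1})=a_i\equiv a_{i+1}$ (the term $a_i\cap a_{i+1}$ commutes with both $a_i'$ and $a_{i+1}'$); combined with $n$-Go this delivers $a_1{\buildrel\gamma\over\equiv}a_n\le E$. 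For the reverse inclusion I chain $E\cap a_1\le a_1\cap a_2\le\cdots\le a_n$ using the Foulis--Holland simplification $(a_i'\cup(a_i\cap a_{i+1}))\cap a_i=a_i\cap a_{i+1}$, and symmetrically $E\cap a_1'\le a_n'$; the remaining step of converting these two bounds into $E\le a_n\to a_1$ is the main obstacle, since it rests on $E$ commuting with $a_1$. My plan for this commutativity is to iterate Foulis--Holland through the meet factors $a_i\equiv a_{i+1}$ (each of which already commutes with $a_i$ and $a_{i+1}$) and close the cycle by invoking the equivalent form (\ref{eq:godow1d}) already proved.

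Equation (\ref{eq:godow1e}) then falls out of the same machinery. The $\supseteq$ direction is immediate, because $a_1\cap\cdots\cap a_n$ lies below every $a_i\to a_j$ and below the join. For $\subseteq$, the cyclic chain arguments give $(a_1{\buildrel\gamma\over\equiv}a_n)\cap a_i\le a_1\cap\cdots\cap a_n$ and $(a_1{\buildrel\gamma\over\equiv}a_n)\cap a_i'\le(a_1\cup\cdots\cup a_n)'$ for each $i$; once $a_1{\buildrel\gamma\over\equiv}a_n$ is known to commute with each $a_i$ (as established during (\ref{eq:godow1c})), Foulis--Holland distributes it over $a_1\cup\cdots\cup a_n$ and closes the equality. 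To derive $n$-Go from (\ref{eq:godow1e}), I rely on its failure in $O6$ (for $n=3$, $a_1=a_2=b$, $a_3=a$ gives left-hand side $b$ and right-hand side $a$) to force orthomodularity, and then apply (\ref{eq:godow1e}) to cyclically permuted and strategically substituted variables so as to reduce it to (\ref{eq:godow1d}) or (\ref{eq:godow1c}); the Godowski symmetry then follows from the earlier arguments.
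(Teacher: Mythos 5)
The paper itself omits the proof of this theorem (it defers to Ref.~\cite{mpoa99}), so your proposal can only be judged on its own merits. Much of it is fine: the equivalence of (\ref{eq:godow1d}) with $n$GO via cyclic relabellings is correct and complete; the derivation of $n$-Go from (\ref{eq:godow1c}) by reversal symmetry is correct; $\gamma\le E$ from $n$-Go via $(a\to b)\cap(b\to a)=a\equiv b$ is correct; and the O6 evaluation for (\ref{eq:godow1e}) does force orthomodularity (via the standard fact that an OL is an OML iff it has no O6 subalgebra). But there are two genuine gaps. First, the step you yourself flag as ``the main obstacle'' --- that $E=(a_1\equiv a_2)\cap\cdots\cap(a_{n-1}\equiv a_n)$ commutes with $a_1$ --- is not established by your plan. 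Naively iterating Foulis--Holland fails because the factors $a_j\equiv a_{j+1}$ with $j\ge 2$ need not commute with $a_1$, and invoking (\ref{eq:godow1d}) cannot help since it concerns $\gamma$, not $E$. What actually closes this step is a pure OML induction along the chain: setting $E_k=\bigcap_{i\le k}(a_i\equiv a_{i+1})$, one shows $E_k=(E_k\cap a_1\cap\cdots\cap a_{k+1})\cup(E_k\cap a_1'\cap\cdots\cap a_{k+1}')$, using at each step that $E_k$ and $a_{k+1}\equiv a_{k+2}$ both commute with the shared variable $a_{k+1}$ (equivalently, the OML transitivity $(a\equiv b)\cap(b\equiv c)\le a\equiv c$). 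This yields both $E\le a_1\equiv a_n\le a_n\to a_1$, i.e.\ $E\le a_1{\buildrel\gamma\over\equiv}a_n$ in \emph{every} OML (no $n$-Go needed, so ``closing the cycle'' with (\ref{eq:godow1d}) is a misdiagnosis), and the commutativity of $\gamma=E$ with every $a_i$ that your argument for (\ref{eq:godow1e}) then relies on.

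Second, and more seriously, the direction ``(\ref{eq:godow1e}) holds in an OL $\Rightarrow$ $n$GO'' is essentially missing. After orthomodularity, you propose to ``apply (\ref{eq:godow1e}) to cyclically permuted and strategically substituted variables so as to reduce it to (\ref{eq:godow1d}) or (\ref{eq:godow1c}),'' but no such substitution is exhibited, and this is not a routine step: the obvious candidates collapse to OML trivialities. For instance, replacing $a_n$ by $a_n\cup(a_1\cup\cdots\cup a_n)'$ to force the join to $1$, or substituting $a_i\to a_n$ for $a_i$, turns (\ref{eq:godow1e}) into identities already valid in every OML, so nothing beyond OML is extracted. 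Since (\ref{eq:godow1e}) must genuinely fail in some OMLs (otherwise every OML would satisfy $n$-Go, contradicting the wagon-wheel counterexamples G$n$), the passage from (\ref{eq:godow1e}) back to $n$-Go --- equivalently, to $\gamma\le a_{i+1}\to a_i$, or to the commutativity of $\gamma$ with its variables --- is the real content of this item and needs an explicit argument, which your sketch does not supply.
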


\begin{theorem}\label{th:god-th2}
In any {\rm $n$GO}, $n=3,4,5,\ldots$, the following relations hold.
\begin{eqnarray}
a_1{\buildrel\gamma\over\equiv}a_n
&\le&a_j\to a_k, \qquad 1\le j\le n,\ 1\le k\le n
\label{eq:godow1d2}
\end{eqnarray}
\end{theorem}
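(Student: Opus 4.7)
The plan is to reduce the claim, via the cyclic symmetry of the Godowski expression, to the special case $a_1{\buildrel\gamma\over\equiv}a_n \le a_1\to a_k$, and to obtain that case by chaining a transitivity law for the lattice biconditional $\equiv$ that is itself a consequence of 3GO.

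First I will establish, in any 3GO (and hence in any $n$GO by Lemma~\ref{lem:god-iimpliesn-1}), the transitivity $(x\equiv y)\cap(y\equiv z)\le x\equiv z$ for arbitrary $x,y,z$. Instantiating Eq.~(\ref{eq:godow1c}) at $n=3$ rewrites the 3-variable Godowski expression as $x{\buildrel\gamma\over\equiv}z=(x\equiv y)\cap(y\equiv z)$; Eq.~(\ref{eq:godow1d}) then gives $x{\buildrel\gamma\over\equiv}z\le x\to z$, while applying Eq.~(\ref{eq:godow1d}) to the reversed triple, via the 3-Go symmetry $x{\buildrel\gamma\over\equiv}z=z{\buildrel\gamma\over\equiv}x$, yields $\le z\to x$. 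Meeting the two bounds and invoking the OML identity $(u\to v)\cap(v\to u)=u\equiv v$ -- an easy consequence of the Foulis--Holland theorem, since $u\cap v$ commutes with both $u'$ and $v'$ -- delivers $(x\equiv y)\cap(y\equiv z)\le x\equiv z$.

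Next I iterate. A straightforward induction on $k$, at each step applying the transitivity just proved to the triple $a_1,a_{k-1},a_k$, gives $(a_1\equiv a_2)\cap(a_2\equiv a_3)\cap\cdots\cap(a_{k-1}\equiv a_k)\le a_1\equiv a_k$ for $2\le k\le n$. Combined with Eq.~(\ref{eq:godow1c}) applied in the ambient $n$GO, this yields $a_1{\buildrel\gamma\over\equiv}a_n\le a_1\equiv a_k\le a_1\to a_k$, which is the theorem for $j=1$. Finally, since the defining factor set $\{a_i\to a_{i+1}\mid 1\le i\le n-1\}\cup\{a_n\to a_1\}$ of $a_1{\buildrel\gamma\over\equiv}a_n$ is invariant under cyclic shift of indices, the Godowski identity is itself cyclically symmetric; relabeling so that $a_j$ plays the role of $a_1$ converts the bound for $j=1$ into $a_1{\buildrel\gamma\over\equiv}a_n\le a_j\to a_k$ for all $j,k\in\{1,\dots,n\}$.

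The hard part is the transitivity step. A direct proof of transitivity of $\to$ or $\equiv$ from the OML axioms alone is hopeless -- the inequality $(a\to b)\cap(b\to c)\le a\to c$ is already equivalent to distributivity -- so genuine input from the Godowski axiom is unavoidable. What unlocks the argument is the equivalent form Eq.~(\ref{eq:godow1c}), which converts 3-Go into exactly the transitivity assertion we need, together with the 3-Go symmetry supplying the second half of $\equiv$; the subsequent induction on $k$ and the cyclic relabeling are then routine bookkeeping.
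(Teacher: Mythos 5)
Your proof is correct, but note that the paper itself gives no proof of this theorem: it is stated as a result whose proof is omitted and deferred to Megill--Pavi\v ci\'c \cite{mpoa99}, so there is no in-paper argument to compare against. Your route is a clean in-house derivation from material already stated earlier in the section: you use Theorem~\ref{th:god-th1} (Eqs.~(\ref{eq:godow1c}) and (\ref{eq:godow1d})) together with Lemma~\ref{lem:god-iimpliesn-1} to get, in any $3$GO, the transitivity $(x\equiv y)\cap(y\equiv z)\le x\equiv z$ -- the key OML identity $(u\to v)\cap(v\to u)=u\equiv v$ you invoke is even stated explicitly in the paper right after Definition~\ref{def:god-equiv}, and your Foulis--Holland justification of it is sound. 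The induction then gives $a_1{\buildrel\gamma\over\equiv}a_n\le a_1\equiv a_k\le a_1\to a_k$, and the cyclic relabeling is legitimate because the $j=1$ case is a universally quantified lattice inequality and the factor set of $a_1{\buildrel\gamma\over\equiv}a_n$ is invariant under cyclic shifts, so the relabeled expression is literally the same meet. Two small remarks: you should say a word about the degenerate case $j=k$, where $a_j\to a_j=1$ makes the claim trivial (your induction only covers $2\le k\le n$); and be aware that your argument is a reduction of Theorem~\ref{th:god-th2} to Theorem~\ref{th:god-th1}, which in this chapter is likewise only cited, not proved -- that is acceptable as a matter of logical organization (Theorem~\ref{th:god-th1} precedes this one, and Eq.~(\ref{eq:godow1d}) is just the $j=1$, $k=n$ instance you then bootstrap), but it means your proof is not a derivation from the bare $n$-Go identities. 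Your side remark that transitivity of the Sasaki hook already forces distributivity is correct (taking $b=1$ gives $c\le a'\cup(a\cap c)$ for all $a,c$, which forces every pair to commute) and nicely motivates why the Godowski input is essential.
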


The $n$-Go equations can be equivalently expressed as inferences
involving $2n$ variables, as the following theorem shows.  In this form
they can be useful for certain kinds of proofs.
\begin{theorem}\label{th:go2n} Any {\em OML} in which
\begin{eqnarray}
\lefteqn{a_1\perp b_1\perp a_2\perp b_2\perp\ldots\perp a_n\perp b_n\perp a_1
    \qquad\Rightarrow} & & \nonumber \\
& & (a_1\cup b_1)\cap (a_2\cup b_2)\cap\cdots\cap(a_n\cup b_n)\le b_1\cup a_2
\label{eq:go2n}
\end{eqnarray}
holds is an {\em $n$GO} and vice versa.
\end{theorem}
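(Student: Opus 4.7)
The plan is to prove each direction by a targeted substitution that converts between the Sasaki arrows of $n$-Go and the orthogonal joins $a_i\cup b_i$ of the inference. The forward direction should be routine; the reverse is the main obstacle, and I would finish it by a Foulis--Holland computation after a first bound supplied by $n$-Go.

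For $(\Rightarrow)$, assume the inference and let $x_1,\dots,x_n$ be arbitrary (all indices cyclic mod $n$). I would substitute $a_i:=x_i'$, $b_i:=x_i\cap x_{i+1}$; orthogonality is automatic since $x_i\cap x_{i+1}$ lies under both $x_i$ and $x_{i+1}$. A direct check gives $a_i\cup b_i=x_i\to x_{i+1}$ and $b_1\cup a_2=x_2\to x_1$, so the inference becomes $\bigcap_i(x_i\to x_{i+1})\le x_2\to x_1$. Rerunning the substitution with indices shifted cyclically by any $s$ replaces the right-hand side by $x_{s+2}\to x_{s+1}$; together with the trivial $\le x_1\to x_n$ this gives $x_1{\buildrel\gamma\over\equiv}x_n\le x_n{\buildrel\gamma\over\equiv}x_1$, and the reverse inequality follows from the symmetric substitution. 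Equality is the $n$-Go equation.

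For $(\Leftarrow)$, assume $n$-Go and take $a_i,b_i$ with the cyclic orthogonalities. Set $c:=a_1'\cap a_2'=(a_1\cup a_2)'$, so $b_1\le c$, $a_1,a_2\le c'$, and $a_2\perp c$. Substituting $x_i:=a_i'$ in $n$-Go gives $x_i\to x_{i+1}=a_i\cup(a_i'\cap a_{i+1}')\ge a_i\cup b_i$, and since $x_2\to x_1$ is a conjunct of the reversed side of the $n$-Go identity, one obtains $\bigcap_i(a_i\cup b_i)\le x_2\to x_1=a_2\cup c$. The trivial bound $\bigcap_i(a_i\cup b_i)\le a_1\cup b_1\le c'\cup b_1$ (using $a_1\le c'$) then yields
\[
  \bigcap_i(a_i\cup b_i)\le(a_2\cup c)\cap(c'\cup b_1).
\]

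The main obstacle is showing the right-hand side equals $a_2\cup b_1$. My plan is to apply Foulis--Holland to the triple $\{a_2,c,c'\cup b_1\}$: $a_2$ commutes with $c$ by orthogonality, and with $c'\cup b_1$ because $a_2\le c'$, $a_2\perp b_1$ and $c'\perp b_1$ (the last since $b_1\le c$). Distributing then gives $(a_2\cup c)\cap(c'\cup b_1)=(a_2\cap(c'\cup b_1))\cup(c\cap(c'\cup b_1))=a_2\cup b_1$, using $a_2\le c'$ for the first piece and the identities $c\cap c'=0$, $c\cap b_1=b_1$ for the second. This closes the inference and completes the proof.
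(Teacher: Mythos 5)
Your argument is correct in both directions, and since the chapter itself omits the proof of Theorem~\ref{th:go2n} (it is among the results whose proofs are deferred to Ref.~\cite{mpoa99}), I judge it on its own terms; it is essentially the substitution technique one would expect from that source. In the forward direction, the substitution $a_i:=x_i'$, $b_i:=x_i\cap x_{i+1}$ does satisfy the cyclic orthogonality chain, turns $a_i\cup b_i$ into $x_i\to x_{i+1}$ and $b_1\cup a_2$ into $x_2\to x_1$, and the cyclic shifts give $x_1{\buildrel\gamma\over\equiv}x_n\le x_{s+2}\to x_{s+1}$ for every $s$, which is exactly the full set of conjuncts of $x_n{\buildrel\gamma\over\equiv}x_1$. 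One small slip: you call the bound $x_1{\buildrel\gamma\over\equiv}x_n\le x_1\to x_n$ ``trivial,'' but it is not an OML theorem---by Eq.~(\ref{eq:godow1d}) it is itself equivalent to $n$GO; no harm results, however, because the shift $s=n-1$ of your own substitution already delivers precisely this inequality, so the step is covered without any appeal to triviality. In the reverse direction, the two bounds $\bigcap_i(a_i\cup b_i)\le x_2\to x_1=a_2\cup c$ (via $x_i:=a_i'$, $a_i\cup b_i\le x_i\to x_{i+1}$, and the fact that $x_2\to x_1$ is a conjunct of the reversed Godowski identity) and $\bigcap_i(a_i\cup b_i)\le c'\cup b_1$ are correct, and the closing Foulis--Holland computation is legitimate: $a_2$ commutes with $c$ (by orthogonality) and with $c'\cup b_1$ (since $a_2\le c'$ and $a_2\perp b_1$), so the triple distributes, giving $(a_2\cup c)\cap(c'\cup b_1)=a_2\cup\bigl(c\cap(c'\cup b_1)\bigr)=a_2\cup b_1$, the inner step using $c\,C\,c'$ and $c\,C\,b_1$ (from $b_1\le c$). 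This closes the inference, so the proposal stands as a complete proof of the equivalence.
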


Finally, the following theorem shows a transitive-like property
that can be derived from the Godowski equations.

\begin{theorem}\label{th:god-trans}
The following equation holds in {\em $n$GO}, where $i,j\ge 1$ and
$n=\max(i,j,3)$.
\begin{eqnarray}
(a_1{\buildrel\gamma\over\equiv}a_i)
\cap(a_i{\buildrel\gamma\over\equiv}a_j)
&\le&a_1{\buildrel\gamma\over\equiv}a_j
\end{eqnarray}
\end{theorem}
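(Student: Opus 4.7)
\noindent\emph{Proof plan.} The key observation is that by Theorem~\ref{th:god-th1}, Eq.~(\ref{eq:godow1c}), in any $k$GO the cyclic Godowski identity collapses to a chain of biconditionals,
\begin{eqnarray*}
a_1{\buildrel\gamma\over\equiv}a_k \;=\; (a_1\equiv a_2)\cap(a_2\equiv a_3)\cap\cdots\cap(a_{k-1}\equiv a_k).
\end{eqnarray*}
The plan is to rewrite each of the three cyclic Godowski identities appearing in the claimed inequality in this ``chain'' form and then to observe that the chains produced from the two terms on the left fit together to reproduce exactly the chain on the right, giving in fact equality.

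More concretely, set $n=\max(i,j,3)$ and work in $n$GO. By iterating Lemma~\ref{lem:god-iimpliesn-1}, the variety $n$GO is contained in every $k$GO with $3\le k\le n$; since the lengths $i$, $j-i+1$, and $j$ are all at most $n$, the rewrite above is available for each of the three cycles in question. Applying it to them I obtain
\begin{eqnarray*}
a_1{\buildrel\gamma\over\equiv}a_i &=& (a_1\equiv a_2)\cap\cdots\cap(a_{i-1}\equiv a_i),\\
a_i{\buildrel\gamma\over\equiv}a_j &=& (a_i\equiv a_{i+1})\cap\cdots\cap(a_{j-1}\equiv a_j),\\
a_1{\buildrel\gamma\over\equiv}a_j &=& (a_1\equiv a_2)\cap\cdots\cap(a_{j-1}\equiv a_j),
\end{eqnarray*}
and meeting the right-hand sides of the first two lines simply concatenates the two chains to produce the right-hand side of the third. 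Hence $(a_1{\buildrel\gamma\over\equiv}a_i)\cap(a_i{\buildrel\gamma\over\equiv}a_j)=a_1{\buildrel\gamma\over\equiv}a_j$, which is stronger than the claimed $\le$.

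I expect the only real nuisance to be the degenerate cases in which one of the lengths $i$, $j-i+1$, or $j$ falls below $3$, so that Eq.~(\ref{eq:godow1c}) does not apply as stated. These are handled by the explicit definitions given just before Theorem~\ref{th:god-eq}: a length-$1$ cyclic Godowski identity equals $1$, and a length-$2$ one equals the single biconditional $a_p\equiv a_q$. Both of these values slot naturally into the ``chain of biconditionals'' pattern---as an empty intersection and as a one-term intersection, respectively---so the combinatorial merge above still produces $a_1{\buildrel\gamma\over\equiv}a_j$. Thus the only obstacle is careful bookkeeping in these edge cases, and no appeal to states, to the $n$OA machinery, or to any further lattice identity beyond Theorem~\ref{th:god-th1} is needed.
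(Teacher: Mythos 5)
Your route is sound, and it is the natural one: note that this chapter never actually prints a proof of Theorem~\ref{th:god-trans}---it is among the results whose proofs are omitted and deferred to Ref.~\cite{mpoa99}---so the fair comparison is with the machinery the chapter does provide, namely Theorem~\ref{th:god-th1} and Lemma~\ref{lem:god-iimpliesn-1}, and your argument uses exactly that: descend from $n$GO to $k$GO for every $3\le k\le n$ by iterating the lemma, rewrite each cyclic Godowski expression via Eq.~(\ref{eq:godow1c}) as a chain of biconditionals, and compare conjuncts. Two caveats. First, your concatenation step, and hence your claim of outright equality, silently assumes $1\le i\le j$: you write the middle chain as $(a_i\equiv a_{i+1})\cap\cdots\cap(a_{j-1}\equiv a_j)$. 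The theorem also allows $j<i$, where $a_i{\buildrel\gamma\over\equiv}a_j$ is the cyclic expression on $a_i,a_{i-1},\ldots,a_j$ (length $|j-i|+1\le n$); after relabelling in Eq.~(\ref{eq:godow1c}) and using the commutativity of $\equiv$ it becomes $(a_j\equiv a_{j+1})\cap\cdots\cap(a_{i-1}\equiv a_i)$, whose conjuncts are already contained among those of the first chain. In that case the left-hand side collapses to the chain from $a_1$ to $a_i$, which is $\le$ the chain from $a_1$ to $a_j$ simply because every conjunct of the latter occurs in the former---so the inequality still follows by the same method, but your ``in fact equality'' is generally false there (it is typically strict). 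Second, your handling of the degenerate lengths via $a_i{\buildrel\gamma\over\equiv}a_i=1$ (empty meet) and $a_i{\buildrel\gamma\over\equiv}a_{i+1}=a_i\equiv a_{i+1}$ is exactly what the definitions preceding Theorem~\ref{th:god-eq} were set up for; just make sure the $j<i$ cases with $j\in\{1,2\}$ are folded into the subset-of-conjuncts argument rather than the concatenation picture. With those small adjustments the proof is complete and requires nothing beyond Theorem~\ref{th:god-th1}, which is precisely what the statement's restriction $n=\max(i,j,3)$ is calibrated for.
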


While the wagon wheel OMLs characterize $n$GO equations in an elegant
way, they are not the smallest OMLs that are not $n$GOs.  Smaller OMLs
can be used to distinguish $n+1$-Go from $n$-Go, which can improve
computational efficiency.  For example, the Peterson OML, G4s,
Fig.~\ref{fig:oag6}$\>$(a), is the smallest that violates 4-Go but
not 3-Go; it has 32 nodes vs.\ 44 nodes in the wagon wheel G4 in
Fig.~\ref{fig:oag34}$\>$(b).
Lattice G5s, Fig.~\ref{fig:oag6}$\>$(b), with 42 nodes (vs.~54 nodes in
G5), is the smallest that violates 5-Go but not 4-Go.
OML G6s2, Fig.~\ref{fig:oag6}$\>$(c) is one of three smallest that
violates 6-Go but not 5-Go, with 44 nodes (vs.\ 64 nodes) in G6.
Lattice G7s1, Fig.~\ref{fig:oag6}$\>$(d), is one of several
smallest we obtained to violate 7-Go but not 6-Go. They both have
50 nodes, respectively (vs.~74 nodes in G7).

\begin{figure}[htbp]\centering
  % Set unitlength to default in case it's not
  \setlength{\unitlength}{0.8pt}
  \begin{picture}(250,115)(0,0)

\setlength{\unitlength}{0.7pt}
   \put(-115,13) { % argument of \put is offset of Peterson in figure
      \begin{picture}(124,110)(0,0) % start Peterson subpicture
% 123,345,567,789,9AB,BC1,2E8,4FA,6DC,DEF.
        % hexagon sides (Peterson)
        \put(32.2,0){\line(1,0){55.6}}
        \put(32.2,100){\line(1,0){55.6}}
        \put(2.3,50){\line(3,5){29.9}}
        \put(117.7,50){\line(-3,5){29.9}}
        \put(2.3,50){\line(3,-5){29.9}}
        \put(117.7,50){\line(-3,-5){29.9}}
        % internal lines
        \put(60,100){\line(0,-1){100}}
        \put(17.25,25){\line(5,3){84.8}}
        \put(102.75,25){\line(-5,3){84.8}}
        \put(34.71,65.6){\line(1,0){50.58}}

        % hexagon vertices (Peterson)
        \put(2.3,50){\circle*{5}}
        \put(117.7,50){\circle*{5}}
        \put(87.8,0){\circle*{5}}
        \put(87.8,100){\circle*{5}}
        \put(32.2,0){\circle*{5}}
        \put(32.2,100){\circle*{5}}
        % hexagon mid-sides
        \put(60,0){\circle*{5}}
        \put(60,100){\circle*{5}}
        \put(17.25,25){\circle*{5}}
        %\put(17.25,75){\circle*{5}} original symmetrical
        \put(17.85,76){\circle*{5}}
        \put(102.75,25){\circle*{5}}
        %\put(102.75,75){\circle*{5}} original symmetrical
        \put(102.15,76){\circle*{5}}
        % internal nodes
        \put(34.71,65.6){\circle*{5}}
        \put(85.29,65.6){\circle*{5}}
        \put(60,65.6){\circle*{5}}
      \end{picture}
    } % end of Peterson subpicture

\setlength{\unitlength}{0.64pt}
    \put(23,10) { % argument of \put is offset of G6s2 in figure
% Pass 5go, fail 6go #46
% FGL,EHL,BCK,ADJ,9AF,8BE,79K,68J,67I,5DH,4CG,35K,24J,1IL,123.
% 123,1IL,EHL,8BE,BCK,79K,9AF,ADJ,24J.  <- this is the outer loop
      \begin{picture}(100,90)(0,0) % start G6s2 subpicture
        % nonagon sides (G6s2)
        \put(39.48,3.62){\line(1,0){41.04}}
        \put(39.48,3.62){\line(-6,5){31.44}}
        \put(80.52,3.62){\line(6,5){31.44}}
        \put(8.04,30){\line(-1,6){6.75}}
        \put(111.96,30){\line(1,6){6.75}}
        \put(0.91,70.42){\line(3,5){21.2}}
        \put(119.09,70.42){\line(-3,5){21.2}}
        %\put(21.43,105.96){\line(3,1){38.57}}
        %\put(98.57,105.96){\line(-3,1){38.57}}
        \put(22.13,105.96){\line(3,1){38}}
        \put(97.87,105.96){\line(-3,1){38}}
        % internal lines
        \put(60,3.62){\line(0,1){114.98}}
        %\put(40.715,112.98){\line(3,-5){57}}
        %\put(79.285,112.98){\line(-3,-5){57}}
        \put(40.715,112.3){\line(3,-5){56.7}}
        \put(79.285,112.3){\line(-3,-5){56.7}}
        \put(8.04,30){\line(5,3){99.5}}
        \put(111.96,30){\line(-5,3){99.5}}
        \put(4.475,50.21){\line(1,0){111.05}}

        % nonagon vertices (G6s2)
        \put(80.52,3.62){\circle*{5}}
        \put(39.48,3.62){\circle*{5}}
        \put(111.96,30){\circle*{5}}
        \put(8.04,30){\circle*{5}}
        \put(119.09,70.42){\circle*{5}}
        \put(0.91,70.42){\circle*{5}}
        %\put(98.57,105.96){\circle*{5}}
        %\put(21.43,105.96){\circle*{5}}
        \put(97.2,105.8){\circle*{5}}
        \put(21.8,105.8){\circle*{5}}
        %\put(60,120){\circle*{5}}
        \put(60,118.6){\circle*{5}}
        % nonagon mid-sides
        \put(60,3.62){\circle*{5}}
        %\put(96.24,16.81){\circle*{5}}
        %\put(23.76,16.81){\circle*{5}}
        \put(97.24,17.81){\circle*{5}}
        \put(22.76,17.81){\circle*{5}}
        \put(115.525,50.21){\circle*{5}}
        \put(4.475,50.21){\circle*{5}}
        %\put(108.83,88.19){\circle*{5}}
        %\put(11.17,88.19){\circle*{5}}
        \put(107.83,89.19){\circle*{5}}
        \put(12.17,89.19){\circle*{5}}
        %\put(79.285,112.98){\circle*{5}}
        %\put(40.715,112.98){\circle*{5}}
        \put(79.285,112.5){\circle*{5}}
        \put(40.715,112.5){\circle*{5}}
        % internal nodes
        \put(60,50.21){\circle*{5}}
        \put(51.3,66.6){\circle*{5}}
        \put(68.7,66.6){\circle*{5}}

      \end{picture}
    } % end of G6s2 subpicture

\setlength{\unitlength}{0.6pt}
    \put(180,15) { % argument of \put is offset of G6s1 in figure
% Pass 5go, fail 6go
% (21/14/44) failed: FGI,EHJ,9AF,8BE,7CH,6DG,3BD,357,2AC,246,189,145,1KL,GHL.
% 189,8BE,3BD,6DG,GHL,7CH,2AC,9AF.  <- this is the outer loop
      \begin{picture}(100,90)(0,0) % start G6s1 subpicture
        % octagon sides (G6s1)
        \put(35.15,0){\line(1,0){49.7}}
        \put(35.15,120){\line(1,0){49.7}}
        \put(0,35.15){\line(0,1){49.7}}
        \put(120,35.15){\line(0,1){49.7}}
        \put(0,35.15){\line(1,-1){35.15}}
        \put(0,84.85){\line(1,1){35.15}}
        \put(120,35.15){\line(-1,-1){35.15}}
        \put(120,84.85){\line(-1,1){35.15}}
        % internal lines
        \put(60,120){\line(0,-1){120}}
        \put(0,60){\line(5,2){103.5}}
        \put(120,60){\line(-5,2){103.5}}
        %\put(17.575,17.575){\line(2,3){68}} orig sym
        \put(17,18.2){\line(2,3){67.8}}
        %\put(102.425,17.575){\line(-2,3){68}} orig sym
        \put(103,18.2){\line(-2,3){67.8}}
        \put(60,0){\line(-1,4){23.3}}

        % octagon vertices (G6s1)
        \put(35.15,0){\circle*{5}}
        \put(35.15,120){\circle*{5}}
        \put(84.85,0){\circle*{5}}
        \put(84.85,120){\circle*{5}}
        \put(0,35.15){\circle*{5}}
        \put(0,84.85){\circle*{5}}
        \put(120,35.15){\circle*{5}}
        \put(120,84.85){\circle*{5}}
        % octagon mid-sides
        \put(60,0){\circle*{5}}
        \put(60,120){\circle*{5}}
        \put(0,60){\circle*{5}}
        \put(120,60){\circle*{5}}
        %\put(17.575,17.575){\circle*{5}} orig sym
        \put(17,18.2){\circle*{5}}
        %\put(17.575,102.425){\circle*{5}} original symmetrical
        \put(16.575,101.425){\circle*{5}}
        %\put(102.425,17.575){\circle*{5}} orig sym
        \put(103,18.2){\circle*{5}}
        %\put(102.425,102.425){\circle*{5}} original symmetrical
        \put(103.425,101.425){\circle*{5}}
        % internal nodes
        \put(60,38){\circle*{5}}
        \put(40.8,76.2){\circle*{5}}
        \put(36.7,93.2){\circle*{5}}
        \put(30,38){\circle*{5}}
        \put(90,38){\circle*{5}}
      \end{picture}
    } % end of G6s1 subpicture

\setlength{\unitlength}{0.65pt}
    \put(300,10) { % argument of \put is offset of G7s1 in figure
      \begin{picture}(100,90)(0,0) % start G7s1 subpicture
        % nonagon sides (G7s1)
        \put(39.48,3.62){\line(1,0){41.04}}
        \put(39.48,3.62){\line(-6,5){31.44}}
        \put(80.52,3.62){\line(6,5){31.44}}
        \put(8.04,30){\line(-1,6){6.75}}
        \put(111.96,30){\line(1,6){6.75}}
        \put(0.91,70.42){\line(3,5){21.2}}
        \put(119.09,70.42){\line(-3,5){21.2}}
        \put(22.13,105.96){\line(3,1){38}}
        \put(97.87,105.96){\line(-3,1){38}}
        % internal lines
        \put(12.17,89.19){\line(1,0){95.3}}
        \put(4.475,50.21){\line(5,3){93.1}}
        \put(4.475,50.21){\line(1,2){15.8}}
        \put(115.525,50.21){\line(-3,1){95}}
        \put(28,64){\line(3,-2){69}}
        \put(39.48,56){\line(5,3){55.3}}
        \put(39.48,3.62){\line(0,1){108}}
        % nonagon vertices
        \put(80.52,3.62){\circle*{5}}
        \put(39.48,3.62){\circle*{5}}
        \put(111.96,30){\circle*{5}}
        \put(8.04,30){\circle*{5}}
        \put(119.09,70.42){\circle*{5}}
        \put(0.91,70.42){\circle*{5}}
        \put(97.2,105.8){\circle*{5}}
        \put(21.8,105.8){\circle*{5}}
        \put(60,118.6){\circle*{5}}
        % nonagon mid-sides
        \put(60,3.62){\circle*{5}}
        \put(97.24,17.81){\circle*{5}}
        \put(22.76,17.81){\circle*{5}}
        \put(115.525,50.21){\circle*{5}}
        \put(4.475,50.21){\circle*{5}}
        \put(107.83,89.19){\circle*{5}}
        \put(12.17,89.19){\circle*{5}}
        \put(79.285,112.5){\circle*{5}}
        \put(39.48,112.3){\circle*{5}}
        % internal nodes
        \put(94.3,89.19){\circle*{5}}
        \put(60.4,68.5){\circle*{5}}
        \put(39.48,56){\circle*{5}}
        \put(28,64){\circle*{5}}
        \put(20.5,82){\circle*{5}}
        \put(13.4,67.6){\circle*{5}}
      \end{picture}
     % end of G7s1 subpicture
}
  \end{picture}
  \caption{\ (a) OML G4s; \quad(b)
    OML G5s; \quad (c) OML G6s;
\quad (d) OML G7s.\
\label{fig:oag6}}
\end{figure}
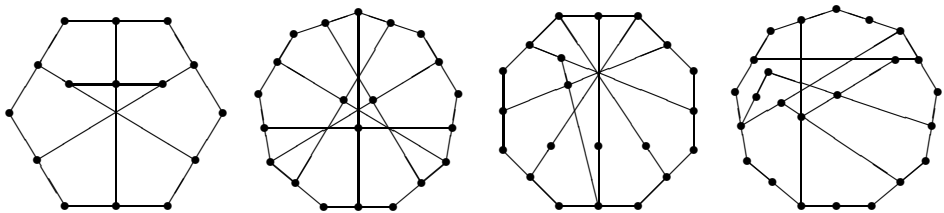

\section{States: Mayet-Godowski Equations}
\label{sec:mge}

In 1985, Ren\'e Mayet \cite{mayet85} described an equational variety of
lattices, which he called $OM_S^*$, that included all Hilbert lattices
and were included in the $n$GO varieties (found by
Godowski) that we described in the previous section.  In 1986, Mayet
\cite{mayet86} displayed several examples of equations that hold in this
new variety.  However, Megill and Pavi{\v c}i{\'c} \cite{mpoa99} showed
that all of Mayet's equational examples can be derived in $n$GO for some
$n$.  Thus it remained unclear whether Mayet's variety was strictly
contained in the $n$GOs.

In this section, we will show that Mayet's variety, which we will call
MGO, is indeed strictly contained in all $n$GOs
(Theorem~\ref{th:mgo-lt-ngo}).  We will do this by exhibiting an equation
that holds in his variety (and thus in all Hilbert lattices) but cannot
be derived in any $n$GO, following Megill and Pavi{\v c}i{\'c}
\cite{mp-gen-godowski06-arXiv}.

We will also describe a general family of equations that hold in all
Hilbert lattices and contains the new equation, and we will define a
simplified notation for representing these equations.

We call the equations in this family {\em Mayet-Godowski
equations} and, in Theorem~\ref{th:mge}, prove that they hold in all
Hilbert lattices.\footnote{A family of equations equivalent to the
family MGE, with a different presentation, was given by Mayet as
$E(Y_2)$ on p.~183 of \cite{mayet86}.}

\begin{definition}\label{def:gge}
A {\em Mayet-Godowski equation} ({\rm MGE}) is an equality
with $n\ge 2$ conjuncts on each side:
\begin{eqnarray}
t_1 \cap \cdots\cap t_n = u_1 \cap \cdots\cap u_n
\end{eqnarray}
where each conjunct $t_i$ (or $u_1$) is a term consisting of
either a variable or a disjunction of two or more distinct
variables:
\begin{eqnarray}
t_i = a_{i,1}\cup \cdots\cup a_{i,p_i}\qquad \mbox{i.e. $p_i$ disjuncts}\\
u_i = b_{i,1}\cup \cdots\cup b_{i,q_i}\qquad \mbox{i.e. $q_i$ disjuncts}
\end{eqnarray}
and where the following conditions are imposed on the set of variables
in the equation:
\begin{enumerate}
\item{All variables in a given term $t_i$ or $u_i$ are
      mutually orthogonal.}
\item{ Each variable occurs the same number of times on each side of
       the equality.}
\end{enumerate}
\end{definition}

We will call a lattice in which all MGEs hold an MGO; i.e., MGO is the
class (equational variety) of all lattices in which all MGEs hold.

\begin{lemma}\label{lem:mge1}
In any {\rm OL},
\begin{eqnarray}
a\perp b\quad \& \quad a\perp c \quad
   \Rightarrow \quad a\perp (b\cup c)\label{eq:mge1}
\end{eqnarray}
\end{lemma}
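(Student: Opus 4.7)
The plan is to unpack the definition of $\perp$ and reduce the claim to a routine property of the meet together with De Morgan's law, both of which hold in any ortholattice.

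First I would rewrite each hypothesis using $x\perp y\ {\buildrel\rm def\over =}\ x\le y'$: the assumptions become $a\le b'$ and $a\le c'$. Since the meet is the greatest lower bound, these two inequalities immediately give $a\le b'\cap c'$. Then I would apply De Morgan's law, which is available in any ortholattice, to rewrite $b'\cap c' = (b\cup c)'$. Chaining these yields $a\le (b\cup c)'$, which by definition is precisely $a\perp(b\cup c)$.

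There is essentially no obstacle here: the entire argument is a one-line chain of inequalities using only the order-theoretic characterization of the meet, De Morgan's law, and the definition of orthogonality. The only thing to be slightly careful about is that we are working in an OL (not a stronger structure), but both De Morgan and the greatest-lower-bound property of $\cap$ are part of the OL axioms, so nothing further is needed. I would write the proof as a short displayed chain, perhaps in the form
\[
a\le b'\ \text{and}\ a\le c'\ \Longrightarrow\ a\le b'\cap c' = (b\cup c)',
\]
followed by the remark that this is the definition of $a\perp(b\cup c)$.
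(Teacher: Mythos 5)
Your argument is correct: from $a\le b'$ and $a\le c'$ the greatest-lower-bound property gives $a\le b'\cap c'=(b\cup c)'$, which is exactly $a\perp(b\cup c)$. The paper dismisses this lemma with the single word ``Trivial,'' and your chain of inequalities is precisely the routine verification the authors had in mind, so there is nothing to compare beyond noting you spelled out what they omitted.
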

\begin{proof}
Trivial.
\end{proof}

\begin{lemma}\label{lem:mge2}
If $a_1,\ldots a_n$ are mutually orthogonal, then
\begin{eqnarray}
m(a_1)+\cdots+m(a_n) = m(a_1 \cup \cdots\cup a_n)\label{eq:mge2}
\end{eqnarray}
\end{lemma}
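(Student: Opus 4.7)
The plan is a straightforward induction on $n$, leveraging the binary additivity in Definition \ref{def:state} together with Lemma \ref{lem:mge1} to handle the step from $n-1$ to $n$.

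For the base case I would take $n=2$: since $a_1 \perp a_2$ by hypothesis, Definition \ref{def:state} directly gives $m(a_1 \cup a_2) = m(a_1) + m(a_2)$. (The case $n=1$ is a tautology, should one wish to start there.)

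For the inductive step, assume the identity for $n-1$ mutually orthogonal elements and suppose $a_1, \ldots, a_n$ are mutually orthogonal. I first want to verify that $a_1 \perp (a_2 \cup \cdots \cup a_n)$. I have $a_1 \perp a_i$ for every $i \ge 2$, so Lemma \ref{lem:mge1} gives $a_1 \perp (a_2 \cup a_3)$; applying Lemma \ref{lem:mge1} again with $b := a_2 \cup a_3$ and $c := a_4$ gives $a_1 \perp (a_2 \cup a_3 \cup a_4)$, and a secondary induction on the number of joined terms yields $a_1 \perp (a_2 \cup \cdots \cup a_n)$. Then Definition \ref{def:state} gives
\begin{eqnarray}
m(a_1 \cup a_2 \cup \cdots \cup a_n) = m(a_1) + m(a_2 \cup \cdots \cup a_n),\nonumber
\end{eqnarray}
and applying the main inductive hypothesis to the mutually orthogonal $a_2, \ldots, a_n$ rewrites the right-hand side as $m(a_1) + m(a_2) + \cdots + m(a_n)$, completing the step.

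There is no real obstacle here; the only point requiring a little care is the inner induction that promotes the binary conclusion of Lemma \ref{lem:mge1} to an orthogonality between $a_1$ and the full join $a_2 \cup \cdots \cup a_n$, so that the binary additivity of $m$ can be invoked.
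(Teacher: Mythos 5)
Your proof is correct and follows essentially the same route as the paper's: induction on $n$ with base case $n=2$, using Lemma \ref{lem:mge1} (iterated) to obtain orthogonality between one element and the join of the rest so that the binary additivity of Definition \ref{def:state} applies. The only cosmetic difference is that you peel off $a_1$ from the front while the paper peels off $a_n$ from the back, and you make the inner induction for the orthogonality explicit where the paper calls it ``obvious.''
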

\begin{proof}

For $n=2$,  $a_1\perp a_2$ implies
$m(a_1\cup a_2)=m(a_1)+m(a_2)$
by Definition~\ref{def:state}.

For $n=3$,
$a_1\perp a_2$ and $a_1\perp a_3$ imply
$a_1\perp (a_2\cup a_3)$ by Lemma~\ref{lem:mge1}.  So by
Definition~\ref{def:state}, $m(a_1\cup (a_2\cup a_3))=
m(a_1)+m(a_2\cup a_3)$.  Again by Definition~\ref{def:state},
$a_2\perp a_3$ implies $m(a_2\cup a_3)=m(a_2)+m(a_3)$.

For any $n>2$, we apply the obvious induction step to the
$n-1$ case:
$m((a_1\cup\cdots\cup a_{n-1})\cup a_n)=
m(a_1\cup\cdots\cup a_{n-1})+m(a_n)=m(a_1)+\cdots+m(a_{n-1})+ m(a_n)$.
\end{proof}

\begin{theorem}\label{th:mge}
A Mayet-Godowski equation holds in any ortholattice
 $\mathcal L$
ad\-mit\-ting a strong set of states and thus, in particular,
in any Hilbert lattice.
\end{theorem}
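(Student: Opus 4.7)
The plan is to mimic the state-counting argument from the proof of Theorem~\ref{th:god-eq}: fix an arbitrary state $m$ on $\mathcal{L}$, assume $m(t_1\cap\cdots\cap t_n)=1$, derive $m(u_i)=1$ for every $i$, and then appeal to the strong set of states to conclude $t_1\cap\cdots\cap t_n\le u_i$ for each~$i$. The reverse inequality will follow from the perfect symmetry of Definition~\ref{def:gge} in its two sides, and Theorem~\ref{hilb-strong-s} will immediately promote the ortholattice statement to Hilbert lattices.

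First I would unpack the hypothesis. By Eq.~(\ref{eq:state7}) applied to $m(t_1\cap\cdots\cap t_n)=1$ we get $m(t_i)=1$ for each $i$, and hence $\sum_{i=1}^n m(t_i)=n$. Condition~1 of Definition~\ref{def:gge} asserts that the disjuncts of each $t_i$ are mutually orthogonal, so Lemma~\ref{lem:mge2} gives $m(t_i)=\sum_{j=1}^{p_i} m(a_{i,j})$. Combining these yields $\sum_{i,j} m(a_{i,j})=n$.

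Next I would cash in Condition~2: every variable $x$ of the equation occurs the same number of times, say $k_x$, on each side. Collecting terms by variable,
\[
\sum_{i,j} m(a_{i,j})\;=\;\sum_{x} k_x\, m(x)\;=\;\sum_{i,j} m(b_{i,j}).
\]
Applying Lemma~\ref{lem:mge2} to each $u_i$ (again legitimated by Condition~1), we obtain $\sum_{i=1}^n m(u_i)=n$. Since $m(u_i)\le 1$ by Eq.~(\ref{eq:state5}), Eq.~(\ref{eq:state6}) forces $m(u_i)=1$ for every~$i$. We have therefore shown that, for every state $m$, $m(t_1\cap\cdots\cap t_n)=1$ implies $m(u_i)=1$. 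The strong set of states condition (Definition~\ref{def:strong}) then delivers $t_1\cap\cdots\cap t_n\le u_i$ for each $i$, and taking the meet over $i$ gives $t_1\cap\cdots\cap t_n\le u_1\cap\cdots\cap u_n$. Interchanging the roles of the $t_i$ and the $u_j$ (permissible because Definition~\ref{def:gge} imposes identical conditions on both sides) yields the reverse inequality, and hence equality.

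I do not foresee a serious obstacle; the argument is essentially the state-arithmetic trick from Theorem~\ref{th:god-eq}, reorganized around the occurrence-balance condition rather than the cyclic structure of the $n$-Go identity. The only delicate point is the bookkeeping: Condition~1 is exactly what is needed to invoke Lemma~\ref{lem:mge2} on every conjunct, while Condition~2 is precisely what allows the common sum $\sum_{x} k_x\, m(x)$ to translate $\sum_i m(t_i)=n$ into $\sum_i m(u_i)=n$. Making that dictionary explicit, as above, is the whole content of the proof.
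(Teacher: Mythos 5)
Your proposal is correct and follows essentially the same route as the paper's own proof: expand each $t_i$ via Lemma~\ref{lem:mge2} using the orthogonality condition, rearrange the resulting sum using the occurrence-balance condition, collapse back into the $u_i$, invoke Eq.~(\ref{eq:state6}) to get $m(u_i)=1$, and then use the strong set of states plus symmetry to obtain both inequalities. The only cosmetic difference is your explicit bookkeeping with the multiplicities $k_x$, which the paper leaves implicit.
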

\begin{proof}
Suppose that for some state $m$,  $m(t_1 \cap\cdots\cap t_n) = 1$.
Then by Eq.~(\ref{eq:state7}),
$m(t_1) =\cdots=m(t_n) = 1$.  So, $m(t_1)+\cdots+m(t_n) = n$.
Using Eq.~(\ref{eq:mge2}), we expand all disjuncts into
sums of states on individual variables:
\begin{eqnarray}
       m(a_{1,1})+\cdots+m(a_{n,p_n}) = n.   \nonumber
\end{eqnarray}
Now, using condition 2 of the MGE definition (``each variable occurs the
same number of times on each side of the equality''), we rearrange this
sum in the form
\begin{eqnarray}
       m(b_{1,1})+\cdots+m(b_{n,q_n}) = n.  \nonumber
\end{eqnarray}
Using Eq.~(\ref{eq:mge2}) again, we collapse the variables back
into the disjunctions
on the right-hand side of the equation:
\begin{eqnarray}
       m(u_1)+\cdots+m(u_n) = n    \nonumber
\end{eqnarray}
Using Eq.~(\ref{eq:state6}), $m(u_1) =\cdots=m(u_n) = 1$.

To summarize:  we have proved so far that for any $u_i$ and any state $m$,
\begin{eqnarray}
      m(t_1 \cap\cdots\cap t_n) = 1\quad\Rightarrow\quad m(u_i) = 1
             \label{eq:asterisk}
\end{eqnarray}
Since $\mathcal L$ admits a strong set of states, there exists a state $m$
such that
\begin{eqnarray}
     (m(t_1 \cap\cdots\cap t_n) = 1  \ \Rightarrow\    m(u_i) = 1)
    \quad\Rightarrow\quad  t_1 \cap\cdots\cap t_n \le u_i.
 \nonumber
\end{eqnarray}
Detaching Eq.~(\ref{eq:asterisk}), we have
\begin{eqnarray}
     t_1 \cap\cdots\cap t_n \le u_i.  \nonumber
\end{eqnarray}
Combining for all $i$, we have
\begin{eqnarray}
     t_1 \cap\cdots\cap t_n \le u_1 \cap\cdots\cap u_n.  \nonumber
\end{eqnarray}
By symmetry
\begin{eqnarray}
     u_1 \cap\cdots\cap u_n \le t_1 \cap\cdots\cap t_n,   \nonumber
\end{eqnarray}
so the Mayet-Godowski equation holds.
\end{proof}

In order to represent MGEs efficiently, we introduce a
special notation for them.  Consider the following MGE (which will
be of interest to us later):
\begin{eqnarray}
&&a\perp b\ \&\ a\perp c\ \&\ b\perp c\ \&\ d\perp e\ \&\ f\perp g\ \&\
h\perp j\ \&\ g\perp b\ \&\ \nonumber\\
&& \qquad \qquad e\perp c\ \&\ j\perp a\ \&\ h\perp f\ \&\
h\perp d\ \&\ f\perp d \ \Rightarrow\ \nonumber\\
&& \qquad (a\cup b\cup c)\cap(d\cup
e)\cap(f\cup g)\cap(h\cup j)=\nonumber\\
&& \qquad \qquad (g\cup b)\cap(e\cup c)\cap(j\cup
a)\cap(h\cup f\cup d).\label{eq:newst1}
\end{eqnarray}
Following the proof of Theorem~\ref{th:mge}, this equation arises from
the following equality involving states:
\begin{eqnarray}
&& m(a\cup b\cup c)+m(d\cup e)+m(f\cup g)+m(h\cup j) = \nonumber\\
&& \qquad m(g\cup b)+m(e\cup c)+m(j\cup a)+m(h\cup f\cup d).
 \label{eq:newst1b}
\end{eqnarray}
A {\em condensed state equation} is an abbreviated representation of this
equality, where\-in we represent
join by juxtaposition and remove all mentions of the state function, leaving
only its arguments.  Thus the condensed state equation representing
Eq.~(\ref{eq:newst1b}), and thus Eq.~(\ref{eq:newst1}), is:
\begin{eqnarray}
abc+de+fg+hj&=&gb+ec+ja+hfd.
 \label{eq:newst1c}
\end{eqnarray}

Another example of an MGE shows that repeated or {\em degenerate} terms
may be needed in the condensed state equation in order to balance the
number of variable occurrences on each side:
\begin{eqnarray}
&& ab+cde+fg+fg+hjk+lk+mn+pe\ = \nonumber\\
&& \qquad\qquad gk+gk+db+fe+fe+nlc+pja+mh\label{eq:st2new}
\end{eqnarray}

\begin{theorem}\label{th:mgo-in-ngo}
The family of all Mayet-Godowski equations
includes, in particular, the Godowski equations
{\rm [Eqs.~(\ref{eq:godow3o}),
(\ref{eq:godow4o}),\ldots]}; in other words, the class {\rm MGO}
is included in $n${\rm GO} for all $n$.
\end{theorem}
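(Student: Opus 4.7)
The plan is to exhibit, for each $n\ge 3$, a single Mayet-Godowski equation from which Godowski's $n$-Go equation can be derived in any ortholattice, so that $\mathrm{MGO}\subseteq n\mathrm{GO}$ follows automatically. The convenient target form for $n$-Go is the one supplied by Theorem~\ref{th:go2n}: under the cyclic orthogonality hypotheses $a_1\perp b_1\perp a_2\perp b_2\perp\cdots\perp a_n\perp b_n\perp a_1$, one needs to derive
$$(a_1\cup b_1)\cap(a_2\cup b_2)\cap\cdots\cap(a_n\cup b_n)\le b_1\cup a_2.$$

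First I would introduce the candidate equation
$$(a_1\cup b_1)\cap(a_2\cup b_2)\cap\cdots\cap(a_n\cup b_n)\;=\;(b_1\cup a_2)\cap(b_2\cup a_3)\cap\cdots\cap(b_n\cup a_1),$$
whose condensed state form is simply $a_1b_1+a_2b_2+\cdots+a_nb_n = b_1a_2+b_2a_3+\cdots+b_na_1$. Next I would verify the two clauses of Definition~\ref{def:gge}: condition~(1) is immediate, because every conjunct on either side is a two-variable join of consecutive letters in the orthogonality cycle and hence of mutually orthogonal variables; condition~(2) is pure bookkeeping, since each of the $2n$ variables $a_1,\ldots,a_n,b_1,\ldots,b_n$ occurs exactly once on each side of the equality. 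Therefore the equation is a genuine MGE and, by the very definition of $\mathrm{MGO}$, holds in every $\mathrm{MGO}$-lattice.

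With the equation certified, the derivation of $n$-Go is a one-liner: in any $\mathrm{MGO}$-lattice the common value of the two sides is bounded above by each conjunct on the right, in particular by $b_1\cup a_2$, which is exactly the conclusion required by Theorem~\ref{th:go2n}. Since the construction is uniform in $n$, we obtain $\mathrm{MGO}\subseteq n\mathrm{GO}$ for every $n\ge 3$. The only step that needs care is matching the chosen MGE to the Theorem~\ref{th:go2n} form of $n$-Go and confirming that both MGE conditions are met by the cyclic pattern; once that bookkeeping is done, Theorem~\ref{th:mgo-in-ngo} reduces to the trivial fact that a meet is below each of its factors.
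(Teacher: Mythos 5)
Your proposal is correct and follows essentially the same route as the paper: the paper also converts $n$-Go into the Theorem~\ref{th:go2n} inference form and identifies it (via the cyclic condensed state equation $ad+be+cf=db+ec+fa$ for $n=3$, with higher $n$ declared analogous) with exactly the MGE you write down. Your only difference is presentational—you carry out the cyclic construction uniformly for all $n$ instead of doing $n=3$ and appealing to analogy—which, if anything, makes the argument slightly more complete.
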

\begin{proof}
We will give the proof for 3-Go.  The proofs for $n>3$ are analogous.

To represent 3-Go,
\begin{eqnarray}
(a\to  b)\cap(b\to  c)\cap(c\to  a)&=&
  (c\to  b)\cap(b\to  a)\cap(a\to  c),\label{3go}
\end{eqnarray}
we express it in the form shown by Theorem~\ref{th:go2n}:
\begin{eqnarray}
\lefteqn{a\perp d\perp b\perp e\perp c\perp f\perp a
    \qquad\Rightarrow} & & \nonumber \\
 & & (a\cup d)\cap (b\cup e)\cap(c\cup f)\ \le\ d\cup b.
\label{eq:3goa}
\end{eqnarray}
By symmetry, this is equivalent to the MGE
\begin{eqnarray}
\lefteqn{a\perp d\perp b\perp e\perp c\perp f\perp a
    \qquad\Rightarrow} & & \nonumber \\
& & (a\cup d)\cap (b\cup e)\cap(c\cup f)\ =\
(d\cup b)\cap (e\cup c)\cap(f\cup a),
\label{eq:3gob}
\end{eqnarray}
whose condensed state equation is
\begin{eqnarray}
ad+be+cf&=&db+ec+fa\label{eq:3goc}
\end{eqnarray}
\end{proof}

While every MGE holds in a Hilbert lattice, many of them are derivable
from the equations $n$-Go and others trivially hold in all OMLs.  
An MGE is ``interesting'' if it does not hold in all $n$GOs.  To find such
MGEs, we seek OMLs that are $n$GOs for all $n$ but have no strong set of
states.  Once we find such an OML, it is possible to deduce an MGE that
it will violate.

The search for such OMLs was done with the assistance of several
computer programs written by Brendan McKay and Norman Megill.  An
isomorph-free, exhaustive list of finite OMLs with certain
characteristics was generated.  The ones admitting no strong set of
states were identified (by using the simplex linear programming
algorithm to show that the constraints imposed by a strong set of states
resulted in an infeasible solution).  Among these, the ones violating
some $n$-Go were discarded, leaving only the OMLs of interest.  (To
identify an OML of interest, a special dynamic programming algorithm,
described in \cite{mp-gen-godowski06-arXiv}, was used.  This algorithm was
crucial for the results in this section, providing a proof that the OML
``definitely'' violated no $n$-Go for all $n$ less than infinity, rather
than just ``probably'' as would be obtained by testing up to some large
$n$ with a standard lattice-checking program.)  Finally, an MGE was
``read off'' of the OML, using a variation of a technique described by
Mayet \cite{mayet86} for producing an equation that is violated by a
lattice admitting no strong set of states.

Fig.~\ref{fig:st1new} shows examples of such OMLs found by these
programs.  Eq.~(\ref{eq:newst1}) was deduced from OML MG1 in the figure,
and it provides the answer (Theorem~\ref{th:mgo-lt-ngo} below) to the
problem posed at the beginning of this section.  In order to show how we
constructed Eq.~(\ref{eq:newst1}), we will show the details of the proof
that OML MG1 admits no strong set of states.  That proof will provide us
with an algorithm for stating an equation that fails in OML MG1 but
holds in all OMLs admitting a strong set of states.

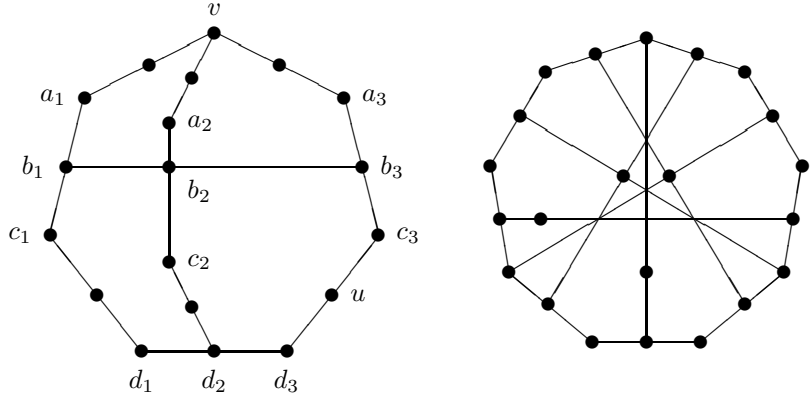
\begin{figure}[htbp]\centering
  % Set unitlength to default in case it's not
  \setlength{\unitlength}{1pt}
  \begin{picture}(260,150)(-10,-10)

    \put(-15,0) { % argument of \put is offset in figure
      \begin{picture}(124,120)(0,0) % start subpicture
% no strong set of states; passes all nGO
% ABC,9BI,8CJ,7AH,6DE,5DF,4DG,358,269,147,123.
        % heptagon sides
        \put(110.9,96.2){\line(-2,1){49.3}}
        \put(110.9,96.2){\line(1,-4){13}}
        \put(89,0){\line(4,5){35}}
        \put(89,0){\line(-1,0){54.8}}
        \put(34.2,0){\line(-4,5){35}}
        \put(12.2,96.2){\line(-1,-4){13}}
        \put(12.2,96.2){\line(2,1){49.3}}
        % internal lines
        \put(5.5,70){\line(1,0){112}}
        \put(61.5,120.5){\line(-1,-2){17}}
        \put(44.5,86.5){\line(0,-1){52.5}}
        \put(61.5,0){\line(-1,2){17}}
        % heptagon vertices
        \put(61.5,120.5){\circle*{5}}
            \put(61.5,127.5){\makebox(0,0)[b]{$v$}}
        \put(110.5,96.2){\circle*{5}}
            \put(117.5,96.2){\makebox(0,0)[l]{$a_3$}}
        \put(123.5,44){\circle*{5}}
            \put(130.5,44){\makebox(0,0)[l]{$c_3$}}
        \put(89,0){\circle*{5}}
            \put(89,-7){\makebox(0,0)[t]{$d_3$}}
        \put(34,0){\circle*{5}}
            \put(34,-7){\makebox(0,0)[t]{$d_1$}}
        \put(-0.5,44){\circle*{5}}
            \put(-7.5,44){\makebox(0,0)[r]{$c_1$}}
        \put(12.5,96.2){\circle*{5}}
            \put(5.5,96.2){\makebox(0,0)[r]{$a_1$}}
        % heptagon mid-sides
        \put(86.2,108.4){\circle*{5}}
        \put(117.5,70){\circle*{5}}
            \put(124.5,70){\makebox(0,0)[l]{$b_3$}}
        \put(106.0,21.4){\circle*{5}}
            \put(113.0,21.4){\makebox(0,0)[l]{$u$}}
        \put(61.5,0){\circle*{5}}
            \put(61.5,-7){\makebox(0,0)[t]{$d_2$}}
        \put(17.1,21.4){\circle*{5}}
        \put(5.5,70){\circle*{5}}
            \put(-2.5,70){\makebox(0,0)[r]{$b_1$}}
        \put(36.9,108.4){\circle*{5}}
        % internal nodes
        \put(44.5,86.5){\circle*{5}}
            \put(51.5,86.5){\makebox(0,0)[l]{$a_2$}}
        \put(53,103.5){\circle*{5}}
        \put(53,17){\circle*{5}}
        \put(44.5,34){\circle*{5}}
            \put(51.5,34){\makebox(0,0)[l]{$c_2$}}
        \put(44.5,70){\circle*{5}}
            \put(51.5,65){\makebox(0,0)[lt]{$b_2$}}
      \end{picture}
    } % end of subpicture

    \put(150,0) { % argument of \put is offset in figure
% HKM,FGL,EGJ,DFI,BCH,ABI,9CJ,67D,58E,48K,37K,26J,24A,15I,139.
% 9 EGJ,J9C,CHB,BAI,IFD,D67,73K,K48,85E.  H*K*M. F*G*L. 2.6*J* 2.4*A* 1.5*I* 1.3*9*
% top atom is K (K48,85E,...)
      \begin{picture}(124,110)(0,0) % start G6s2 subpicture
        % nonagon sides
        \put(39.48,3.62){\line(1,0){41.04}}
        \put(39.48,3.62){\line(-6,5){31.44}}
        \put(80.52,3.62){\line(6,5){31.44}}
        \put(8.04,30){\line(-1,6){6.75}}
        \put(111.96,30){\line(1,6){6.75}}
        \put(0.91,70.42){\line(3,5){21.2}}
        \put(119.09,70.42){\line(-3,5){21.2}}
        %\put(21.43,105.96){\line(3,1){38.57}}
        %\put(98.57,105.96){\line(-3,1){38.57}}
        \put(22.13,105.96){\line(3,1){38}}
        \put(97.87,105.96){\line(-3,1){38}}
        % internal lines
        \put(60,3.62){\line(0,1){114.98}}
        %\put(40.715,112.98){\line(3,-5){57}}
        %\put(79.285,112.98){\line(-3,-5){57}}
        \put(40.715,112.3){\line(3,-5){56.7}}
        \put(79.285,112.3){\line(-3,-5){56.7}}
        \put(8.04,30){\line(5,3){99.5}}
        \put(111.96,30){\line(-5,3){99.5}}
        \put(4.475,50.21){\line(1,0){111.05}}

        % nonagon vertices
        \put(80.52,3.62){\circle*{5}}
        \put(39.48,3.62){\circle*{5}}
        \put(111.96,30){\circle*{5}}
        \put(8.04,30){\circle*{5}}
        \put(119.09,70.42){\circle*{5}}
        \put(0.91,70.42){\circle*{5}}
        %\put(98.57,105.96){\circle*{5}}
        %\put(21.43,105.96){\circle*{5}}
        \put(97.2,105.8){\circle*{5}}
        \put(21.8,105.8){\circle*{5}}
        %\put(60,120){\circle*{5}}
        \put(60,118.6){\circle*{5}}
        % nonagon mid-sides
        \put(60,3.62){\circle*{5}}
        %\put(96.24,16.81){\circle*{5}}
        %\put(23.76,16.81){\circle*{5}}
        \put(97.24,17.81){\circle*{5}}
        \put(22.76,17.81){\circle*{5}}
        \put(115.525,50.21){\circle*{5}}
        \put(4.475,50.21){\circle*{5}}
        %\put(108.83,88.19){\circle*{5}}
        %\put(11.17,88.19){\circle*{5}}
        \put(107.83,89.19){\circle*{5}}
        \put(12.17,89.19){\circle*{5}}
        %\put(79.285,112.98){\circle*{5}}
        %\put(40.715,112.98){\circle*{5}}
        \put(79.285,112.5){\circle*{5}}
        \put(40.715,112.5){\circle*{5}}
        % internal nodes
        %\put(60,50.21){\circle*{5}}
        \put(20,50.21){\circle*{5}}
        \put(60,30.21){\circle*{5}}
        \put(51.3,66.6){\circle*{5}}
        \put(68.7,66.6){\circle*{5}}

      \end{picture}
    } % end of subpicture

  \end{picture}
  \caption{OMLs that admit no strong sets
of states but which are $n$GOs for all $n$.  (a) OML MG1; (b) OML MG5s.
\label{fig:st1new}}
\end{figure}

\begin{theorem}\label{th:mg1-ns}
The {\rm OML} {\rm MG1} does not admit a strong set of states.
\end{theorem}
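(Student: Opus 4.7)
The plan is to find, inside MG1, a pair of lattice elements that the strong-state axiom cannot separate, contradicting the existence of a strong set of states. The natural pair is provided by the two sides of the Mayet--Godowski equation~(\ref{eq:newst1}): I will use the block structure of MG1 to force the state identity~(\ref{eq:newst1b}) on every state, and a direct inspection of the Greechie diagram to show that the lattice identity (\ref{eq:newst1}) itself fails in MG1.

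First, I would read off Fig.~\ref{fig:st1new}(a) an assignment of the nine variables $a,b,c,d,e,f,g,h,j$ to atoms of MG1 under which all twelve orthogonalities listed in the hypothesis of Eq.~(\ref{eq:newst1}) are realized by genuine block-mate relations in MG1; the triply-indexed labels $a_i,b_i,c_i,d_i$ together with $u,v$ in the figure are suggestive of exactly such an assignment. With the assignment fixed, each of the eight joins on the two sides of Eq.~(\ref{eq:newst1b}) is a disjunction of mutually orthogonal atoms, so by Lemma~\ref{lem:mge2} both sides of Eq.~(\ref{eq:newst1b}) decompose to the same sum $m(a)+m(b)+m(c)+m(d)+m(e)+m(f)+m(g)+m(h)+m(j)$; consequently Eq.~(\ref{eq:newst1b}) holds for every state $m$ on MG1.

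Second, I would verify by inspection of MG1 (most efficiently with computer assistance, but in principle through the defining blocks) that the lattice inequality
\[
(a\cup b\cup c)\cap(d\cup e)\cap(f\cup g)\cap(h\cup j)\ \not\le\ (g\cup b)\cap(e\cup c)\cap(j\cup a)\cap(h\cup f\cup d)
\]
holds in MG1. Writing $t_\ast$ for the left meet, this forces some right-hand conjunct $u_{k_0}\in\{g\cup b,\ e\cup c,\ j\cup a,\ h\cup f\cup d\}$ to satisfy $t_\ast\not\le u_{k_0}$ (otherwise $t_\ast$ would lie below the full right meet).

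Third, suppose for contradiction that $S$ is a strong set of states on MG1. Applied to the pair $t_\ast\not\le u_{k_0}$, the strong-state property yields an $m\in S$ with $m(t_\ast)=1$ and $m(u_{k_0})<1$. By Eq.~(\ref{eq:state7}), $m(t_i)=1$ for each of the four LHS factors, so the LHS of Eq.~(\ref{eq:newst1b}) evaluates to $4$; by step one the RHS sum also equals $4$, so by Eq.~(\ref{eq:state6}) every $m(u_k)=1$, in particular $m(u_{k_0})=1$---the desired contradiction. The main obstacle is step two: identifying the correct atom-to-variable assignment on the Greechie diagram and then certifying the failure of the meet inequality inside MG1. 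Step one is algebraically automatic once the orthogonalities are confirmed, and step three is a clean application of the strong-state definition together with Lemma~\ref{lem:state}.
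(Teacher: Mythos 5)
Your overall strategy---the MGE~(\ref{eq:newst1}) holds in every OML admitting a strong set of states (your steps one and three correctly reproduce, for this particular equation, the argument of Theorem~\ref{th:mge}: Lemma~\ref{lem:mge2} plus Eqs.~(\ref{eq:state6}) and (\ref{eq:state7}) and the strong-state definition), so it suffices to show the equation fails in MG1---is logically sound, but it is not the paper's route, and as written it has a genuine gap exactly where all the MG1-specific content lies. Step two, ``verify by inspection (most efficiently with computer assistance) that the meet inequality fails in MG1,'' is a promissory note: you neither fix the atom-to-variable assignment nor exhibit any element witnessing the failure, and computing the two fourfold meets of joins inside the pasted 19-atom lattice is by far the heaviest part of the plan; it is precisely the fact that the paper itself only asserts ``can be verified'' (computationally) outside the proof of this theorem. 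So, in your proposal, nothing specific to MG1 is actually proved.

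The paper's proof avoids this entirely. It takes an arbitrary state $m$ with $m(v)=1$; the block sums give $m(a_1)=m(a_2)=m(a_3)=0$, hence $m(b_i)+m(c_i)=1$ for $i=1,2,3$; since $m(b_1)+m(b_2)+m(b_3)\le 1$, one gets $m(c_1)+m(c_2)+m(c_3)\ge 2$, and adding $m(d_1)+m(d_2)+m(d_3)=1$ while using $m(c_i)+m(d_i)\le 1$ forces $m(c_3)+m(d_3)=1$, so $m(u)=0$ and $m(u')=1$. Since $v\not\le u'$ (immediate from the diagram: $v$ and $u$ share no block), a strong set of states would give $v\le u'$, a contradiction. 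This needs only block incidences and elementary arithmetic---no lattice meets or joins and no equation at all; indeed Eq.~(\ref{eq:newst1}) is \emph{extracted from} this proof afterwards, so taking its failure in MG1 as your starting point reverses the order in which the facts can actually be established by hand. To repair your proposal, either carry out the lattice computation in step two explicitly (at least exhibiting a witness below the left meet but not below one right conjunct), or replace step two by the direct state computation above.
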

\begin{proof}
Referring to Fig.~\ref{fig:st1new}, suppose that $m$ is a state such
that $m(v)=1$.  Since the state values of the atoms in a block sum to
$1$, $m(a_1)=m(a_2)=m(a_3)=0$.  Thus $m(b_1)+m(c_1)=
m(b_2)+m(c_2)=m(b_3)+m(c_3)=1$.  Since $m(b_1)+m(b_2)+m(b_3)\le 1$, it
follows that $m(c_1)+m(c_2)+m(c_3)\ge 2$.  Since $m(d_1)+ m(d_2)+m(d_3)=1$,
we have $[m(c_1)+m(d_1)]+[m(c_2)+m(d_2)]+[m(c_3)+m(d_3)]\ge 3$.  Since
$m(c_1)+m(d_1)\le 1$, $m(c_2)+m(d_2)\le 1$, and $m(c_3)+m(d_3)\le 1$, we
must have $m(c_3)+m(d_3)=1$.  Hence $m(u)$=0, since $u$ is on the same
block as $c_3$ and $d_3$.  So, $m(u')=1$.  To summarize, we have shown
that for any $m$, $m(v)=1$ implies $m(u')=1$.  If MG1 admitted a strong
set of states, we would conclude that $v\le u'$, which is a
contradiction since $v$ and $u'$ are incomparable.
\end{proof}

In the above proof, we made use of several specific conditions
that hold for the atoms and blocks in that OML.  That proof was
actually carefully constructed so as to minimize the need for these
conditions.  For example, we used $m(b_1)+m(b_2)+m(b_3)\le 1$ even
though the stronger $m(b_1)+m(b_2)+m(b_3)= 1$ holds, because the
strength of the latter was not required.  The complete set of
such conditions that the proof used are the following facts:
\begin{itemize}
\item $v \perp a_i$, $i=1,2,3$;
\item $d_i \perp c_i$, $i=1,2$;
\item The atoms in each of the triples $\{a_i,b_i,c_i\}$ ($i=1,2,3$),
and $\{d_1,d_2,d_3\}$
are mutually orthogonal and their disjunction is $1$
(i.e. the sum of their state values is 1).
\item The atoms in each of the triples
$\{b_1,b_2,b_3\}$ and $\{c_3,u,d_3\}$
are mutually orthogonal and the sum of their state values is $\le 1$
(the sum is actually equal to $1$, but we used only $\le 1$ for
the proof).
\end{itemize}
If the elements of any OML $\mathcal{L}$ satisfy these
facts, then
we can prove (with a proof essentially identical to that of Theorem
\ref{th:mg1-ns}, using the above facts as hypotheses in place of the
atom and block conditions in OML MG1) that
for any state $m$ on $\mathcal{L}$,
$m(v)=1$ implies $m(u')=1$.  Then, if $\mathcal{L}$ admits
a strong set of states, we also have $v\le u'$.

We can construct an equation that expresses this result as
follows.  We use the orthogonality conditions
from the above list of fact as hypotheses, and we incorporate
each ``disjunction is $1$'' condition as
a conjunct on the left-hand side.
We will denote the set of all orthogonality conditions
in the above list of facts by $\Omega$.
We can ignore the conditions ``the sum of their state values is $\le 1$''
from the above list of facts, because that happens automatically
due to the mutual orthogonality of those elements.
This procedure then leads to the equation,
\begin{eqnarray}
\Omega \ \Rightarrow \ &
v\cap (a_1\cup b_1\cup c_1)\cap(a_2\cup b_2\cup c_2)
\cap(a_3\cup b_3\cup c_3)\cap \nonumber\\
&(d_1\cup d_2\cup d_3)\le u'
\end{eqnarray}
This equation holds in all OMLs with a strong set of states
but fails in lattice MG1.

The condensed state equation
Eq.~(\ref{eq:newst1c}) was obtained using the following
mechanical procedure.  We consider only variables corresponding to
the atoms used by the proof (i.e. the labeled atoms in
Fig.~\ref{fig:st1new}) and only the blocks whose orthogonality
conditions were used as hypotheses for the proof.  We ignore all
variables whose state value is shown to be equal to $1$ or $0$ by the
proof, and we ignore all blocks in which only one variable remains as a
result.  For the left-hand side, we consider all the remaining blocks
that have ``disjunction is $1$'' in the assumptions listed above.  We
juxtapose the (unignored) variables in each block to become a term, and
we connect the terms with $+$.  For the right-hand side, we do the same
for the remaining blocks that do not have ``disjunction is $1$'' in the
assumptions listed above.  Thus we obtain:
\begin{eqnarray}
  b_1 c_1 + b_2 c_2 + b_3 c_3 + d_1 d_2 d_3 &=&
    c_1 d_1 + c_2 d_2 + c_3 d_3 + b_1 b_2 b_3
\end{eqnarray}
After renaming variables and rearranging terms, this is
Eq.~(\ref{eq:newst1c}), which corresponds to the
MGE Eq.~(\ref{eq:newst1}) and which can be verified to
fail in lattice MG1.

This mechanical procedure is simple and practical to automate---the simplex
algorithm used to find states lets us determine which blocks must have a
disjunction equal to 1---but it is not guaranteed to be successful in
all cases:  in particular, it will not work when the condensed state
equation has degenerate terms, as in Eq.~(\ref{eq:st2new}) above.
However, such cases are easily identified by counting the variable
occurrences on each side, and we can add duplicate terms to make the
counts balance in the case of a degeneracy.  This balancing ensures that
the corresponding equation is an MGE and therefore holds in all Hilbert
lattices.

Having constructed Eq.~(\ref{eq:newst1}), which holds in all Hilbert
lattices but fails in lattice MG1, we now state the main result
of this section.

\begin{theorem}\label{th:mgo-lt-ngo}
The class {\rm MGO} is properly included in all $n${\rm GO}s, i.e.,
not all {\rm MGE} equations can be deduced from the equations
$n$-{\rm Go}.
\end{theorem}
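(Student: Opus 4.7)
The plan is to exploit the infrastructure that has already been set up: Theorem \ref{th:mgo-in-ngo} gives the inclusion $\mathrm{MGO}\subseteq n\mathrm{GO}$ for every $n$, so only the strictness (properness) of this inclusion remains. To establish strictness, I would exhibit a single witness equation that lies in the theory of MGO but not in the theory of any $n$GO. The natural candidate is Eq.~(\ref{eq:newst1}), which has already been introduced in the preceding discussion and which was extracted from the OML MG1 of Fig.~\ref{fig:st1new}(a).

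First, I would observe that Eq.~(\ref{eq:newst1}) is (by inspection of its form) a Mayet--Godowski equation in the sense of Definition~\ref{def:gge}: the two sides are each a meet of four conjuncts, every disjunction appearing in a conjunct involves pairwise orthogonal atoms as encoded by the hypotheses, and every variable ($a,b,c,d,e,f,g,h,j$) occurs exactly the same number of times on each side. Hence by Theorem~\ref{th:mge} this equation holds in every ortholattice admitting a strong set of states, and thus in every member of the class MGO.

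Second, I would verify that Eq.~(\ref{eq:newst1}) fails in MG1. This is immediate from Theorem \ref{th:mg1-ns} together with the construction described after that theorem: the proof of Theorem \ref{th:mg1-ns} was engineered precisely so that its orthogonality hypotheses, together with the ``disjunction is $1$'' blocks it used, translate via the mechanical read-off procedure to the condensed state equation (\ref{eq:newst1c}), i.e., Eq.~(\ref{eq:newst1}). Since MG1 admits no strong set of states, the derivation breaks down for MG1 in exactly the way needed: the element $v$ witnesses $v\cap(a_1\cup b_1\cup c_1)\cap\cdots\le u'$ in all MGO lattices but this containment fails in MG1 because $v$ and $u'$ are incomparable there. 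So a concrete assignment of atoms in MG1 refutes Eq.~(\ref{eq:newst1}).

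Finally, and this is the step I expect to be the main obstacle, one must show that MG1 lies in every $n$GO, so that Eq.~(\ref{eq:newst1}) cannot possibly be deduced from any single $n$-Go equation in the OML axiomatic system. Unlike verifying one equation on one lattice (a finite check), this is an assertion about infinitely many equations $3$-Go, $4$-Go, $5$-Go, \ldots simultaneously. The plan here is to appeal to the dynamic programming algorithm of \cite{mp-gen-godowski06-arXiv}, which, as noted in the text, is designed to decide $n$-Go membership for all $n$ at once rather than testing each $n$ separately; the cited algorithm certifies that MG1 satisfies every $n$-Go. Granting this computational certification, we have an MGE (namely Eq.~(\ref{eq:newst1})) that holds in MGO but fails in an $n$GO (namely MG1 belongs to every $n$GO yet refutes the equation). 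This exhibits the strict inclusion $\mathrm{MGO}\subsetneq n\mathrm{GO}$ for every $n$, completing the proof.
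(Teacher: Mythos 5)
Your proposal is correct and follows essentially the same route as the paper: both rely on Theorem~\ref{th:mgo-in-ngo} for the inclusion and then use OML MG1, which satisfies every $n$-Go (as certified by the computational result of \cite{mp-gen-godowski06-arXiv}) yet violates the MGE Eq.~(\ref{eq:newst1}), to establish properness. The only cosmetic difference is that you invoke Theorem~\ref{th:mge} to place Eq.~(\ref{eq:newst1}) in the MGO theory, whereas this is immediate from the definition of MGO as the variety in which all MGEs hold.
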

\begin{proof}
We have already shown that MGO is included in all $n$GOs
(Theorem~\ref{th:mgo-in-ngo}).  Furthermore,
OML MG1 is an $n$GO for all $n$, but is not an MGO.  Specifically,
it can be shown that the equations $n$-Go hold in OML MG1 for all $n$,
\cite{mp-gen-godowski06-arXiv}
whereas the MGE Eq.~(\ref{eq:newst1}) fails in OML MG1.
This shows the inclusion is proper.
\end{proof}

In particular, Eq.~(\ref{eq:newst1}) therefore provides an an example of
a new Hilbert lattice equation that is independent from all Godowski
equations.

Having 9 variables and 12 hypotheses, Eq.~(\ref{eq:newst1}) can be
somewhat awkward to work with directly.  It is possible to derive from
it a simpler equation through the use of substitutions that Mayet calls
{\em generators}.  If, in Eq.~(\ref{eq:newst1}), we substitute
(simultaneously) $c'$ for
$a$, $c\cap b$ for $b$, $(c\to  b)'$ for $c$, $(a\to  b)'$ for $d$,
$(c\to  b)\cap(a\to  b)$ for $e$, $b\cap a$ for $f$, $b'$ for $g$,
$a'$ for $h$, and $a\cap c$ for $j$, all of the hypotheses are satisfied
(in any OML) and the conclusion evaluates to:
\begin{eqnarray}
((a\to  b)\to (c\to  b))\cap(a\to  c)\cap(b\to  a)&\le& c\to  a
\label{eq:newst1d}
\end{eqnarray}
where we also dropped all but one conjunct on the right-hand-side.
While such a procedure can sometimes weaken an MGE, it can be verified
that Eq.~(\ref{eq:newst1d}) fails in OML MG1 of
Fig.~\ref{fig:st1new} as desired, thus providing us with a Hilbert
lattice equation that is convenient to work with but is still
independent from all Godowski equations.  For example,
Eq.~(\ref{eq:newst1d}) can be used in place of
Eq.~(\ref{eq:newst1}) to provide a simpler proof of
Theorem~\ref{th:mgo-lt-ngo}.

Eq.~(\ref{eq:st2new}) was deduced from the OML MG5s in
Fig.~\ref{fig:st1new}, and it provides us with another new Hilbert
lattice equation that is independent from all $n$-Gos.  A comparison to
OML G5s in Fig.~\ref{fig:oag6} illustrates how the addition of an atom
can affect the behaviour of a lattice.

\begin{figure}[hbt]
\includegraphics[width=0.9999\textwidth]{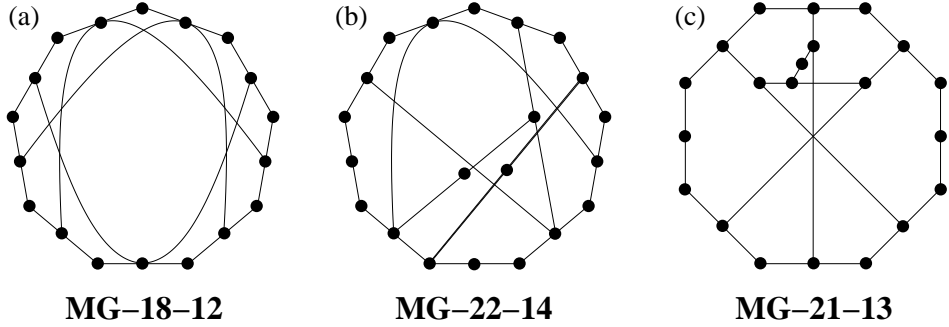}
\caption{OMLs that admit no strong sets of
states but are $n$GOs for all $n$.}
\label{fig:st3_4_19}
\end{figure}

The OMLs of Fig.~\ref{fig:st3_4_19}$\>$(a), (b), and (c) provide
further examples that admit no strong sets of
states but are $n$GOs for all $n$.  The following MGEs
(represented with condensed state equations) can be deduced
from them, respectively:
\begin{eqnarray}
abc+de+fg+hj+kl&=&eb+dh+faj+lc+kg \label{eq:mgeq1}\\
ab+cd+ef+ghj+kl+kl&=&kd+bl+jl+fk+ha+gec \label{eq:mgeq2}\\
abc+def+gh+jk+lmn+pqr&=&fn+rc+dkb+gma+qeh+plj.\qquad\label{eq:mgeq3}
\end{eqnarray}
Using generators, the following examples of simpler Hilbert lattice
equations can be derived from these MGEs, again respectively:
\begin{eqnarray}
&((d\to (a\to b))\cap((a\to c)\to d)\cap(b\to c)
  \cap(c\to a)\quad\le\quad b\to a  \\
&(d\to (c\cap(a\to b))\cap((b\to a)\to d)\cap(c\to a)\cap(b\to d)
  \quad\le\quad a\to c  \\
&((d\to a)\to (b\to c)')\cap((c\to d)\to (a\to b)')\cap
((b\to a)'\to (d\to c))\cap \nonumber \\
&((a\to d)'\to (c\to b))
  \quad\le\quad (d\to c)\to (b\to a)'
\end{eqnarray}
Each of these simpler equations, while possibly weaker than the MGEs
they were derived from, still fail in their corresponding OMLs, thus
providing us with additional new Hilbert lattice equations that are
independent from all $n$GOs.

While the complete picture of interdependence of the three lattice
families we have presented ($n$OA, $n$GO, and MGO) is not fully
understood, some results can be established.  We have
already shown that every MGO is an $n$GO for all $n$, and moreover that
the inclusion is proper (Theorem~\ref{th:mgo-lt-ngo} ). We can also
prove the following:

\begin{theorem}\label{th:mgo-not-3oa}
There are {\rm MGO}s {\rm (}and therefore $n${\rm GO}s{\rm )} that
are not {\rm 3OA}s and thus not $n${\rm OA}s for any $n$.
\end{theorem}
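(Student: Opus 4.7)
The plan is to exhibit a concrete finite orthomodular lattice $L$ that simultaneously (i) admits a strong set of states and (ii) fails the 3OA law. By Theorem \ref{th:mge} any such $L$ lies in MGO, and by Theorem \ref{th:stronger} every $n$OA is contained in 3OA, so the failure of 3OA in $L$ will immediately preclude $L$ from being an $n$OA for any $n\ge 3$. Thus the whole theorem reduces to producing one witness and checking the two properties on it.

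For the search itself I would start from the smallest Greechie diagrams known to violate 3OA while still defining an OML, in particular the 13-atom example $13\text{-}7\text{-OMLp-oa3f}$ that already appears in Fig.~\ref{fig:oa4f-5f}, and enumerate, using the McKay/Megill pipeline of Ref.~\cite{bdm-ndm-mp-1}, nearby OMLs of comparable size. For each candidate I would run two independent tests: a lattice-checking program instantiating Eq.~(\ref{eq:noa}) for $n=3$ to confirm that 3OA still fails, and a simplex linear-programming test (exactly of the kind used in the MGO constructions of Section \ref{sec:mge}) to decide whether the system of constraints \emph{(\textasteriskcentered)} ``for every pair $a\not\le b$ there is a state $m$ with $m(a)=1,\ m(b)<1$'' has a feasible solution. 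Any candidate that passes both tests is a valid witness.

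Once $L$ is in hand the argument is immediate. Strong states together with Theorem \ref{th:mge} give $L\in$ MGO, and Theorem \ref{th:mgo-in-ngo} then gives $L\in n$GO for every $n$. On the other hand, failure of 3OA in $L$ means $L\notin$ 3OA, whence, using the first sentence of Theorem \ref{th:stronger} inductively, $L\notin n$OA for any $n\ge 3$. That is exactly the statement of the theorem.

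The main obstacle is the \emph{simultaneity} of the two properties. The examples readily produced by the Greechie-diagram methods of this chapter tend to specialize: lattices like MG1 of Fig.~\ref{fig:st1new} fail to admit a strong set of states (as shown in Theorem \ref{th:mg1-ns}), while the classical 3OA-counterexample diagrams of Fig.~\ref{fig:oa4f-5f} have not been tested here for strong states. If the smallest 3OA-failing OML turns out to admit no strong set of states, the fallback is a standard pasting construction: adjoin additional Boolean blocks to supply the missing pure states on those pairs $(a,b)$ that witness infeasibility in the simplex test, being careful not to introduce any new orthogonality that would collapse the 3OA-violating sub-configuration. Verifying that such a pasting preserves the 3OA failure is routine by inspection of the substitution used in Eq.~(\ref{eq:noa}) for $n=3$, but it is the step where care is genuinely required, and it is where computer assistance of the kind cited throughout this section is essentially indispensable.
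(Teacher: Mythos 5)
Your proposal follows essentially the same route as the paper: its proof simply notes that the 13-atom OML 13-7-OMLp-oa3f of Fig.~\ref{fig:oa4f-5f}---the very lattice you name as your starting candidate---admits a strong set of states (hence is an MGO by Theorem~\ref{th:mge}, and an $n$GO for all $n$ by Theorem~\ref{th:mgo-in-ngo}) while violating the 3OA law, so by Theorem~\ref{th:stronger} it is not an $n$OA for any $n$. The only difference is that you leave the strong-state property of that lattice as something still to be verified (with a pasting fallback in reserve), whereas the paper asserts it outright from its computational checks, so the search and fallback are unnecessary.
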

\begin{proof}
The OML 13-7-OMLp-oa3f of Fig.~\ref{fig:oa4f-5f} has a strong set
of states and thus is an MGO.  However, it violates the 3OA law.
\end{proof}

\begin{theorem}\label{th:oa-not-3go}
There are $n${\rm OA}s for $n=3,4,5,6$ that are not {\rm 3GO}s and thus
not $n${\rm GO}s for any $n$ nor {\rm MGO}s.
\end{theorem}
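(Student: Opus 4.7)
The plan is to exhibit, for each of $n=3,4,5,6$, a finite OML that satisfies the $n$OA law but fails 3-Go. Once such lattices are produced, the remainder of the conclusion is automatic: failing 3-Go entails failing $n$-Go for every $n\ge 3$ by the contrapositive of Lemma~\ref{lem:god-iimpliesn-1}, and Theorem~\ref{th:mgo-in-ngo} (which asserts $\mathrm{MGO}\subseteq n\mathrm{GO}$ for all $n$) then rules out the MGO property as well. So the entire content of the theorem rests on producing the four witness lattices (with some of them possibly coinciding for several values of $n$, since by Theorem~\ref{th:stronger} the $n$OA laws are nested).

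First I would search among small Greechie diagrams. The obvious first candidate is the wagon-wheel $\mathrm{G3}$ of Fig.~\ref{fig:oag34}(a), the minimal OML violating 3-Go; one checks directly whether 3OA holds there, and similarly for the successive laws up to 6OA. More generally, following the methodology used in the proof of Theorem~\ref{th:stronger}, I would run Brendan McKay's exhaustive enumerator for finite OMLs and sieve the output using Norman Megill's checker for $n$OA laws, retaining those OMLs in which 3-Go fails but the relevant $n$OA law holds. Since smaller $n$OA laws are implied by larger ones in an OL (again Theorem~\ref{th:stronger}), producing a single lattice that passes 6OA and fails 3-Go would handle all four cases simultaneously; if no such lattice can be found cheaply, separate witnesses for each $n$ suffice.

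For each witness the verification splits into two tasks. Showing that 3-Go fails is just a matter of exhibiting atoms $a_1,a_2,a_3$ for which the two sides of Eq.~(\ref{eq:godow3o}) disagree, which is a single finite computation in the lattice. Showing that an $n$OA law holds for all assignments is far heavier: as noted in the proof of Theorem~\ref{th:stronger}, the expanded $n$OA equation has roughly $4\cdot 3^{n-2}+3$ variable occurrences, and one must verify Eq.~(\ref{eq:noa}) on every tuple of lattice elements assigned to its $n$ free variables.

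The main obstacle is the combinatorial blow-up at $n=5,6$: checking 6OA across every candidate OML becomes extremely expensive, so the search must be pruned aggressively (for instance by imposing a small node count, or by first discarding candidates in which 3-Go already holds before attempting the costly $n$OA tests). Once the witnesses are found and the two verifications are completed with the existing software tools, the conclusion follows immediately by combining the failure of 3-Go with Lemma~\ref{lem:god-iimpliesn-1} and Theorem~\ref{th:mgo-in-ngo}, as described above.
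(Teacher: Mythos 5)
Your proposal follows essentially the same route as the paper: the paper's proof simply exhibits one finite OML (the 17-atom lattice of Fig.~\ref{fig:oa-not-3go}, found by exactly the kind of computer search you describe) that satisfies 6OA---hence $n$OA for $n=3,4,5,6$ by the nesting in Theorem~\ref{th:stronger}---while violating 3-Go, and then concludes via Lemma~\ref{lem:god-iimpliesn-1} and Theorem~\ref{th:mgo-in-ngo} just as you do. The only difference is that you describe the search procedure rather than naming the concrete witness, which is unavoidable without access to the figure.
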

\begin{proof}
The OML of Fig.~\ref{fig:oa-not-3go} is a 6OA that is not a 3GO.
\end{proof}

\begin{figure}[hbt]
\begin{center}
\includegraphics[width=0.8\textwidth]{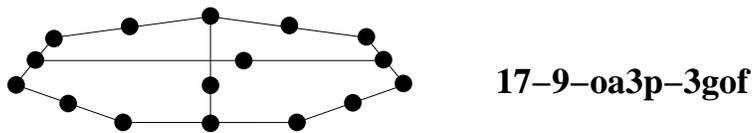}
\end{center}
\caption{An OML that is $n$OA for $n=3,4,5,6$ but which is neither
$n$GO for any $n$ nor MGO.}
\label{fig:oa-not-3go}
\end{figure}

Whether Theorem~\ref{th:oa-not-3go} holds for all $n$OAs remains an open
problem.  However, our observation is that the smallest OMLs in which
the $n$OA law passes but the $(n+1)$OA law fails grow in size with
increasing $n$, as indicated by the OMLs used to prove Theorem
\ref{th:stronger}.  Compared to them, the OML of
Fig.~\ref{fig:oa-not-3go} is ``small,'' leading us to conjecture that it
is an $n$OA for all $n$.  If this conjecture is true, it would show that
no $n$-Go equation can be derived (in an OML) from the $n$OA laws.

Figure \ref{fig:hl-chart} summarizes the relationships among the known
families of Hilbert lattice equational varieties.  In particular, it is
unknown whether MGO, any $n$OA, or any $n$GO contains the class of
modular OLs, with the single exception of 3OA.  In other words, the modular
law implies the 3OA law in an OL, but it is unknown if it implies any
of the other stronger-than-OML equations discussed in this chapter.

\begin{figure}[hbt]\centering
  \setlength{\unitlength}{0.9pt}
\begin{picture}(350,250)(45,0)
%\multiput(45,165)(0,12){7}{
%  \multiput(0,0)(12,0){28}{
%       \line(1,1){12}}}
\multiput(45,165)(10,0){34}{\line(1,0){5}}

\put(190,216){\framebox(60,20){Boolean}}
\put(220,195){\vector(0,1){21}}
\put(175,174){\framebox(90,20){Modular OL}}
\put(380,172){\makebox(0,0)[r]{non-Hilbert lattices}}

\put(380,160){\makebox(0,0)[r]{Hilbert lattices}}
\put(330,147){\vector(-1,1){80}}
\put(300,156){\makebox(0,0)[l]{?}}
\multiput(315,137)(-3,3){15}{\line(1,0){0.8}}
     \put(275,177){\vector(-1,1){10}}
\put(140,156){\makebox(0,0)[r]{?}}

% extra stuff
\put(155,157){\makebox(0,0)[l]{$n=3$}}
\put(155,148){\makebox(0,0)[l]{proved}}

\multiput(125,137)(3,3){15}{\line(1,0){0.8}}
     \put(165,177){\vector(1,1){10}}
\put(110,147){\vector(1,1){80}}
  \put(65,126){\framebox(60,20){$n$OA}}
  \put(315,126){\framebox(60,20){MGO}}

\put(345,105){\vector(0,1){21}}
\put(90,105){\vector(0,1){21}}
  \put(65,84){\framebox(60,20){OA}}
  \put(315,84){\framebox(60,20){$n$GO}}
\put(250,52){\vector(3,2){65}}
\put(220,63){\vector(0,1){111}}
\put(190,52){\vector(-3,2){65}}

%extra stuff
\put(330,126){\vector(0,-1){21}}
\put(326,112){\line(1,1){10}}
\put(310,124){\makebox(0,0)[r]{Megill/Pavi\v ci\'c}}
\put(310,112){\makebox(0,0)[r]{(2006)}}
\put(105,126){\vector(0,-1){21}}
\put(101,112){\line(1,1){10}}
\put(130,124){\makebox(0,0)[l]{Megill/Pavi\v ci\'c}}
\put(130,112){\makebox(0,0)[l]{(2000)}}

\put(295,15){\vector(1,0){15}}
\put(315,15){\makebox(0,0)[l]{means $\supset$}}
\put(295,3){\vector(1,0){15}}
%\put(298,0){\qbezier(6,6)(3,3)(0,0)}
\put(297,8){\line(1,-1){10}}

\put(315,3){\makebox(0,0)[l]{means $\nsupseteq$}}

  \put(190,42){\framebox(60,20){OML}}
\put(220,21){\vector(0,1){21}}
  \put(190,0){\framebox(60,20){OL}}
\end{picture}
\caption{Known relationships among equational varieties of
Hilbert lattices.}
\label{fig:hl-chart}
\end{figure}
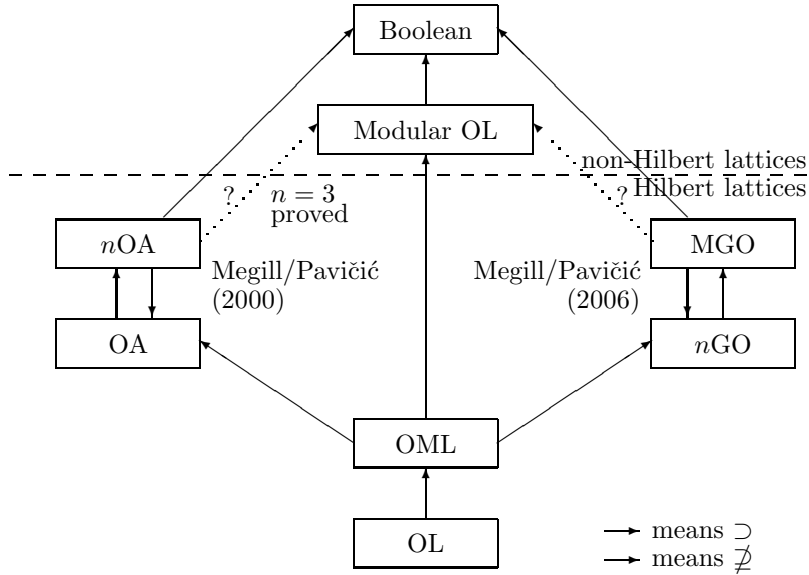

\section{State Vectors: Mayet's E-Equations}
\label{sec:mayet}

In the three previous sections we have presented two
apparently very different ways of generating Hilbert
lattice equations. The first one was algebraic, utilizing
an algebraic formulation of a geometric property possessed
by any Hilbert space. The second one was based on the
the properties of states (probability measures) one can
define on any Hilbert space. Theorem \ref{th:repr} in
Section \ref{sec:hl} offers us a property of a third kind
which any Hilbert space possesses and which can generate
a class of Hilbert lattice equations, and this is that
each Hilbert space is defined over a particular field
$\mathcal K$.

The application to quantum theory uses the Hilbert
spaces defined over real, $\mathcal R$,  complex, $\mathcal C$,
or quaternion (quasi), $\mathcal Q$, fields. For these fields,
in 2006, Ren{\'e} Mayet \cite{mayet06} (see also 
\cite{mayet06-hql2}) used a technique
similar to the one used for generating MGEs we presented in
Sec.~\ref{sec:mge}, to arrive at a new class of E-equations
we will present in this section. There are other fields over
infinite-dimensional Hilbert spaces, for example a
non-archimedean Keller field.~\cite{keller,gross,soler},
so, to get only the aforementioned three fields for an
infinite-dimensional Hilbert space, we have to assume 
infinite orthonormality and invoke the theorem of Maria Pia
Sol{\`e}r \cite{soler} (Theorem \ref{th:sol}). If we do not
have an infinite orthonormal series of vectors, then,
for an arbitrary vector $a\in\mathcal H$, a vector
$b\in {\mathcal K}a$, satisfying
$(b,b)=1_{\mathcal K}$, where $(\,,)$ is the inner product in 
$\mathcal H$, might not exist.
If we have an  orthonormal series of vectors, we will
always have vectors satisfying the condition
$(b,b)=1_{\mathcal K}$, and this enables us to
introduce {\em Hilbert-space-valued states}\/$\,$\footnote{One
could also name them {\em vector states} because they map
elements of a Hilbert lattice to state vectors of the Hilbert
space, but we decided to keep to the name introduced by
Mayet.\ \cite{mayet06}}
as follows.

\begin{definition}\label{def:rh-state} A {\em real
Hilbert-space-valued state}---we call it an $\mathcal{RH}$
{\em state}---on an orthomodular lattice $\mathcal L$
is a function $s:{\mathcal L}\longrightarrow \mathcal{RH}$,
where $\mathcal{RH}$ is a Hilbert space defined over a real
field, such that

\begin{itemize}
\item[] $||s(1_{\mathcal L})||=1$,
where $s(a)\in {\mathcal RH}$ is a state vector, 
$||s(a)||=\sqrt{(s(a),s(a))}$ is the Hilbert space norm, and
$a\in {\mathcal L}$; in this section we will not use the 
Dirac notation $|s\rangle$ for the state vector $s$, nor $\langle s|t\rangle$ 
for the inner product $(s,t)$; 
\item[] $(\forall a,b\in {\mathcal L})\left[\>a\perp b\
\Rightarrow\ s(a\cup b)=s(a)+
s(b)\>\right]$,
where $a\perp b$ means $a\le b'$;
\item[] $(\forall a,b\in {\mathcal L})\left[\>a\perp b\
\Rightarrow\ s(a)\perp s(b)\>\right]$,
where $\>s(a)\perp s(b)\>$ means
the inner product $(s(a),s(b))=0$.
\end{itemize}

Now, we select those Hilbert lattices in which we implement
Definition \ref{def:rh-state} by the following definition.
\begin{definition}\label{def:qhl} A {\em
quantum\/$\,$\footnote{Mayet \cite{mayet06} calls this lattice
{\em classical Hilbert lattice} but
since the real and complex fields as well as the quaternion
skew filed over which the corresponding Hilbert space is
defined are characteristic of its application in quantum
mechanics we prefer to call the lattice quantum.} Hilbert
lattice}, $\mathcal{QHL}$, is a Hilbert lattice orthoiso\-morphic
to the set of closed subspaces of the Hilbert space defined
over either a real field, or a complex field, or a quaternion
skew field.
\end{definition}

In 1998 Ren{\'e} Mayet \cite{mayet98} showed that each of
the three quantum Hilbert lattices can be given a rigorous
definition so that the three classes of them are proper
subclasses of the class of all Hilbert lattices.
However, as with equations in the previous
sections, we shall use only some properties related to
states defined on a $\mathcal{QHL}$, in particular pairwise
orthogonality of its elements---corresponding to
pairwise orthogonality of vectors in the corresponding Hilbert
space---to arrive at new equations.

We also define a {\em complex} and a
{\em quaternion Hilbert-space-valued state}, called a $\mathcal{CH}$
{\em state} and a $\mathcal{QH}$
{\em state}, by mapping $s$ to
$\mathcal{CH}$ or $\mathcal{QH}$, i.e. a Hilbert
space defined over a complex or quaternion field respectively.
\end{definition}

This definition differs from Definition \ref{def:state}
in a crucial point, in that the
state does not map the elements of the lattice
to the real interval $[0,1]$ but instead to the real
Hilbert space $\mathcal{RH}$. In particular, the property
$a\perp b\ \Rightarrow\ s(a)\perp s(b)$ is a
a restrictive requirement that allows us to define a strong
set of $\mathcal{RH}$ states on a $\mathcal{QHL}$ but
not on OMLs in general---even those admitting strong
sets of real-valued states---nor even on all Hilbert
lattices.

The conditions of Lemma~\ref{lem:state} hold when
we replace a real state value $m(a)$ with the square
of the norm of the $\mathcal{RH}$ state value $s(a)$.
For example, Eq.~(\ref{eq:state3}) becomes
\begin{eqnarray}
||s(a)||^2 + ||s(a')||^2 = 1,
\end{eqnarray}
and so on.  In addition, we can prove the following special
properties that hold for $\mathcal{RH}$ states:

\begin{lemma}\label{lem:rhstate}
The following properties hold for any $\mathcal{RH}$ state $s$:
\begin{eqnarray}
& s(0)=0 \label{eq:rhstate1}\\
& s(a)+s(a')=s(1)  \label{eq:rhstate2}\\
& ||s(a)||=1 \qquad \Leftrightarrow \qquad s(a)=s(1)  \label{eq:rhstate3}\\
& ||s(a)||=0 \qquad \Leftrightarrow \qquad s(a)=s(0)  \label{eq:rhstate4}\\
& s(a)\perp s(1) \qquad \Leftrightarrow \qquad s(a)=0  \label{eq:rhstate5}\\
& a\perp b \qquad \Rightarrow \qquad ||s(a\cup b)||^2 =
  ||s(a)||^2 + ||s(b)||^2  \label{eq:rhstate6}\\
& a\le b \qquad \Rightarrow \qquad ||s(a)||\le ||s(b)||
      \label{eq:rhstate7}\\
& a\le b \qquad \& \qquad ||s(a)||=1 \qquad \Rightarrow \qquad ||s(b)||=1
    \label{eq:rhstate8}\\
& a_i \perp a_j (1\le i<j\le n) \quad
   \& \quad a_1\cup\cdots\cup a_n=1 \qquad \Rightarrow\nonumber\\
& \qquad\qquad\qquad
   s(a_1)+\cdots +s(a_n)=s(1) \label{eq:rhstate9}
\end{eqnarray}
\end{lemma}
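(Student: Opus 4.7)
The plan is to establish the nine properties in an order that allows each to build on its predecessors and the three defining axioms of an $\mathcal{RH}$ state. I would start with Eq.~(\ref{eq:rhstate1}), $s(0)=0$, which follows instantly from $0\perp 0$ (since $0\le 1=0'$) and the additivity axiom: $s(0)=s(0\cup 0)=s(0)+s(0)$, so $s(0)=0$. For Eq.~(\ref{eq:rhstate2}), observe that $a\perp a'$ and $a\cup a'=1$, hence $s(1)=s(a\cup a')=s(a)+s(a')$ by additivity. Eq.~(\ref{eq:rhstate6}) is just the Pythagorean identity applied to $s(a\cup b)=s(a)+s(b)$ together with $(s(a),s(b))=0$.

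From these I would derive Eq.~(\ref{eq:rhstate3}) by computing $1=\|s(1)\|^{2}=\|s(a)+s(a')\|^{2}=\|s(a)\|^{2}+\|s(a')\|^{2}$ (using Eqs.~(\ref{eq:rhstate2}) and (\ref{eq:rhstate6})); if $\|s(a)\|=1$ then $\|s(a')\|=0$, so $s(a')=0$ and $s(a)=s(1)$ by Eq.~(\ref{eq:rhstate2}); the converse is immediate. Eq.~(\ref{eq:rhstate4}) is essentially the norm axiom $\|v\|=0\Leftrightarrow v=0$ combined with Eq.~(\ref{eq:rhstate1}). For Eq.~(\ref{eq:rhstate5}), substitute $s(1)=s(a)+s(a')$ into the inner product: $(s(a),s(1))=\|s(a)\|^{2}+(s(a),s(a'))=\|s(a)\|^{2}$, since $a\perp a'$ forces $(s(a),s(a'))=0$; hence $s(a)\perp s(1)$ iff $\|s(a)\|=0$ iff $s(a)=0$.

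For Eqs.~(\ref{eq:rhstate7}) and (\ref{eq:rhstate8}) I would invoke the orthomodular identity $b=a\cup(a'\cap b)$ whenever $a\le b$, noting that $a\perp(a'\cap b)$. Then Eq.~(\ref{eq:rhstate6}) gives $\|s(b)\|^{2}=\|s(a)\|^{2}+\|s(a'\cap b)\|^{2}\ge\|s(a)\|^{2}$, establishing (\ref{eq:rhstate7}). Eq.~(\ref{eq:rhstate8}) then follows because $b\le 1$ yields $\|s(b)\|\le\|s(1)\|=1$, so $\|s(a)\|=1$ pins $\|s(b)\|=1$. Finally, Eq.~(\ref{eq:rhstate9}) is proved by finite induction on $n$: the base case is $s(1)=s(a_1\cup\cdots\cup a_n)$, and the inductive step applies the additivity axiom after verifying (using the OML fact that a disjunction of mutually orthogonal atoms is orthogonal to any element orthogonal to each summand, cf.\ Lemma~\ref{lem:mge1}) that $a_1\cup\cdots\cup a_{k}$ is orthogonal to $a_{k+1}$.

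No step looks difficult in isolation; the mild obstacle is ensuring that the orthogonality needed in the additivity axiom propagates correctly from pairwise to joint orthogonality, so that additivity can be iterated in Eq.~(\ref{eq:rhstate9}). This is exactly the content of Lemma~\ref{lem:mge1}, which I would quote rather than reprove. All other steps are routine manipulations of inner products and norms using only the three defining properties of an $\mathcal{RH}$ state and basic OML identities.
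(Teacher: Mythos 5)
Your proof is correct, but it is worth noting that the paper itself does not actually carry out this argument: its entire ``proof'' is a citation --- some of the conditions are proved in \cite{mayet06} and the rest are declared straightforward consequences. What you have written is precisely the elementary filling-in that the paper delegates to the reader and to Mayet: everything is derived directly from the three clauses of Definition~\ref{def:rh-state}, with the Pythagorean identity giving Eq.~(\ref{eq:rhstate6}) and hence Eq.~(\ref{eq:rhstate3}), the orthomodular identity $b=a\cup(a'\cap b)$ (legitimate here, since $\mathcal{L}$ is an OML) giving the monotonicity Eq.~(\ref{eq:rhstate7}) and then Eq.~(\ref{eq:rhstate8}), the computation $(s(a),s(1))=\|s(a)\|^{2}$ giving Eq.~(\ref{eq:rhstate5}), and Lemma~\ref{lem:mge1} supplying the orthogonality propagation needed to iterate additivity for Eq.~(\ref{eq:rhstate9}). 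Each of these steps checks out (your treatment of $s(0)$ via $0\perp 0$ and of Eq.~(\ref{eq:rhstate2}) via $a\perp a'$, $a\cup a'=1$ is exactly right; in Eq.~(\ref{eq:rhstate9}) the elements need not be atoms, only mutually orthogonal, but that is a harmless slip of wording since Lemma~\ref{lem:mge1} needs no atomicity). So your route is not wrong or even different in spirit --- it is simply self-contained where the paper leans on the external reference, which buys the reader a verifiable argument at the cost of a few lines.
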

\begin{proof}
Some of these conditions are proved in \cite{mayet06}, and the others are
straightforward consequences of them.
\end{proof}
The conditions of
Lemma~\ref{lem:rhstate}, as well as the analogues of Lemma~\ref{lem:state},
also hold for $\mathcal{CH}$ and $\mathcal{QH}$ states.

The following definition of a strong set of $\mathcal{RH}$ states
closely follows Definition \ref{def:strong}, with an essential
difference in the range of the states.

\begin{definition}\label{def:strong-hs}
A nonempty set $S$ of $\mathcal{RH}$ states
$s:{\mathcal L}\longrightarrow {\mathcal{RH}}$ is
called a {\em strong set of $\mathcal{RH}$} states if
\begin{eqnarray}
(\forall a,b\in{\mathcal L})(\exists s \in S)((||s(a)||=1\ \Rightarrow
\ ||s(b)||=1)\ \Rightarrow\ a\le b)\,.\quad
\label{eq:st-rhs}
\end{eqnarray}
In an analogous manner, we define
a {\em strong set of $\mathcal{CH}$} states and
a {\em strong set of $\mathcal{QH}$} states.
\end{definition}

The following version of Theorem \ref{hilb-strong-s}
holds.~\cite{mayet06}

\begin{theorem}\label{rh-strong-s} Any quantum Hilbert
lattice admits a strong set of $\mathcal{RH}$ states.
\end{theorem}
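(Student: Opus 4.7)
The plan is to mimic the construction in the proof of Theorem~\ref{hilb-strong-s}, replacing the real-valued ``projection squared-norm'' state by the \emph{projection vector} itself, so that orthogonality of lattice elements is preserved as orthogonality of state vectors. Since a $\mathcal{QHL}$ is, by Definition~\ref{def:qhl}, ortho-isomorphic to $\mathcal C(\mathcal H)$ for $\mathcal H$ a Hilbert space over $\mathbb R$, $\mathbb C$, or the quaternions, I may work with closed subspaces of $\mathcal H$ directly. For any Hilbert space over these three fields, the underlying real vector space equipped with the real inner product $(x,y)_\mathbb{R}=\operatorname{Re}(x,y)$ is a real Hilbert space, which I take to be $\mathcal{RH}$; crucially, if $x\perp y$ in $\mathcal H$ then also $(x,y)_\mathbb{R}=0$, so inner-product orthogonality carries over to the realification.

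For every unit vector $u\in\mathcal H$, I would define
\begin{eqnarray}
s_u:\mathcal C(\mathcal H)\longrightarrow\mathcal{RH},\qquad s_u(c)\ =\ P_c u,\nonumber
\end{eqnarray}
where $P_c$ is the orthogonal projection onto the closed subspace $c$. Verifying the three clauses of Definition~\ref{def:rh-state} is then direct: $\|s_u(1_{\mathcal L})\|=\|u\|=1$; if $a\perp b$, then $P_{a\cup b}=P_a+P_b$ gives $s_u(a\cup b)=s_u(a)+s_u(b)$; and since $P_a u\in a$ and $P_b u\in b$, the orthogonality $a\perp b$ forces $(s_u(a),s_u(b))=0$, hence also $(s_u(a),s_u(b))_\mathbb{R}=0$. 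Let $S=\{s_u:\|u\|=1,\,u\in\mathcal H\}$.

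To establish the strong-set condition of Definition~\ref{def:strong-hs}, fix $a,b\in\mathcal C(\mathcal H)$. If $a\le b$, any $s\in S$ trivially satisfies the implication. If $a\not\le b$, there exists a unit vector $u\in a\setminus b$; for this $u$, $s_u(a)=P_a u=u$ so $\|s_u(a)\|=1$, while $\|s_u(b)\|=\|P_b u\|<\|u\|=1$ because $\|P_b u\|=\|u\|$ would force $u\in b$. Hence the premise $\|s_u(a)\|=1\Rightarrow\|s_u(b)\|=1$ is false, making the outer implication vacuously a witness to the existential, as required.

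The only point that requires real care is the cross-field bookkeeping: the states are required to land in a \emph{real} Hilbert space, and the axioms and the subsequent identities of Lemma~\ref{lem:rhstate} must be validated using the real inner product even when $\mathcal H$ is complex or quaternionic. This is really the only obstacle; once one checks that the realification preserves norms, orthogonal direct sums and orthogonality of projection images, the rest of the argument is just the Hilbert-space geometry of orthogonal projections and transcribes Theorem~\ref{hilb-strong-s} almost verbatim. Analogous constructions, using $\mathcal H$ itself rather than its realification, yield strong sets of $\mathcal{CH}$ states on a complex $\mathcal{QHL}$ and strong sets of $\mathcal{QH}$ states on a quaternionic $\mathcal{QHL}$.
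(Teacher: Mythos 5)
Your construction is correct and its core is the same as the paper's: for each unit vector $u$ you set $s_u(a)=P_a u$, check the three clauses of Definition~\ref{def:rh-state} (additivity from $P_{a\cup b}=P_a+P_b$ for orthogonal closed subspaces, orthogonality of images from $P_au\in a$, $P_bu\in b$), and then run the argument of Theorem~\ref{hilb-strong-s} with $m(a)=\|s_u(a)\|^2$ to get the strong-set condition of Definition~\ref{def:strong-hs}. Where you genuinely diverge is in the field bookkeeping. The paper's proof observes only that $s$ is an $\mathcal H$-valued state, i.e.\ an $\mathcal{RH}$, $\mathcal{CH}$, or $\mathcal{QH}$ state according to the field underlying the $\mathcal{QHL}$, and then closes the gap for the complex and quaternionic cases by invoking the cited equivalence (from Mayet \cite{mayet06}) that an OML admits a strong set of $\mathcal{RH}$ states iff it admits a strong set of $\mathcal{CH}$ states iff it admits a strong set of $\mathcal{QH}$ states. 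You instead realify: you regard the complex or quaternionic $\mathcal H$ as a real Hilbert space under ${\rm Re}(x,y)$, note that norms, orthogonality, and orthogonal projections are preserved, and thereby land directly in a genuine $\mathcal{RH}$ in all three cases. This makes your proof self-contained (no appeal to the external equivalence theorem), at the mild cost of having to verify that the realification really is a real Hilbert space and that the axioms of Definition~\ref{def:rh-state} survive the passage to the real inner product, which you do; and, as you note, keeping $\mathcal H$ itself as the target recovers the paper's statement that a complex or quaternionic $\mathcal{QHL}$ also admits strong sets of $\mathcal{CH}$ or $\mathcal{QH}$ states, which is exactly the intermediate point at which the paper applies the cited equivalence.
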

\begin{proof}
Let $\mathcal{L}$ be a $\mathcal{QHL}$.
For each $u$ in the proof of Theorem~\ref{hilb-strong-s}, define
$s(a)=P_a(u)$.  We thus have $s:\mathcal{L}\longrightarrow
{\mathcal{H}}$, where $\mathcal{H}$ is $\mathcal{RH}$, $\mathcal{CH}$,
or $\mathcal{QH}$ according to the field underlying $\mathcal{L}$.
A theorem for projectors tells us that $a\perp b \ \Rightarrow \
P_a(u)\perp P_b(u)$, showing that $s$ satisfies the third condition of
Def.~\ref{def:rh-state}.  The other two conditions are easy to verify,
so $s$ is a $\mathcal{H}$ state.  Observing that
$m(a)=||s(a)||^2$ in the proof of Theorem~\ref{hilb-strong-s}, a nearly
identical proof shows that $\mathcal{L}$ admits a strong set of
$\mathcal{H}$ states.  Since any OML admits a strong set of
$\mathcal{RH}$ states iff it admits a strong set of $\mathcal{CH}$ iff
it admits a strong set of $\mathcal{QH}$ states, \cite{mayet06} we
conclude that any $\mathcal{QHL}$ admits a strong set of $\mathcal{RH}$
states.
\end{proof}

Now, Mayet \cite{mayet06} showed that the lack of
$\mathcal{RH}$ strong states for particular lattices,
for example, the ones given in Figure \ref{fig:en},
gives the equations in the way similar to the one
used by Megill and Pavi\v ci\'c \cite{mp-gen-godowski06-arXiv}.
For certain infinite sequences of equations, Mayet's
method offers the advantage of providing a related
infinite sequence of finite OMLs that violate the
corresponding equation, analogous to the wagon-wheel
series obtained by Godowski and presented in
Section \ref{sec:ge}.

\begin{figure}[hbt]
\begin{center}
\includegraphics[width=0.8\textwidth]{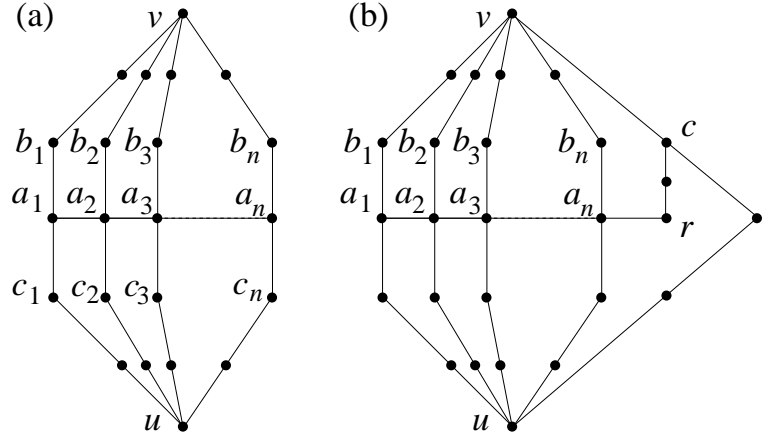}
\end{center}
\caption{Greechie diagrams $\mathcal{L}_n$ in which $E_n$
fail and which serve to generate $E_n$.}
\label{fig:en}
\end{figure}

Let us first denote by $\Omega$ the following
set of orthogonality conditions
among the labeled atoms in Figure \ref{fig:en}$\>$(a):
$\Omega=\{v\perp b_i,\
b_i\perp a_i,\
\ a_i\perp a_j\}$, $i,j=1,\dots,n$.
Next, we define
\begin{eqnarray}
a = a_1 \cup \cdots \cup a_n,\quad
q = (a_1 \cup b1) \cap \cdots \cap (a_n \cup b_n),\quad
b = b_1 \cup\cdots\cup b_n\,.\label{eq:aqb}
\end{eqnarray}

Now we are able to generate the following equations, i.e.,
to prove the following theorem.

\begin{theorem} \label{th:mayet-eqs}
In $\mathcal{L}_i$, $i=1,\dots,n$, $n\ge 3$ given in Figure
\ref{fig:en}$\>$(a),(b) the following equations fail
\begin{eqnarray}
&E_n:\quad \Omega&\quad  \Rightarrow
\quad a \cap q = b
\label{eq:mayet-eqs-en-a}\\
&E'_n:\quad \Omega& \&\quad r\perp a\quad
\Rightarrow \quad q \cap (q\to r')\cap(a\cup r)\le b
\label{eq:mayet-eqs-en-b}
\end{eqnarray}
respectively and they hold in any OML with a strong set of
$\mathcal{RH}$ states.
\end{theorem}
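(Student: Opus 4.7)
The theorem makes two claims: $E_n$ and $E'_n$ fail in the Greechie lattices $\mathcal{L}_n$, and both hold in every OML admitting a strong set of $\mathcal{RH}$ states. I would prove each claim in turn.

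For the failure in $\mathcal{L}_n$, my plan is to appeal to Theorem~\ref{th:loop-lemma}: since each $\mathcal{L}_n$ in Fig.~\ref{fig:en} is drawn so that it contains no loops of order~4, it is realized as a genuine OML, and meets and joins reduce to block-by-block computation in the pasting of Boolean blocks. I would then exhibit the hub atom $v$ (resp.\ the appropriate atom in the augmented diagram) as a concrete element that lies in the left-hand side of $E_n$ (resp.\ $E'_n$) but not in $b$, providing the required counterexample.

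For the validity, the first move is a simplification: in any OML, using the cancellation $q\cap(q\to r')=q\cap r'$ and, under $r\perp a$, the Foulis--Holland identity $r'\cap(a\cup r)=a$, the left-hand side of $E'_n$ collapses to $a\cap q$, so $E'_n$ is equivalent to $a\cap q\le b$. Hence it remains to prove the equality $a\cap q=b$ under $\Omega$ in a strong-$\mathcal{RH}$-state OML. By Definition~\ref{def:strong-hs}, it suffices to show, for every $\mathcal{RH}$ state~$s$, the bi-implication $\|s(a\cap q)\|=1 \Leftrightarrow \|s(b)\|=1$. For the forward direction, Eq.~(\ref{eq:rhstate8}) forces $\|s(a)\|=\|s(q)\|=1$, hence (using the mutual orthogonality of the $a_i$'s) the vector identity $s(1)=\sum_i s(a_i)$, and (using $q\le a_i\cup b_i$ with $a_i\perp b_i$) the identities $s(b_i)=s(1)-s(a_i)=\sum_{j\ne i}s(a_j)$. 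The hypothesis $v\perp b_i$ then gives $(s(v),s(b_i))=0$ for each $i$; this linear system, invertible for $n\ge 3$, yields $(s(v),s(a_j))=0$ for all $j$, hence $(s(v),s(1))=0$ and $s(v)=0$ by Eq.~(\ref{eq:rhstate5}).

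The main obstacle is promoting these identities to $\|s(b)\|=1$: because the $b_i$'s are not mutually orthogonal, ordinary $\mathcal{RH}$-state additivity does not let one read off $s(b)$ from the $s(b_i)$ alone. My plan is to combine $v\perp b$ (obtained from $v\perp b_i$ for all $i$ via Lemma~\ref{lem:mge1}) with the vector identity $(n-1)\,s(1)=\sum_i s(b_i)$ to pin down $s(v\cup b)=s(b)$ as $s(1)$, arguing by iterated orthomodular decomposition $s(x\cup y)=s(x)+s(x'\cap(x\cup y))$ and the norm identities of Lemma~\ref{lem:rhstate}. The reverse implication $\|s(b)\|=1\Rightarrow\|s(a\cap q)\|=1$ will then follow by a parallel computation in which the roles of the $a_i$ and $b_i$ are exchanged; this closes the equality $a\cap q=b$ and, through the reduction above, also yields $E'_n$.
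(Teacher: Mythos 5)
Your reduction of $E'_n$ to $a\cap q\le b$ (via $q\cap(q\to r')=q\cap r'$ and $r'\cap(a\cup r)=a$ under $r\perp a$) is correct and is a nice shortcut the paper does not make (the paper simply defers $E'_n$ to Mayet), and your forward computation up to $s(v)=0$ is essentially the paper's $\mathcal{RH}$-state argument, transplanted from the lattice $\mathcal{L}_n$ to the hypotheses $\Omega$. But two steps break down. First, $s(v)=0$ does not yield $\|s(b)\|=1$: $v$ is an arbitrary element orthogonal to the $b_i$ (it could be $0$), and nothing in $\Omega$ forces $v\cup b=1$, so your plan to ``pin down $s(v\cup b)=s(b)$ as $s(1)$'' has no basis. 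The step that works—and is what the paper's argument implicitly uses when it passes from the lattice picture to an equation—is to apply the orthogonality-preservation clause of Definition~\ref{def:rh-state} to $b'$ itself: $b'\perp b_i$ for every $i$, hence $s(b')\perp\sum_i s(b_i)=(n-1)s(1)$, hence $s(b')=0$ by Eq.~(\ref{eq:rhstate5}) and $\|s(b)\|=1$ by the analogue of Eq.~(\ref{eq:state3}).

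Second, the reverse inclusion cannot be obtained ``by exchanging the roles of the $a_i$ and $b_i$'': the hypotheses are not symmetric (the $a_i$ are mutually orthogonal, the $b_i$ are not), and the claim is not just unproved but false. In $\mathcal{C}(\mathbb{R}^4)$, which does admit a strong set of $\mathcal{RH}$ states, take $n=3$, $a_1,a_2,a_3$ three coordinate axes, $b_1,b_2,b_3$ the lines spanned by $(0,1,1,0)$, $(1,0,1,0)$, $(1,1,0,0)$, and $v=0$; then $\Omega$ holds, $a\cap q$ is the line spanned by $(1,1,1,0)$, while $b$ is the hyperplane $x_4=0$, so $b\not\le a\cap q$. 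What the state argument delivers is only $a\cap q\le b$, and this is also all that the paper's own proof establishes: it proves the inequality $E^1_n$ (conclusion $\le v'$, with the extra variables $u,c_i$) and cites \cite{mayet06} for the passage to $E_n$; you should treat $E_n$ in that inequality form rather than attempting the displayed equality. Finally, a concrete slip in your failure argument: $v$ is not a witness, since $v$ shares no block with the $c_i$ and hence $v\not\le a_i\cup b_i=c_i'$, so $v\not\le q$. The element of $\mathcal{L}_n$ lying below $a\cap q$ but not below $b$ is the atom $u$: it is orthogonal to every $c_i$, so $u\le q$ (and $a=1$ there), while $b\le v'$ and $u\not\le v'$.
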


\begin{proof}
We will show details of proof for equations $E_n$.  The
proof for $E'_n$ involves similar ideas, and we refer the reader to
Mayet.\ \cite{mayet06}

First we show that the OML of Figure \ref{fig:en}$\>$(a) does not admit
a strong set of $\mathcal{RH}$ states.  Referring to the atoms labeled
in the figure, suppose that $s$ is a state such that $s(u)=s(1)$.  By
Eq.~(\ref{eq:rhstate9}), the condition $s(u)=s(1)$ implies that the
state value of all other atoms in the blocks that atom $u$ connects to
are $0$, so $s(c_i)=0$ and thus $s(a_i)+s(b_i)=s(1)$, $i=1,\ldots,n$.
Summing these then using $s(a_1)+\cdots +s(a_n)=s(1)$, we obtain
$s(a_1)+\cdots +s(a_n)+s(b_1)+\cdots +s(b_n)=ns(1)=
s(1)+s(b_1)+\cdots +s(b_n)$, or $s(b_1)+\cdots +s(b_n)=(n-1)s(1)$.  The
primary feature that distinguishes real-valued states and $\mathcal{RH}$
states now comes into play:  from the third condition in Definition
\ref{def:rh-state}, $v\perp b_i$ implies $s(v)\perp s(b_i)$ for
$i=1,\ldots,n$.  Thus $s(v)\perp s(b_1)+\cdots +s(b_n)$ i.e.\
$s(v)\perp(n-1)s(1)$.  By Eq.~(\ref{eq:rhstate5}), then, $s(v)=0$ and
$s(v')=s(1)$.  To summarize, we have shown that for any state $s$, if
$||s(u)||=1$ then $||s(v')||=1$ in the OML of Figure
\ref{fig:en}$\>$(a).  If the OML admitted a strong set of $\mathcal{RH}$
states, we would have $u\le v'$, which is not true since those atoms are
incomparable.  This shows the OML does not admit a strong set of
$\mathcal{RH}$ states.

In the above proof, we used the following facts:
\begin{itemize}
\item The labeled atoms Figure \ref{fig:en}$\>$(a)
  that belong to the same block are
  mutually orthogonal;
\item The disjunctions $a_i \cup b_i \cup c_i = 1$ for $i=1,\ldots,n$;
\item The disjunction $a_1 \cup \cdots \cup a_n=a=1$.
\end{itemize}

If the elements of any OML $\mathcal{L}$ satisfies these
facts, then
we can prove (with a proof essentially identical to the one
above, using the above facts as hypotheses in place of the
atom and block constraints in the OML of Figure
\ref{fig:en}$\>$(a)) that
for any state $s$ on $\mathcal{L}$,
$||s(u)||=1$ implies $||s(v')||=1$.  Then, if $\mathcal{L}$ admits
a strong set of $\mathcal{RH}$ states, we also have $u\le v'$.
We write down an equation expressing these conditions as follows:
\begin{eqnarray}
&E^1_n:\quad \Omega^1&\quad  \Rightarrow
\quad u\cap a\cap(a_1\cup b_1\cup c_1)\cap
\cdots\cap(a_n\cup b_n\cup c_n)\le v',\quad
\label{eq:mayet-eqs-e1n-a}
\end{eqnarray}
where $\Omega^1=\Omega\cup \{b_i\perp c_i,
\ a_i\perp c_i,\ u\perp c_i\}$, $i=1,\dots,n$.
The ``disjunction = $1$'' conditions used by the
proof are incorporated
as terms on the left-hand side of the inequality.
With some substitutions and manipulations,
equation $E^1_n$ can be shown OML-equivalent to
$E_n$.\ \cite{mayet06}
\end{proof}

The equations of Theorem~\ref{th:mayet-eqs}, which hold in every
$\mathcal{QHL}$, do not hold in every $\mathcal{HL}$.
Thus they are independent from all of the equations we have
presented in Secs.~\ref{sec:goe}, \ref{sec:ge}, and
\ref{sec:mge}.  In addition, they are independent of the modular law.

\begin{theorem} \label{th:mayet4.2}
For any integer $n\ge 3$, the equation $E_n$ does not hold in every
$\mathcal{HL}$.  In particular, it is not a consequence of any $n${\rm OA}
law, $n${\rm GO} law, {\rm MGE}, or combination of them.  In addition,
it is not a consequence of these even in the presence of the modular law.
\end{theorem}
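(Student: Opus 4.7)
The plan is to establish all three claims of the theorem by exhibiting concrete counterexamples, leveraging the fact that the derivation of $E_n$ is specifically tailored to $\mathcal{RH}$ states rather than to real-valued states or subspace geometry.

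For the main claim that $E_n$ fails in some $\mathcal{HL}$, the essential observation is that Definition~\ref{def:hl} places no restriction on the underlying field $\mathcal{K}$. Hilbert lattices over non-classical fields -- in particular the non-Archimedean Keller fields cited at the start of Section~\ref{sec:mayet} -- are legitimate $\mathcal{HL}$s but are not $\mathcal{QHL}$s. The derivation of $E_n$ in Theorem~\ref{th:mayet-eqs} relied crucially on Theorem~\ref{rh-strong-s}, whose proof in turn uses the third clause of Definition~\ref{def:rh-state}, $a\perp b \Rightarrow s(a)\perp s(b)$. In an $\mathcal{HL}$ over a Keller field this clause no longer yields a strong set of $\mathcal{RH}$ states, so the derivation collapses, and following Mayet~\cite{mayet06} one exhibits a Keller-field Hilbert lattice in which a specific instance of $E_n$ fails.

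For the independence of $E_n$ from the $n$OA, $n$GO, and MGE laws, I would use the OML $\mathcal{L}_n$ of Figure~\ref{fig:en}: by Theorem~\ref{th:mayet-eqs}, $E_n$ already fails in $\mathcal{L}_n$, so it suffices to check that $\mathcal{L}_n$ satisfies the other laws. The pivotal step is to show that $\mathcal{L}_n$ admits a strong set of real-valued states; by design, $\mathcal{L}_n$ was fabricated so that its only state-theoretic pathology is the absence of strong sets of $\mathcal{RH}$ states (not of ordinary states), and this richness can be confirmed by a simplex-feasibility calculation of exactly the kind described in Section~\ref{sec:mge}. Once this is in hand, Theorems~\ref{th:god-eq} and~\ref{th:mge} immediately deliver all $n$GO laws and all MGEs in $\mathcal{L}_n$. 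The remaining obligation -- that every $n$OA law holds in $\mathcal{L}_n$ -- is a finitary check using the McKay--Megill routines already invoked in the proof of Theorem~\ref{th:stronger}. For the strengthening in the presence of the modular law, $\mathcal{L}_n$ itself is inadequate (it is not modular), so one instead seeks a modular ortholattice satisfying all of $n$OA, $n$GO, MGE but failing $E_n$; the natural source is the closed-subspace lattice of an infinite projective geometry over a non-classical coordinatizing ring, where modularity can coexist with the absence of the Sol\'er-field structure that $E_n$ secretly requires.

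The main obstacle I anticipate lies in the uniform verification for all $n\ge 3$: a single simplex calculation settles any fixed $n$, but a clean structural argument -- perhaps exhibiting a parameterized family of real-valued states indexed by the circular symmetry of $\mathcal{L}_n$, analogous to the wagon-wheel analysis in Section~\ref{sec:ge} -- is what one wants so the argument works for the whole family at once. A second delicate point is producing the modular counterexample, since modular OMLs are far more rigid than the Greechie diagrams that populate the rest of this chapter, and the standard finite catalogue contains essentially no modular lattices of the kind required.
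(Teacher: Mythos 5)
Your route diverges from the paper's, and it has genuine gaps. The paper's proof is a one-counterexample argument: by Theorem~4.1 of \cite{mayet06}, $E_n$ already fails in the subspace lattice of a \emph{finite-dimensional} Hilbert space over a suitable field, and such a lattice is an $\mathcal{HL}$ because Definition~\ref{def:hl} imposes no condition on the underlying field. Finite dimensionality is the whole point: every $n$OA law, $n$GO law and MGE holds in every $\mathcal{HL}$, and the modular law holds in every finite-dimensional one, so this single lattice settles all three claims at once. Your first step instead invokes infinite-dimensional Keller-field lattices, and the argument you give there ("the derivation of $E_n$ collapses because the third clause of Definition~\ref{def:rh-state} no longer yields a strong set of $\mathcal{RH}$ states") only shows that the paper's \emph{proof} of $E_n$ is unavailable, not that $E_n$ fails; you defer the actual failure to Mayet, but the result you need is the finite-dimensional one, and by moving to infinite dimensions you also give away modularity, which is why you are then forced into a separate, speculative hunt for a modular counterexample.

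The middle part of your plan is where the concrete obstacles lie. Using $\mathcal{L}_n$ of Figure~\ref{fig:en} as the counterexample requires (i) that $\mathcal{L}_n$ admits a strong set of real-valued states, which is nowhere established in the paper and which you yourself flag as needing a uniform argument for all $n\ge 3$; and (ii) that \emph{every} $n$OA law holds in $\mathcal{L}_n$, which is not a "finitary check": the McKay--Megill routines test one $n$OA law at a time, their cost grows exponentially in $n$ (the paper reports over ten years of CPU time just to separate 5OA from 6OA), and no analogue of the dynamic-programming method of Section~\ref{sec:mge} exists for the $n$OA family --- the paper explicitly lists this as an open problem, and its own conjecture about the lattice of Fig.~\ref{fig:oa-not-3go} being $n$OA for all $n$ remains unproved. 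Note also that a strong set of states would not help here, since by Theorem~\ref{th:mgo-not-3oa} admitting such a set does not imply even 3OA. Finally, your proposed modular counterexample (closed subspaces of an infinite projective geometry over a non-classical ring) is unnecessary and problematic --- infinite-dimensional closed-subspace lattices are not modular --- whereas the finite-dimensional lattice the paper uses is modular automatically. In short: replace your three separate verifications by the paper's single observation that $E_n$ fails in a finite-dimensional $\mathcal{HL}$ while all the other equations, modularity included, hold there.
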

\begin{proof}
The definition of $\mathcal{HL}$ does not require any special property of
the underlying field of the intended Hilbert space.  However,
Theorem~4.1 in \cite{mayet06} shows that for some fields and some finite
dimensions, $E_n$ fails.  Since the other mentioned equations,
including the modular law, hold at least in every finite-dimensional
$\mathcal{HL}$, $E_n$ is independent from them.
\end{proof}

Mayet has also generalized the direct-sum decomposition method used in
the proof of the $n$OA laws (Theorem~\ref{th:noa}) to result in an
additional series of equations.\ \cite{mayet06} However, so far it
is unknown whether any of them are not consequences of some $n$OA law,
and additional investigation is needed.

\section{Conclusion}
\label{sec:d}

In the previous sections we reviewed the results obtained in
the field of Hilbert space equations. The idea is to use
classes of Hilbert lattice equations for an alternative representation
of Hilbert lattices and Hilbert spaces of arbitrary quan\-tum
systems that might enable a direct introduction of the
states of the systems into quantum computers. More specifically,
we were looking for a way to feed a quantum computer with
algebraic equations of $n$th order underlying an infinite
dimensional Hilbert space description of quantum systems.

Quantum computation, at its present stage, manipulates quantum bits
$\{|0\rangle,|1\rangle\}$ by means of quantum logic gates (unitary
operators), following algorithms for computing particular problems.  In
the Introduction, we presented one such gate, the Mach-Zehnder
interferometer.  Quantum gates are integrated into quantum circuits that
represent quantum computers.  The quantum algebra of such circuits is
the algebra of the finite-dimensional Hilbert space which describes the
states of qubits manipulated by quantum gates. We call it
qubit algebra.

A general quantum algebra underlying a description of general
quantum systems such as atoms and molecules is much more
complicated than qubit algebra because it includes continuous
observables, which require infinite-dimensional Hilbert space.
The algebra is called the Hilbert lattice, and we presented it
in Section \ref{sec:hl}. Its connection to measurement and
the standard Hilbert space formalism is given by the {\em
Gleason theorem}.\ \cite{gleason}

Let us take ${\mathcal C}({\mathcal H})$ from Theorem \ref{th:repr},
i.e., the set of closed subspaces of a Hilbert space ${\mathcal H}$.
It is ortho-isomorphic to a Hilbert lattice $\mathcal{HL}$ and
a state in the Hilbert lattice, given by Definition \ref{def:state},
is connected to a state in the Hilbert space as follows:
\begin{eqnarray}
m_\psi(M)=\langle\psi|\hat P_M|\psi\rangle, \quad M\in {\mathcal C}({\mathcal H}),
\label{eq:gleason}
\end{eqnarray}
where $\hat P_M$ denotes the orthoprojector on $\mathcal H$ onto a
closed subspace $M$ that corresponds to a measurable observable,
$|\psi\rangle$ is a unit vector in $\mathcal H$, and
$\langle\psi|\hat P_M|\psi\rangle$ the inner product in $\mathcal H$.
In the quantum physics and quantum computing terminology,
$|\psi\rangle$ is called a state and $\langle\psi|\hat P_M|\psi\rangle$
the amplitude of the probability that the outcome of a measurement
of $\hat P_M$ is in a corresponding Borel set (subset of real numbers),
but we will keep to the Hilbert lattice terminology, in which the
Gleason theorem reads:\ \cite{dvurecenskij-book}

\begin{theorem}\label{th:gleason} {\bf Gleason's Theorem} For any
state $m$ on a Hilbert lattice $\mathcal HL$ of a Hilbert space $\mathcal H$,
${\rm dim}{\mathcal H}\ge 3$, there exists an orthonormal system of
vectors $\{\psi_i\}$ and a system of positive numbers $\{\lambda_i\}$
such that $\sum_i\lambda_i=1$, and
\begin{eqnarray}
m(M)=\sum_i\lambda_im_{\psi_i}(M), \quad M\in {\mathcal HL}.
\label{eq:gleason-t}
\end{eqnarray}
\end{theorem}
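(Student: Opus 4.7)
The plan is to translate the statement into a classical result about \emph{frame functions} on the unit sphere of $\mathcal H$ and then invoke the spectral theorem. A state $m$ on $\mathcal{HL}$ induces, via restriction to the atoms (one-dimensional subspaces), a function $f(u) = m(\mathrm{span}(u))$ on unit vectors $u\in\mathcal H$, satisfying $f(\lambda u) = f(u)$ for unimodular $\lambda$ and $\sum_i f(u_i) = 1$ for every orthonormal basis $\{u_i\}$, by $\sigma$-additivity and the fact that $m(\mathcal H) = m(1) = 1$. The target is to show $f(u) = \langle u, A u\rangle$ for a positive trace-class operator $A$ of trace $1$; diagonalizing $A$ via the spectral theorem then yields the orthonormal system $\{\psi_i\}$ and weights $\{\lambda_i\}$ required by (\ref{eq:gleason-t}), since then $m(M) = \mathrm{tr}(A\hat P_M) = \sum_i \lambda_i \langle \psi_i, \hat P_M \psi_i\rangle = \sum_i \lambda_i m_{\psi_i}(M)$.

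First I would reduce to a purely finite-dimensional real problem. If $\mathcal H$ is complex or quaternionic, I view it as a real Hilbert space of twice or four times the dimension, equipped with the real part of the original inner product; an orthonormal pair in the original remains orthonormal in the realification, so $f$ remains a frame function. Since every three-dimensional real subspace $V\subseteq\mathcal H$ inherits its own frame function $f|_{V\cap S}$, it suffices to produce, for each such $V$, a positive symmetric operator $A_V$ on $V$ representing $f|_V$ as $\langle u, A_V u\rangle$. Compatibility of $A_V$ under inclusion of subspaces, together with $\sigma$-additivity of $m$, then assembles a unique positive self-adjoint operator $A$ on $\mathcal H$ whose trace equals $m(1)=1$, and hence trace-class.

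The hard part, and the technical core of Gleason's theorem, is showing that every nonnegative frame function on the unit sphere $S^2\subset\mathbb R^3$ is continuous; this is where the hypothesis $\dim\mathcal H \ge 3$ is essential, because in dimension two any function on antipodal pairs with $f(u)+f(u^\perp)=1$ is a frame function and pathological examples abound. The argument proceeds in stages: one first shows $f$ is bounded; then one establishes a \emph{north-pole lemma} that if two unit vectors $u,v$ are close, a carefully chosen chain of orthonormal triples interpolates between them and bounds $|f(u)-f(v)|$ in terms of oscillations of $f$ along great circles; an averaging/rotation argument finally pins continuity. I expect this geometric step on $S^2$ to be the main obstacle, and I would follow Gleason's original construction rather than try to reinvent it.

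Once continuity is in hand, the remaining work is comparatively routine. A continuous frame function on $S^2$ that is invariant under $u\mapsto -u$ and sums to a constant over orthonormal triples must, by Fourier analysis on $SO(3)$ (or by the elementary identity $f(u)+f(v)=f(u')+f(v')$ whenever the orthonormal pairs span the same two-plane), be the restriction of a real quadratic form, yielding the required symmetric $A_V$. Patching across three-dimensional subspaces produces the global $A$; the bounds $0\le f\le 1$ give $A\ge 0$ and $m(1)=1$ gives $\mathrm{tr}(A)=1$. The spectral theorem for positive trace-class operators furnishes an orthonormal eigenbasis $\{\psi_i\}$ with eigenvalues $\lambda_i\ge 0$ and $\sum_i\lambda_i=1$, completing the proof of Theorem~\ref{th:gleason}.
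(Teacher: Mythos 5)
The paper does not prove Theorem~\ref{th:gleason} at all: it is quoted in the Conclusion as a known result, with the proof delegated to the cited monograph of Dvure\v{c}enskij \cite{dvurecenskij-book}. So there is no internal argument to compare yours against; the only question is whether your outline would stand on its own. It follows the classical Gleason strategy (pass to the frame function $f(u)=m(\mathrm{span}(u))$, reduce to $S^2$, prove regularity, represent $f$ by a quadratic form, patch to a positive trace-class operator, diagonalize), and at that level the plan is the right one. But it is a plan, not a proof: the entire analytic core --- boundedness of non-negative frame functions, the north-pole/oscillation argument, and the regularity theorem on $S^2$ --- is explicitly deferred to ``Gleason's original construction,'' which is precisely the part that makes the theorem nontrivial.

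Beyond that, two steps as written are genuinely gapped. First, the realification reduction: you assert that a frame function on a complex or quaternionic $\mathcal H$ ``remains a frame function'' on the underlying real Hilbert space. This does not follow. The realified space has vastly more real-orthonormal bases than those obtained by realifying complex-orthonormal bases (e.g.\ $\{u,iu\}$ is real-orthonormal but not the realification of any complex-orthonormal set), so the constant-sum property is only known a priori on a thin subfamily of bases; it becomes true only \emph{after} one knows $f$ is a Hermitian quadratic form, which is what you are trying to prove. Gleason's own reduction instead works on completely real subspaces and then extends the resulting family of real quadratic forms to a sesquilinear form; that extension argument is missing. Second, you invoke ``$\sigma$-additivity'' to get $\sum_i f(u_i)=1$ over an infinite orthonormal basis and to conclude $A$ is trace-class, but Definition~\ref{def:state} of the paper only postulates \emph{finite} additivity. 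Purely finitely additive states on ${\mathcal C}({\mathcal H})$ exist in infinite dimensions (e.g.\ states vanishing on every finite-dimensional subspace) and are not of the form (\ref{eq:gleason-t}); so countable additivity is a genuine extra hypothesis that must be stated, not a consequence of the paper's definition. Neither gap is fatal to the strategy --- both are repaired in the standard references --- but as submitted the argument does not close.
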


Now, subspace $M$ and projector $\hat P_M$ to $M$ correspond
to a measurable observable $\mathcal O$ and a Borel set $E$ whose
values we obtain by a measurement. In other words, $M$ is
determined by $\mathcal O$ and $E$, and we can write $P_M=P^{\mathcal O}_E$.
From $P^{{\mathcal O}_1}_E=P^{{\mathcal O}_2}_E$ it follows
${\mathcal O}_1={\mathcal O}_1$, and from
$P^{\mathcal O}_{E_1}=P^{\mathcal O}_{E_2}$
it follows $E_1=E_1$, so that subspaces from ${\mathcal C}({\mathcal H})$
directly correspond to equivalence classes $|{\mathcal O},E|$ of
$({\mathcal O},E)$ couples. These equivalence classes are elements of
the Hilbert lattice $\mathcal HL$ which is isomorphic to
${\mathcal C}({\mathcal H})$.\ \cite{maczin,ptak-pulm}

However, the axiomatic definition of $\mathcal HL$
by means of universal and existential quantifiers and infinite
dimensionality does not allow us to feed it to a quantum computer.
There\-fore, an attempt has been made to develop an equational
formulation of the Hilbert lattice. The idea is to have infinite
classes of lattice equations that we could use instead.
In applications, infinite classes could then be ``truncated''
to provide us with finite classes of required length. The obtained
classes would in turn contribute to the theory of Hilbert space
subspaces, which so far is poorly developed.

We have considered three ways of reconstructing Hilbert space
starting with an ortholattice. One is geometrical, and
it is presented in Section \ref{sec:goe}. The other is a
probabilistic one (of states, probability measures), and it is
presented in Sections \ref{sec:ge} and \ref{sec:mge}.
They both result in lattice equations that hold in any Hilbert
lattice. The third way is generated by means of vectors
one can define in the Hilbert space isomorphic to the
Hilbert lattice. It includes the fields (real, complex and
the skew field of quaternions) over which the Hilbert space
containing infinite orthonormal sequence of vectors
can only---according to Sol{\`e}r's Theorem \ref{th:sol}---be
defined. This way results in lattice equations that do not
hold in any Hilbert lattice but only those ones that
correspond to a complete description of quantum systems,
those that are difined over the relevant field.

There are four classes of such equations known so far:
the generalized ortho\-arguesian class, the Godowski class,
the Mayet-Godowski class and Mayet's E-class.  Gen\-er\-alized
ortho\-arguesian lattice equations are $n$-variable
equations obtained through extension of 4- and 6- variable
ortho\-argu\-esian equations determined by the pro\-jec\-tive
geometry defined on an ortholattice as presented in
Section \ref{sec:goe}.  Godowski equations and Mayet-Godowski
equations are determined by the states (probability measures)
defined on an ortholattice. They are presented in Sections
\ref{sec:ge} and \ref{sec:mge}. Mayet's E-equations are
determined by a mapping from elements of a Hilbert lattice
to vectors of a Hilbert space defined over one of three
possible fields. They are presented in Section \ref{sec:mayet}.

The Godowski class of equations is included in the Mayet-Godowski
class, but unlike the elegant formalization of the former, so far no
simple characterization of the equational basis for the latter has been
found.  We have shown a simple recursive way to generate all
Mayet-Godowski equations (MGEs and their condensed state equation
representations), but identifying from among them those that are
independent still requires an extensive computational search.
Mayet's specific E-equations presented above can also be given an
elegant recursive formalization. However, as shown by Mayet
\cite{mayet06}, there are other E-equations whose generation
principles are still an open problem.

On the other hand, the techique of obtaining our
equations also suggests a possible third way of generating new
classes of equations. We have seen in Sections \ref{sec:ge} and
\ref{sec:mge} that Greechie diagrams of orthomodular lattices
help us to arrive at new lattice equations. Such finite lattices
have one advantage over the infinite lattices involved in the
definition of Hilbert lattices in Section \ref{sec:hl}.
They enable verification of expressions containing quantifiers,
and we have written programs that can verify such conditions of
Definition  \ref{def:hl}, e.g., superpositions (a) and (b), on any
lattice. This technique could eventually take us to an
exhaustive generation of all classes of lattice equations
that hold in a Hilbert lattice, i.e., to a lattice equation
definition of the Hilbert lattice.  While this open problem
may eventually prove impossible, it still may be possible to
replace some of the quantified conditions with weaker ones,
making up the difference with new lattice equations.
An example of how a quantified condition may be expressed
with an equation is provided by the OML law, which can be
equivalently stated as: \cite[p.~132]{maeda}
\begin{eqnarray}
a \le b  & \Rightarrow &  (\exists c) (a \le c'\ \&\ b = a \cup c).
\end{eqnarray}

Open problems that emerge from the presented research are:
\begin{itemize}
\item{Find any other infinite class of equations, especially the
one that would correspond to the Superposition principle of
Definition \ref{def:hl}.}
\item{Find all classes of equations that hold in a Hilbert
lattice and prove that they are equivalent to the Hilbert lattice
itself.}
\item{Find a geometric interpretation of $n$OA. (A geometric
interpretation of 4OA and 3OA can be inferred from the Arguesian
law, but $n$-dimensional Arguesian law apparently has not been
given an interpretation in the lit\-er\-at\-ure.)}
\item{Find a ``simple characterization'' of finite OMLs that
violate the $n$OA laws (analogous to the wagon-wheel series
for $n$-Go or the lack of a strong set of states for MGEs).}
\item{Prove Theorem~\ref{th:stronger} for any $n$.}
\item{Prove the orthoarguesian identity conjecture (see the
discussion following Theorem~\ref{th:oa-equiv}).}
\item{Prove the conjecture mentioned below Theorem~\ref{th:oa-not-3go}.}
\item{Find a correspondence between Hilbert lattice conditions
and qubit states.}
\item{Determine the complete description of lattices for simple
quantum systems.\footnote{Hultgren and Shimony gave a partial
structure of a lattice for spin-1 system corresponding to a
Stern-Gerlach measurement by means of a magnetic
measurement.\ \cite{shimony} Swift and Wright then showed that
we have to apply both magnetic and electric field to a spin-1
system in a generalized Stern-Gerlach measurement, if we want
to get and measure all lattice elements (propositions).\
\cite{anti-shimony} However, they did not completely describe
the lattice given by Hultgren and Shimony either. See also the
harmonic oscillator example Given by Holland.\ \cite{holl70}}}
\item{Find an equivalent to $n$GO dynamic programing (see
Section \ref{sec:mge}) for $n$OA.}
\item{Find out whether the lattice equations of the $n$th
order can inherently speed up the computation of, say
molecular states, assuming that they would simulate the states
up to a desired precision depending on a chosen $n$.}
\item{Determine if the set of all equations related to
strong sets of $\mathcal{RH}$ states can be given
a simple, universal structure analogous to the
condensed state equations that describe all MGEs.}
\item{Determine whether the new
equations Mayet obtained by direct-sum decompositions \cite{mayet06}
are independent from the $n$OA laws (Theorem~\ref{th:noa}).}
\item{Find specific families of MGEs based on lattice patterns,
analogous to the family $E_n$ and the sequence of
finite OMLs it is based on.}
\item{Answer several open questions asked by Mayet
regarding the independence of some of his equations
related to strong sets of $\mathcal{RH}$ states.\ \cite{mayet06}}
\end{itemize}


\begin{thebibliography}{10}

\bibitem{abrams-lloyd-99}
Daniel~S. Abrams and Seth Lloyd.
\newblock Quantum algorithm providing exponential speed increase for finding
  eigenvalues and eigenvectors.
\newblock {\em {\it Phys. Rev. Lett.}}, {\bf 83}:5162--5165, 1999.

\bibitem{beltr-cass-book}
Enrico~G. Beltrametti and Gianni Cassinelli.
\newblock {\em The Logic of Quantum Mechanics}.
\newblock Addison-Wesley, 1981.

\bibitem{beran}
Ladislav Beran.
\newblock {\em Orthomodular Lattices; {A}lgebraic Approach}.
\newblock D. Reidel, Dordrecht, 1985.

\bibitem{berman-ssqc-book-05-ql}
Gennady~P. Berman, Dmitry~I. Kamenev, and Vladimir~I. Tsifrinovich.
\newblock {\em Perturbation Theory for Solid-State Quantum Computation with
  Many Quantum Bits}.
\newblock Rinton-Press, Paramus, NJ, 2005.
\newblock See Sections 3.1.1 and 4.1.3.

\bibitem{schr-simul}
Bruce~B. Boghosian and Washington {Taylor IV}.
\newblock Simulating quantum mechanics on a quantum computer.
\newblock {\em {\it Physica D}}, {\bf 120}:30--42, 1998.

\bibitem{bouw-ek-zeil-book-00}
Dik Bouwmeester, Artur Ekert, and Anton Zeilinger, editors.
\newblock {\em The Physics of Quantum Information}.
\newblock Springer, Berlin, 2000.

\bibitem{browne05}
Daniel~E. Browne and Terry Rudolph.
\newblock Resource-efficient linear optical quantum computation.
\newblock {\em {\it Phys. Rev. Lett.}}, {\bf 95}:010501--1--4, 2005.

\bibitem{chiara-qcomp04}
Gianpiero Cattaneo, Maria~Luisa {Dalla Chiara}, Roberto Giuntini, and Roberto
  Leporini.
\newblock An unsharp logic from quantum computation.
\newblock {\em {\it Int. J. Theor. Phys.}}, {\bf 43}:1803--1817, 2004.

\bibitem{dvurecenskij-book}
Anatolij Dvure{\v c}enskij.
\newblock {\em Gleason's Theorem and Its Applications}.
\newblock Kluwer, Dordrecht, 1993.

\bibitem{fio-prl04}
Marco Fiorentino and Franco N.~C. Wong.
\newblock Deterministic controlled-{NOT} gate for single-photon two-qubit
  quantum logic.
\newblock {\em {\it Phys. Rev. Lett}}, {\bf 93}:070502--1--4, 2004.

\bibitem{franson-02}
James~D. Franson, M.~M. Donegan, M.~J. Fitch, B.~C. Jacobs, and T.~B. Pittman.
\newblock High-fidelity quantum logic operations using linear optical elements.
\newblock {\em {\it Phys. Rev. Lett.}}, {\bf 89}:137901--1--4, 2002.

\bibitem{gleason}
A.~M. Gleason.
\newblock Measures on the closed subspaces of a {H}ilbert space.
\newblock {\em {\it J. Math. Mechanics}}, {\bf 6}:885--893, 1957.

\bibitem{godow}
Radoslaw Godowski.
\newblock Varieties of orthomodular lattices with a strongly full set of
  states.
\newblock {\em {\it Demonstratio Math.}}, {\bf 14}:725--733, 1981.

\bibitem{go-gr}
Radoslaw Godowski and Richard Greechie.
\newblock Some equations related to the states on orthomodular lattices.
\newblock {\em {\it Demonstratio Math.}}, {\bf 17}:241--250, 1984.

\bibitem{gramss}
Tino Gramss.
\newblock Solving the {S}chr{\"o}dinger equation for the {F}eynman quantum
  computer.
\newblock {\em {\it Int. J. Theor. Phys.}}, {\bf 37}:1423--1439, 1998.

\bibitem{gr-non-s}
Richard~J. Greechie.
\newblock A non-standard quantum logic with a strong set of states.
\newblock In Enrico~G. Beltrametti and B.~C. van Fraassen, editors, {\em
  Current issues in quantum logic, ({P}roceedings of the Workshop on Quantum
  Logic held in {E}rice, {S}icily, {D}ecember 2--9, 1979, at {E}ttore
  {M}ajorana {C}entre for {S}cientific {C}ulture)}, pages 375--380. Plenum
  Press, New York, 1981.

\bibitem{gross}
Herbert Gross.
\newblock Hilbert lattices: {N}ew results and unsolved problems.
\newblock {\em {\it Found. Phys.}}, {\bf 20}:529--559, 1990.

\bibitem{gudder-03}
Stanley Gudder.
\newblock Quantum computational logic.
\newblock {\em {\it Int. J. Theor. Phys.}}, {\bf 42}:39--47, 2003.

\bibitem{holl70}
Samuel~S. {Holland, JR.}
\newblock The current interest in orthomodular lattices.
\newblock In J.~C. Abbot, editor, {\em Trends in Lattice Theory}, pages
  41--126. Van Nostrand Reinhold, New York, 1970.

\bibitem{holl95}
Samuel~S. {Holland, JR.}
\newblock Orthomodularity in infinite dimensions; a theorem of {M}.
  {S}ol{\`e}r.
\newblock {\em {\it Bull. Am. Math. Soc.}}, {\bf 32}:205--234, 1995.

\bibitem{shimony}
Bror~O. {Hultgren, III} and Abner Shimony.
\newblock The lattice of verifiable propositions of the spin-1 system.
\newblock {\em {\it J. Math. Phys.}}, {\bf 18}:381--394, 1977.

\bibitem{ivertsj}
P.-A. Ivert and T.~Sj{\"o}din.
\newblock On the impossibility of a finite propositional lattice for quantum
  mechanics.
\newblock {\em {\it Helv. Phys. Acta}}, {\bf 51}:635--636, 1978.

\bibitem{jauch}
Josef~M. Jauch.
\newblock {\em Foundations of Quantum Mechanics}.
\newblock Addison-Wesley, Reading, Massachusetts, 1968.

\bibitem{kalmb83}
Gudrun Kalmbach.
\newblock {\em Orthomodular Lattices}.
\newblock Academic Press, London, 1983.

\bibitem{kalmb86}
Gudrun Kalmbach.
\newblock {\em Measures and Hilbert Lattices}.
\newblock World Scientific, Singapore, 1986.

\bibitem{kalmb98}
Gudrun Kalmbach.
\newblock {\em Quantum Measures and Spaces}.
\newblock Kluwer, Dordrecht, 1998.

\bibitem{keller}
H.~A. Keller.
\newblock Ein {n}icht-{k}lassischer {H}ilbertscher {R}aum.
\newblock {\em {\it Math. Z.}}, {\bf 172}:41--49, 1980.

\bibitem{ludwig-book-1}
G{\"u}nther Ludwig.
\newblock {\em An axiomatic basis for quantum mechanics. Vol. 1, Derivation of
  Hilbert space structure}.
\newblock Springer, New York, 1985.

\bibitem{ludwig-book-2}
G{\"u}nther Ludwig.
\newblock {\em An axiomatic basis for quantum mechanics. Vol. 2, Quantum
  mechanics and macrosystems}.
\newblock Springer, New York, 1987.

\bibitem{mackey}
George~Whitelaw Mackey.
\newblock {\em The Mathematical Foundations of Quantum Mechanics}.
\newblock W. A. Benjamin, New York, 1963.

\bibitem{maclaren}
M.~Donald Mac{L}aren.
\newblock Atomic orthocomplemented lattices.
\newblock {\em {\it Pacif. J. Math.}}, {\bf 14}:597--612, 1964.

\bibitem{maczin}
Maciej~J. M{\c a}czy{\'n}ski.
\newblock Hilbert space formalism of quantum mechanics without the {H}ilbert
  space axiom.
\newblock {\em {\it Rep. Math. Phys.}}, {\bf 3}:209--219, 1972.

\bibitem{maeda}
Fumitomo Maeda and Sh{\^u}ichir{\^o} Maeda.
\newblock {\em Theory of Symmetric Lattices}.
\newblock Springer-{V}erlag, New York, 1970.

\bibitem{mayet85}
Ren{\'e} Mayet.
\newblock Varieties of orthomodular lattices related to states.
\newblock {\em {\it Algebra Universalis}}, {\bf 20}:368--396, 1985.

\bibitem{mayet86}
Ren{\'e} Mayet.
\newblock Equational bases for some varieties of orthomodular lattices relatied
  to states.
\newblock {\em {\it Algebra Universalis}}, {\bf 23}:167--195, 1986.

\bibitem{mayet98}
Ren{\'e} Mayet.
\newblock Some characterizations of underlying division ring of a {H}ilbert
  lattice of authomorphisms.
\newblock {\em {\it Int. J. Theor. Phys.}}, {\bf 37}:109--114, 1998.

\bibitem{mayet06}
Ren{\'e} Mayet.
\newblock Equations holding in {H}ilbert lattices.
\newblock {\em {\it Int. J. Theor. Phys.}}, {\bf 45}:1216--1246, 2006.

\bibitem{mayet06-hql2}
Ren{\'e} Mayet.
\newblock Ortholattice equations and {H}ilbert lattices.
\newblock In Kurt Engesser, Dov Gabbay, and Daniel Lehmann, editors, {\em
  Handbook of Quantum Logic and Quantum Structures}, volume~{\it Quantum
  Structures}, pages 525--554. Elsevier, Amsterdam, 2007.

\bibitem{bdm-ndm-mp-1}
Brendan~D. Mc{K}ay, Norman~D. Megill, and Mladen Pavi{\v c}i{\'c}.
\newblock Algorithms for {G}reechie diagrams.
\newblock {\em {\it Int. J. Theor. Phys.}}, {\bf 39}:2381--2406, 2000.

\bibitem{mpoa99}
Norman~D. Megill and Mladen Pavi{\v c}i{\'c}.
\newblock Equations, states, and lattices of infinite-dimensional {H}ilbert
  space.
\newblock {\em {\it Int. J. Theor. Phys.}}, {\bf 39}:2337--2379, 2000.

\bibitem{mp-gen-godowski06-arXiv}
Norman~D. Megill and Mladen Pavi{\v c}i{\'c}.
\newblock {M}ayet-{G}odowski {H}ilbert lattice equations.
\newblock {\em {\tt arXiv.org/quant-ph/0609192}}, 2006.

\bibitem{nielsen-chuang-ql482}
Michael~A. Nielsen and Isaac~L. Chuang.
\newblock {\em Quantum Computation and Quantum Information}.
\newblock Cambridge University Press, Cambridge, 2000.
\newblock See p.~482.

\bibitem{pavicic-book-05}
Mladen Pavi{\v c}i{\'c}.
\newblock {\em Quantum Computation and Quantum Communication: {T}heory and
  Experiments}.
\newblock Springer, New York, 2005.

\bibitem{mpcommp99}
Mladen Pavi{\v c}i{\'c} and Norman~D. Megill.
\newblock Non-orthomodular models for both standard quantum logic and standard
  classical logic: Repercussions for quantum computers.
\newblock {\em {\it Helv. Phys. Acta}}, {\bf 72}:189--210, 1999.

\bibitem{pm-ql-l-hql1}
Mladen Pavi{\v c}i{\'c} and Norman~D. Megill.
\newblock {{\it Is Quantum Logic a Logic?}}
\newblock In Kurt Engesser, Dov Gabbay, and Daniel Lehmann, editors, {\em
  Handbook of Quantum Logic and Quantum Structures}, volume~{\it Quantum
  Logic}, pages 23--47. Elsevier, Amsterdam, 2008.

\bibitem{piron-book}
Constantin Piron.
\newblock {\em Foundations of Quantum Physics}.
\newblock Benjamin, Reading, Mass., 1976.

\bibitem{pittenger-book}
Arthur~O. Pittenger.
\newblock {\em An Introduction to Quantum Computing Algorithms}, volume~19 of
  {\em Progress in Computer Science and Applied Logic}.
\newblock Birkh\"auser, Boston, 1999.

\bibitem{ptak-pulm}
Pavel Pt{\'a}k and Sylvia Pulmannov{\'a}.
\newblock {\em Orthomodular Structures as Quantum Logics}.
\newblock Kluwer, Dordrecht, 1991.

\bibitem{ralph05}
T.~C. Ralph, A.~J.~F. Hayes, and Alexei Gilchrist.
\newblock Loss-tolerant optical qubits.
\newblock {\em {\it Phys. Rev. Lett.}}, {\bf 95}:100501--1--4, 1905.

\bibitem{shapiro-prl}
E.~A. Shapiro, Michael Spanner, and Misha~Yu. Ivanov.
\newblock Quantum logic approach to wave packet control.
\newblock {\em {\it Phys. Rev. Lett.}}, {\bf 91}:237901--1--4, 2003.

\bibitem{shor}
P.~W. Shor.
\newblock Polynomial-time algorithms for prime factorization and discrete
  logarithms on a quantum computer.
\newblock {\em {\it {SIAM}} J. Comp.}, {\bf 26}:1484--1509, 1997.

\bibitem{soler}
Maria~Pia Sol{\`e}r.
\newblock Characterization of {H}ilbert spaces by orthomodular spaces.
\newblock {\em {\it Comm. Alg.}}, {\bf 23}:219--243, 1995.

\bibitem{summhammer}
Johann Summhammer.
\newblock Factoring and {F}ourier transformation with a {M}ach-{Z}ehnder
  interferometer.
\newblock {\em {\it Phys. Rev. A}}, {\bf 56}:4324--4326, 1997.

\bibitem{svozil-tkadlec}
Karl Svozil and Josef Tkadlec.
\newblock Greechie diagrams, nonexistence of measures and
  {K}ochen--{S}pecker-type constructions.
\newblock {\em {\it J. Math. Phys.}}, {\bf 37}:5380--5401, 1996.

\bibitem{anti-shimony}
Arthur~R. Swift and Ron Wright.
\newblock Generalized {S}tern-{G}erlach experiments and the observability of
  arbitrary spin operators.
\newblock {\em {\it J. Math. Phys.}}, {\bf 21}:77--82, 1980.

\bibitem{varad}
V.~S. Varadarajan.
\newblock {\em Geometry of Quantum Theory, Vols. 1 \&\ 2}.
\newblock John Wiley \& Sons, New-York, 1968,1970.

\bibitem{zalka-98}
Christof Zalka.
\newblock Simulating quantum systems on a quantum computer.
\newblock {\em {\it Proc. Roy. Soc. London A}}, {\bf 454}:313--322, 1998.

\bibitem{zurek-96}
Wojciech~Hubert Zurek and Raymond Laflamme.
\newblock Quantum logical operations on encoded qubits.
\newblock {\em {\it Phys. Rev. Lett.}}, {\bf 77}:4683--4686, 1996.

\end{thebibliography}
\end{document}